\setlist[description]{labelindent=\parindent,itemsep=.2em}
\DeclareMathOperator{\supp}{supp}
\DeclareMathOperator{\esupp}{e-supp}
\DeclareMathOperator{\stab}{stab}
\DeclareMathOperator{\GL}{GL}
\DeclareMathOperator{\SL}{SL}
\DeclareMathOperator{\ST}{ST}
\DeclareMathOperator{\poly}{poly}
\DeclareMathOperator{\Ss}{S}
\DeclareMathOperator{\Mat}{Mat}
\DeclareMathOperator{\Lie}{Lie}
\m@th\displaystyle{##}$\hfil}
\m@th\displaystyle{##}$\hfil}
\newtheorem{theorem}{Theorem}[section]
\newtheorem{corollary}[theorem]{Corollary}
\newtheorem{lemma}[theorem]{Lemma}
\newtheorem{proposition}[theorem]{Proposition}
\theoremstyle{definition}
\newtheorem{definition}[theorem]{Definition}
\newtheorem{remark}[theorem]{Remark}
\newtheorem{problem}[theorem]{Problem}
\newtheorem{algorithm}[theorem]{Algorithm}
\numberwithin{equation}{section}
\newcommand{\QQ}{\mathbb{Q}}
\newcommand{\CC}{\mathbb{C}}
\newcommand{\ZZ}{\mathbb{Z}}
\newcommand{\RR}{\mathbb{R}}
\newcommand{\NN}{\mathbb{Z}_{\geq 0}}
\begin{document}
\title{Polynomial time algorithms in invariant theory \texorpdfstring{\\}{} for torus actions}
\author{Peter B\"{u}rgisser}
\address{Institut f\"ur Mathematik, Technische Universit\"at Berlin}
\email{pbuerg@math.tu-berlin.de}
\author{M. Levent Do\u{g}an}
\address{Institut f\"ur Mathematik, Technische Universit\"at Berlin}
\email{dogan.mlevent@gmail.com}
\author{Visu Makam}
\address{Institute for Advanced Study, Princeton and School of Mathematics and Statistics, University of Melbourne}
\email{visu@umich.edu}
\author{Michael Walter}
\address{Korteweg-de Vries Institute for Mathematics, Institute for Theoretical Physics, and Institute for Logic, Language and Computation, University of Amsterdam}
\email{m.walter@uva.nl}
\author{Avi Wigderson}
\address{Institute for Advanced Study, Princeton}
\email{avi@ias.edu}
\begin{abstract}
An action of a group on a vector space partitions the latter into a set of orbits.
We consider three natural and useful algorithmic ``isomorphism'' or ``classification'' problems, namely, \emph{orbit equality}, \emph{orbit closure intersection}, and \emph{orbit closure containment}.
These capture and relate to a variety of problems within mathematics, physics and computer science, optimization and statistics.
These orbit problems extend the more basic null cone problem, whose algorithmic complexity has seen significant progress in recent years.

In this paper, we initiate a study of these problems by focusing on the actions of commutative groups (namely, tori).
We explain how this setting is motivated from questions in algebraic complexity, and is still rich enough to capture interesting combinatorial algorithmic problems.
While the structural theory of commutative actions is well understood, no general efficient algorithms were known for the aforementioned problems.
Our main results are polynomial time algorithms for all three problems.
We also show how to efficiently find separating invariants for orbits, and how to compute systems of generating rational invariants for these actions (in contrast, for polynomial invariants the latter is known to be hard).
Our techniques are based on a combination of fundamental results in invariant theory, linear programming, and algorithmic lattice theory.
\end{abstract}
\maketitle
\tableofcontents

\section{Introduction} \label{sec:intro}
Consider the following two problems, which on the face of it have nothing to do with each other:
\begin{enumerate}
\item Will the cue ball's trajectory on a billiards table ever end up in a pocket?
\item Given a bipartite graph $G$, and two functions $w$, $w'$ assigning weights to edges, is it the case that they assign the same weight to \emph{every} perfect matching~$M$ of~$G$?
\end{enumerate}
Both turn out to be orbit problems for torus actions, and exemplify the class of problems we study in this paper.

As our introduction is somewhat long, we break it up as follows.
We start with general background to algorithmic invariant theory in~\S\ref{subsec:alg inv th} and discuss general orbit problems in~\S\ref{subsec:orbit}.
In~\S\ref{subsec:intro-main-results} we define torus actions, discuss our main results, and explain their motivation
from the perspective of algebraic complexity.
In~\S\ref{subsec:applications}, we give examples of how these orbit problems for torus actions arise in and capture natural problems in physics and optimization.
In~\S\ref{subsec:organization}, we discuss the organization of the paper and logical structure of our results.

\subsection{Algorithms in invariant theory}\label{subsec:alg inv th}

Computational invariant theory is a subject whose origins can be traced back to
``masters of computation'' in the 19th century such as Boole, Gordan, Sylvester and Cayley among others.
The second half of the 20th century injected a major impetus to both structural and  computational aspects of these mathematical areas.
On the one hand, the advent of digital computers allowed mathematicians means to study much larger such algebraic structures than could be accessed by hand.
On the other, the parallel development of computational complexity provided a mathematical theory with precise computational models for algorithms and
their efficiency analysis.
This combination has injected many new ideas and questions into invariant theory and related fields, leading to the development of algorithmic techniques
such as Gr\"obner bases and many others, which supported faster and faster algorithms.
Texts on this large body of work can be found, for example, in the books~\cite{DK,Sturmfels,Cox-Little-Oshea}.
While the computational complexity put focus on polynomial time as the staple of efficiency, it also provided means to argue
the likely impossibility of such fast algorithms for certain tasks, through the Cook-Karp-Levin theory~\cite{cook:71,karp:72,levin:73}
of NP-completeness (for Boolean computation) and Valiant's theory of VNP-completeness.

More recently, a further surge in collaboration between algebraists and complexity theorists on these algorithmic questions in invariant theory
and representation theory arose from two (related) sources starting in the turn of this century.
Both imply that these very algorithmic questions in algebra are deeply entwined with the core complexity questions of P vs.~NP  and VP vs.~VNP.
Not surprisingly, new enriching connections between these two research directions are newly found as they develop, providing an exciting collaboration.

The first source is Mulmuley and Sohoni's Geometric Complexity Theory (GCT)~\cite{mulmuley2001geometric}, which highlights
the inherent symmetries of complete problems of these complexity classes, and through these suggests concrete invariant theoretic and
representation theoretic attacks on the questions above.
This has lead to many new questions, techniques, and much faster algorithms (see, for example, \cite{GCTV,FS,BCMW:17,MW19}).

The second source is the work of Impagliazzo and Kabanets~\cite{KI04}, using Valiant's completeness theory for VP and VNP to again attack these major
complexity problems directly through the development of efficient deterministic algorithms for the basic PIT (Polynomial Identity Testing) problem.
This problem, which (again, thanks to Valiant's completeness) has natural symmetries, is very similar to basic invariant theory problems.
Major progress was recently made on resolving such related algorithmic problems, starting with~\cite{Gur04,GGOW16,IQS,IQS2,DM-poly}.
Many others continue to follow, see, for example, \cite{DM-oc,AZGLOW,GGOW:20,BGOWW,BFGOWW,BLNW:20,BFGOWW2}.
We refer to \cite{BFGOWW2} for a recent description of the state-of-art.

\subsection{Orbit problems}\label{subsec:orbit}
We now briefly describe the basic setting and problems of interest, postponing some of the technical details to later sections for the sake of brevity.
A group homomorphism $\rho\colon G \rightarrow \GL(V)$, where $V$ is a vector space (always complex and finite-dimensional) is called a representation of~$G$.
One can think of this as a (linear) action of~$G$ on~$V$, i.e., a map $G \times V \rightarrow V$ where $(g,v) \mapsto \rho(g) v$ satisfies the usual axioms of a group action.
For us, groups will always be algebraic and representations rational, that is, morphisms of algebraic groups.
We will denote $\rho(g) v$ by~$gv$ or~$g \cdot v$.

For $v \in V$, we define its {\em orbit} $O_v := \{gv \ |\ g \in G\}$ (denoted $O_{G,v}$ if the group is not clear from context) to be the
subset of points that can be reached from $v$ by applying a group element.
We denote by $\overline{O_v}$ the topological closure of~$O_v$.
These notions are extremely basic and in many concrete instances very familiar.
One simple example is the action of $\GL_n \times \GL_n$ on $n \times n$ matrices by left and right multiplication:
clearly, the orbit of a matrix~$A$ consists of the matrices having the same rank as~$A$;
moreover, the orbit closure of~$A$ is the set of matrices whose rank is at most the rank of $A$.
Another example is the conjugation action of $\GL_n$ on $n \times n$ matrices, where the orbits are characterized by Jordan normal forms.%
\footnote{The orbit closures of two matrices intersect if and only if the matrices have the same eigenvalues (counted with multiplicity).}

Understanding the space of orbits of a given group action is perhaps the most basic task of invariant theory. The following three basic algorithmic problems will be the focus of this paper.

\begin{problem}\label{prob:main}
Let $\rho \colon G \rightarrow \GL(V)$ be a representation of a group $G$. Given $v,w \in V$:
\begin{enumerate}
\item \textbf{Orbit equality:} Decide if $O_v = O_w$;
\item\label{prob:oci} \textbf{Orbit closure intersection:} Decide if $\overline{O_v} \cap \overline{O_w} \ne \emptyset$;
\item \textbf{Orbit closure containment:} Decide if $w \in \overline{O_v}$.\footnote{The special case of $w = 0$ is called the null cone membership problem. In fact, many of the recent algorithmic advances mentioned above efficiently solve the null cone problem for specific group actions, see \cite{BFGOWW2} and references therein. The motivation of this paper is to extend that understanding to these more general problems.}
\end{enumerate}
\end{problem}

As we will discuss the computational complexity of algorithms for these problems, one needs to specify how inputs are given and how we measure their size.
We will discuss this, but for now it suffices to think of $n=\dim(V)$, the degree of~$\rho$ (assuming it is a polynomial function), and the bit-length of the input vectors $v,w$ as the key size parameters.

The aforementioned problems capture and are related to a natural class of ``isomorphism'' or ``classification'' problems across many domains in mathematics, physics and computer science.
Examples include the graph isomorphism problem~\cite{Derksen-graph}, non-commutative rational identity testing~\cite{GGOW16,IQS2}, equivalence problems on quiver representations~\cite{DM-arbchar,DM-si}, matrix and tensor scaling~\cite{BGOWW,BFGOWW}, classification of quantum states~\cite{bennett1993teleporting} and module isomorphism problems~\cite{BL08}.

To briefly hint at the role of invariant theory, let us take a closer look at problem~(\ref{prob:oci}), that is, the problem of orbit closure intersection.
We denote by $\CC[V]$ the ring of polynomial functions on $V$.
A polynomial function~$f$ on~$V$ is called \emph{invariant} if it is constant along orbits, i.e., $f(gv) = f(v)$ for all~$g \in G$ and~$v \in V$.
The collection of all invariant polynomials forms a subring $\CC[V]^G$, called the \emph{invariant ring}.
Since polynomials are continuous, invariant polynomials are constant along orbit closures.
In particular, two points $v$ and $w$ are indistinguishable by invariant polynomials when their orbit closures intersect.
Amazingly, the converse is also true for a large class of group actions thanks to a result due to Mumford: if the orbit closures of~$v$ and~$w$ do not intersect, then they can always be distinguished by an invariant polynomial.
See Theorem~\ref{thm:Mumford} for a precise statement.

Mumford's theorem suggests an approach to orbit closure intersection -- test if $f(v) = f(w)$ for all invariant polynomials $f$.
For this strategy to be effective, one needs a computational handle on invariant polynomials.
Naively there are infinitely many polynomials, but a foundational result of Hilbert helps tackle this issue.
A \emph{system of generating polynomial invariants} is a collection of invariant polynomials~$\{f_1,\dots,f_r\}$ such that any other invariant polynomial can be written as a polynomial in the~$f_i$'s.
In particular, to test for orbit closure intersection it suffices to test whether each of the~$f_i$ take the same value on both points.
Hilbert showed the existence of a finite system of generating polynomial invariants and also gave an algorithm to produce them~\cite{Hilb1}.
Since then, many improvements on the complexity of such algorithms were developed, but even today this task is, in general, infeasible.
One basic obstacle is the very description of such a system of generating invariants, coming both from the size of this set and the degree of each polynomial in it.

Nearly a century later, a (singly) exponential bound (in~$n$) on the degrees of a system of generating polynomial invariants was achieved for a very general class of group actions~\cite{Der01}, which is unfortunately the best possible in this generality, see~\cite{DM-exp}.
A singly exponential bound is necessary to capture a polynomial with a poly-sized (in $n$) arithmetic circuit, but is by no means sufficient.%
\footnote{For example, the permanent of an $n\times n$ matrix, which has degree~$n$, is believed to require exponential circuit size.
This is essentially the content of Valiant's proof that the permanent is complete for the class VNP, combined with the hypothesis that VNP~$\ne$~VP.}
Another issue that one has to deal with is the number of invariants in a system of generating polynomial invariants, and it is often the case that there are exponentially many in any system.%
\footnote{This is already the case for the matrix scaling action discussed in Section~\ref{subsec:applications}.}
This led Mulmuley \cite{GCTV} to suggest the notion of a \emph{succinct circuit} as a way to capture a system of generating polynomial invariants with a view towards using them for orbit closure intersection.
Unfortunately, this approach does not seem to be computationally feasible either.
See~\cite{GIMOWW} where Mulmuley's conjecture~\cite[Conjecture~5.3]{GCTV} on the existence of succinct circuits was disproved under natural complexity assumptions.
What is perhaps most surprising is that this already happens for a \emph{commutative} group action, namely when~$G$ is a torus.
Further, an example of a group action was given where any system of generating polynomial invariants must contain a VNP-hard polynomial.

The negative result above seem to suggest that the algorithmic tasks at hand are infeasible, even for torus actions, i.e., groups of the form $(\CC^\times)^d$.
The main results of our paper show the opposite: \emph{all of them are efficiently solvable for torus actions!}

The main novelty on our approach is using {\em rational invariants} instead of polynomial invariants.
A rational invariant is a quotient of polynomials that is invariant, see Section~\ref{subsec:intro-main-results} for a precise definition.
This is a bit unexpected since Mumford's theorem simply does not extend to rational invariants: it is easy to construct examples where two points whose orbit closures intersect are distinguished by a rational invariant.
Yet, for representations of tori, we show that (a certain special collection of) rational invariants can be used (in a delicate way) to capture not just orbit closure intersection, but orbit closure containment and orbit equality as well.
Moreover, we show that rational invariants are computationally easy in this case, in stark contrast with the aforementioned hardness results for polynomial invariants~\cite{GIMOWW}.

Inspired by the connections to the P vs.~NP problem, the GCT program makes several predictions in invariant theory.
The setting in which most of the predictions and conjectures are formulated is the setting of rational representations of connected reductive groups
(which we will define later). Here, we want to point out that among connected reductive groups, the class of commutative groups happen to be precisely tori.
Thus, our main results should be viewed as conclusively verifying several predictions of GCT in the commutative case.
Moreover, the barrier result on the computational efficiency of polynomial invariants \cite{GIMOWW} along with our results on rational invariants suggest
that a more thorough investigation of rational invariants is needed in the case where the acting group is non-commutative, e.g., $\SL_n$.



\subsection{Torus actions and main results} \label{subsec:intro-main-results}
We now discuss the main contributions of our paper in more detail and precision.
Our results concern torus actions, so we specialize the discussion of the preceding section and consider a $d$-dimensional
complex torus~$T = (\CC^\times)^d$ as the acting group~$G$.
The group law is just pointwise multiplication, i.e.,
$(t_1,\dots,t_d) \cdot (s_1,\dots,s_d) = (t_1s_1,\dots,t_ds_d)$.

Any linear action of a torus can be described by an integer matrix $M \in \Mat_{d,n}(\ZZ)$ called the \emph{weight matrix}
(where $\Mat_{d,n}(\ZZ)$ denotes the space of $d \times n$ integer matrices).
The representation~$\rho_M \colon T \rightarrow \GL_n(\CC)$ corresponding to a weight matrix $M=(m_{ij})$ looks as follows:
\begin{align}\label{eq:toract}
  \rho_M(t) = \begin{psmallmatrix} \prod_{i=1}^d t_i^{m_{i1}} & & \\ & \ddots & \\ & & \prod_{i=1}^d t_i^{m_{in}} \end{psmallmatrix}
\end{align}
Thus any torus action can be viewed as a scaling action, where each coordinate is scaled separately according to a Laurent monomial.%
\footnote{
We can also describe this action as follows:
Identify $v\in\CC^n$ with a Laurent polynomial $\sum_{j=1}^n v_j \, z_1^{m_{1j}} \cdots z_d^{m_{dj}}$;
then the action of $T$ corresponds precisely to rescaling the variables $z_1,\dots,z_d$~\cite{gurvits2004combinatorial}.}
The weight matrix (up to reordering of columns) determines the representation. 
Despite the simple description of commutative torus actions, they as well capture fundamental notions, and the associated orbits can be quite complex.
One example is the matrix scaling problem, where the orbits capture weights of perfect matchings (see Problem~\ref{prblm:MatchW}).

In this paper, we will assume that a torus action is given by specifying the weight matrix.
Thus the bit-length of the entries of the weight matrix are included in the input size of the problems.
Moreover, we will allow complex number inputs.
These can be described up to finite precision by elements in the field of Gaussian rationals~$\QQ(i) = \{ s + i t \;|\; s,t \in \QQ \}$,
which will be encoded in the standard way; see, e.g., \cite{GCTV}.%
\footnote{In fact, our results hold more generally when the elements in $\QQ(i)$ are given in a `floating point' format, namely in the form $(s + it) 2^p$, with $s,t\in\QQ$ and $p\in\ZZ$ encoded in binary in the standard way. The same is true for input of the form~$2^p$, with $p\in\QQ$ encoded in binary. See Remark~\ref{rem:float}.\label{foot:float}}
The following theorem captures the main results of our paper.

\begin{theorem}\label{thm:main}
Given as input a weight matrix $M \in\Mat_{d,n}(\ZZ)$ as well as vectors $v,w \in \QQ(i)^n$, denote by $b$ the maximal bit-length of the entries of $v,w$, and $M$.
Then we can in time $\poly(d,n,b)$:
\begin{enumerate}
\item decide whether $O_v = O_w$;
\item decide whether $\overline{O_v} \cap \overline{O_w} \ne \emptyset$;
\item decide whether $w \in \overline{O_v}$.
\end{enumerate}
In other words, for rational representations of tori, there are polynomial time algorithms for orbit equality, orbit closure intersection, and orbit closure containment.
\end{theorem}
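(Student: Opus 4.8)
The plan is to reduce each of the three orbit problems to combinatorial statements about the weight lattice and then solve those by linear programming and lattice algorithms. Since $\rho_M(t)$ acts coordinatewise by Laurent monomials, the orbit of $v$ is governed by the support $S_v := \{ j : v_j \neq 0\}$ together with the ratios $v_j/v_k$ for $j,k \in S_v$. Concretely, $w \in O_v$ if and only if $S_v = S_w$ and there is $t \in (\CC^\times)^d$ with $\prod_i t_i^{m_{ij}} = w_j/v_j$ for all $j \in S_v$. Taking logarithms, this becomes a question of whether a certain target vector (the vector of $\log(w_j/v_j)$, well-defined modulo $2\pi i \mathbb{Z}$ in the imaginary part) lies in the image of the integer matrix $M_{S_v}$ (the submatrix on the columns in $S_v$) acting on $\CC^d$. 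The real part gives a $\QQ$-linear-algebra feasibility problem over the column space of $M_{S_v}$, which is solvable in polynomial time; the imaginary part gives a congruence condition modulo the lattice $2\pi\,\mathrm{Im}(M_{S_v}\ZZ^d) + $ the full space, which one checks via the Smith normal form / HNF of $M_{S_v}$ — all polynomial-time lattice computations on integer matrices of polynomial bit-length. So first I would set up this dictionary precisely (handling the zero coordinates and the branch-of-log subtlety carefully) and dispatch orbit equality this way.

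For orbit closure intersection and containment, the key tool is the structure of torus orbit closures: by the theory recalled in the paper (Mumford's theorem, and the combinatorial description of limits under one-parameter subgroups), passing to the closure corresponds to allowing coordinates in the support to be "sent to zero" along directions dictated by the weights. Precisely, a point in $\overline{O_v}$ is obtained by choosing a one-parameter subgroup direction $u \in \QQ^d$, letting the coordinates $j$ with $\langle (\text{column } m_j), u\rangle > 0$ (say) survive into the limit and the rest vanish, and then acting by the torus on the surviving coordinates. Thus $\overline{O_v}$ is a finite union of orbit-like pieces indexed by the faces of an arrangement of the weight vectors $m_1,\dots,m_n \in \ZZ^d$. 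Membership $w \in \overline{O_v}$ then asks: does there exist a direction $u$ whose associated "surviving set" equals $S_w$ (an LP feasibility question — the set of $u$ inducing a given sign pattern is a relatively open polyhedral cone, and we only need to know it is nonempty), and, for such $u$, does $w$ lie in the corresponding torus orbit of the truncated $v$ (back to the lattice-image problem above)? Enumerating the relevant faces naively is exponential, so the real content is to show that one can decide feasibility of "support exactly $S_w$ in some limit" by a single polynomial-size linear program in $u$, and that consistency of the surviving-coordinate ratios with $w$ reduces to one more LP-plus-lattice check. For orbit closure intersection one does the analogous thing symmetrically in $v$ and $w$: $\overline{O_v}\cap\overline{O_w}\neq\emptyset$ iff there is a common support set $S$ reachable as a limit from both, with compatible ratios; this again becomes: find $u$ with the right sign behavior against the columns of $M$ restricted appropriately, plus a lattice-feasibility condition, which is a polynomial-time combination of LP and HNF computations. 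Throughout, the separating-invariants and rational-invariants statements the paper advertises give the conceptual justification that these linear-algebraic conditions really are complete (no hidden obstructions), so I would invoke those where needed rather than re-derive them.

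The bit-complexity bookkeeping is routine but must be done: LP feasibility over rationals is polynomial-time (Khachiyan), the solutions have polynomially bounded bit-length, the Smith/Hermite normal forms of integer matrices and the associated image-membership tests are polynomial-time (Kannan–Bachem and successors), and the only genuinely transcendental quantities — logarithms of the Gaussian-rational ratios $w_j/v_j$ — enter only through their rational linear relations, which can be certified symbolically without ever computing the logs numerically (one tests membership of $w_j/v_j$-data in a multiplicatively defined lattice of $\QQ(i)^\times$, i.e., one solves $\prod t_i^{m_{ij}} = w_j/v_j$ as a system over the finitely generated multiplicative group generated by the relevant Gaussian rationals, again a lattice problem). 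The floating-point extension mentioned in footnote~\ref{foot:float} then follows by the same argument with an extra $\ZZ$-valued or $\QQ$-valued exponent coordinate adjoined.

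I expect the main obstacle to be the closure problems rather than orbit equality: specifically, proving that the condition "$w$ has exactly the right support in \emph{some} limit of $O_v$, with compatible ratios" can be captured by a polynomially bounded LP without enumerating the exponentially many faces of the weight arrangement, and simultaneously arguing completeness of that condition — i.e., that these limit points, acted on by the residual torus, exhaust $\overline{O_v}$. Making that exhaustion statement precise and efficiently checkable (the interplay between "which coordinates vanish," governed by an open cone of directions $u$, and "what the surviving coordinates look like," governed by a lattice-image condition) is the crux; once it is in place, everything else is an assembly of standard polynomial-time LP and integer-matrix subroutines.
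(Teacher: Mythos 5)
Your treatment of orbit equality and orbit closure containment is essentially the paper's, up to a change of viewpoint. For orbit equality you phrase the test as solvability of the transporter system $\prod_i t_i^{m_{ij}} = w_j/v_j$ for $j\in S$, whereas the paper checks that all invariant Laurent monomials on the common support agree; these are dual formulations (the image of $t\mapsto(\prod_i t_i^{m_{ij}})_j$ is exactly the set of vectors annihilated by $\ker(M_S)\cap\ZZ^S$), and both reduce to a Smith/Hermite normal form computation on $M_S$ together with a multiplicative test over $\QQ(i)^\times$ that you correctly identify must be done symbolically, not via logarithms. Your step 1--2--3 scheme for orbit closure containment (check $\supp(w)\subseteq\supp(v)$; LP for a direction $\nu$ with $m^{(j)}\cdot\nu=0$ on $\supp(w)$ and $m^{(k)}\cdot\nu>0$ on $\supp(v)\setminus\supp(w)$; then orbit equality for $v|_{\supp(w)}$ and $w$) is exactly the paper's Algorithm~\ref{algo:occ}, justified by Theorem~\ref{thm:hmtori} and Lemma~\ref{lem:tori-1psg}.

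The gap is in orbit closure intersection. You write that one should search for ``a common support set $S$ reachable as a limit from both, with compatible ratios,'' and you acknowledge that naively ranging over faces is exponential and that the ``real content'' is to package this as a single polynomial-size LP --- but you don't actually supply the idea that makes this possible. The issue is that in the OCI problem the target support $S$ is unknown, so your LP-plus-lattice recipe has an existential quantifier over exponentially many candidate supports sitting in front of it, and no indicated way to eliminate it. The paper eliminates it by a structural fact you do not use: every orbit closure contains a \emph{unique} closed orbit (Theorem~\ref{th:uniqueclosedorbit}), and for a torus this closed orbit has the explicit description $O_{\widetilde v}$ with $\widetilde v = v|_{\esupp(v)}$, where $\esupp(v)$ consists of the $j\in\supp(v)$ whose weight column $m^{(j)}$ lies in the lineality space of the Newton cone $C(v)$ (Corollary~\ref{cor:CO}). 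The set $\esupp(v)$ is computable in polynomial time by $|\!\supp(v)|$ separate LP feasibility tests (Lemma~\ref{lem-esupp-poly}). With this, $\overline{O_v}\cap\overline{O_w}\ne\emptyset$ if and only if $O_{\widetilde v}=O_{\widetilde w}$, so the common support $S$ is not searched for: it is the precomputed essential support, and OCI reduces cleanly to one call to your orbit-equality subroutine. Without identifying this canonical choice of $S$ (equivalently, the unique closed orbit in each closure), your OCI step does not yield a polynomial-time algorithm as written.
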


We note that the null cone membership problem mentioned earlier, namely Problems~\ref{prob:main}~(2)/(3) when the input vector~$w$ is the $0$ vector,
was known to have a polynomial time algorithm by a simple reduction to linear programming.%
\footnote{
Namely, a vector $v$ is in the null cone if and only if the convex hull of the weights corresponding to the nonzero coordinates of~$v$ does not contain the origin.}
There is no known way of doing the same for the orbit problems above, and indeed our theorem above takes an alternative route.

While one might hope for efficient algorithms for Problems~\ref{prob:main}~(1) and~(2) in much more general situations than for tori (for general reductive group actions), our efficient algorithm for orbit closure containment is in stark contrast to the known NP-hardness of the general orbit closure containment problem~\cite{BILPS:20}.
Our work points to a key difference: namely, for torus group actions, one can use one-parameter subgroups combined with linear programming techniques to reduce orbit closure containment to orbit equality, while this is impossible in this form for general actions.
See Section~\ref{sec:occ} for more details.

A common core underlying all our results is an efficient algorithm for computing invariant Laurent polynomials for torus actions.
The key idea is the following.
Invariant polynomials for torus actions can be quite complicated.
However, suppose that we restrict to vectors of some fixed support, i.e., ``nonzero pattern'' of the coordinates.
This restriction is without loss of generality, since two vectors can only be in the same orbit when their supports coincide.
However, it allows us to study a richer class of functions, namely \emph{Laurent polynomials} instead of ordinary polynomials.
Allowing for negative exponents makes an important difference:
while polynomial invariants naturally form a semigroup, invariant Laurent polynomials form a \emph{lattice}, isomorphic to the integral vectors in the kernel of the weight matrix.
Lattices are much better behaved than semigroups, for example they have small bases which can be found efficiently.

Before describing our results, let us define invariant Laurent polynomials more precisely.
For a representation $\rho\colon G \rightarrow \GL(V)$ of a group $G$, we have an action of~$G$ on the polynomial ring~$\CC[V]$
defined by $(g \cdot f) (v) := f(\rho(g)^{-1}v)$.
When $V=\CC^n$, we can identify $\CC[V]=\CC[x_1,\dots,x_n]$ with the polynomial ring in~$n$ variables.
Now consider the set of vectors with nonzero coordinates in~$S\subseteq[n]$:
\begin{align*}
  X_S = \{ v \in \CC^n \;|\; v_j \neq 0 \text{ if and only if } j \in S \}.
\end{align*}
The Laurent polynomials in the variables~$x_j$ for~$j \in S$ form the natural class of functions on~$X_S$ (since we can always divide by the nonzero coordinates).
Accordingly, we will denote their collection by~$\CC[X_S]$.\footnote{In the language of algebraic geometry, these are the ``regular'' functions on~$X_S$.}
Now, for a torus action of the form~\eqref{eq:toract}, the group $T$ acts on any monomial $x^c=x_1^{c_1} \cdots x_n^{c_n}$ by a simple rescaling.
Accordingly, we also have an action of $T$ on the algebra of Laurent polynomials~$\CC[X_S]$.
A Laurent polynomial $f$ is called \emph{invariant} if $g \cdot f = f$ for all $g \in G$.
Clearly, if $f$ is invariant, then so are all the Laurent monomials occuring in~$f$.
The collection of all invariant Laurent polynomials forms the subalgebra $\CC[X_S]^G$ of \emph{invariant Laurent polynomials}.
A collection of invariant Laurent polynomials~$f_1,\dots,f_r$ is called
a \emph{system of generating invariant Laurent polynomials in the variables $\{x_j\}_{j\in S}$}
if they generate~$\CC[X_S]^G$ as an algebra.
For torus actions, these can always be taken to be Laurent \emph{monomials}, in which case we call them a \emph{system of generating invariant Laurent monomials}.
We can then state our key result:

\begin{theorem}\label{th:laurent}
Let $M \in\Mat_{d,n}(\ZZ)$ define an $n$-dimensional representation of $T = (\CC^\times)^d$, and let~$S \subseteq [n]$.
Assume that the bit-lengths of the entries of $M$ are bounded by $b$.
Then, in $\poly(d,n,b)$-time, we can construct an arithmetic circuit with division $\mathcal{C}$ whose output gates compute a system of generating invariant Laurent monomials $f_1,\dots,f_r$ in the variables $\{x_j\}_{j\in S}$, where $r \leq n$.
\end{theorem}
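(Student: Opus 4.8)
The plan is to identify the invariant Laurent monomials on $X_S$ with the integer points of a lattice, reduce the construction of a generating system to computing a basis of this lattice, and then exhibit a small arithmetic circuit with division evaluating the corresponding monomials.

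First I would pin down the structure of $\CC[X_S]^T$. Write $M_S$ for the submatrix of $M$ consisting of the columns indexed by $S$, and associate to an exponent vector $c = (c_j)_{j\in S} \in \ZZ^S$ the Laurent monomial $x^c := \prod_{j\in S} x_j^{c_j} \in \CC[X_S]$. A direct computation from the definition $(t\cdot f)(v) = f(\rho_M(t)^{-1}v)$ together with \eqref{eq:toract} gives $t\cdot x^c = \bigl(\prod_{i=1}^d t_i^{-(M_S c)_i}\bigr)\, x^c$, so $x^c$ is invariant if and only if $M_S c = 0$. Since (as already observed above) every Laurent monomial occurring in an invariant Laurent polynomial is itself invariant, the algebra $\CC[X_S]^T$ is spanned by the monomials $x^c$ with $c$ in the lattice $L := \ker_\ZZ(M_S) = \{c\in\ZZ^S : M_S c = 0\}$. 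Because $x^c\cdot x^{c'} = x^{c+c'}$, a system of generating invariant Laurent monomials is exactly a generating set of the abelian group $L$ (each invariant Laurent monomial being a product of integer powers of the generators). Taking the $f_k$ to be the monomials $x^{c^{(k)}}$ attached to a $\ZZ$-basis $c^{(1)},\dots,c^{(r)}$ of $L$, we get $r = \rk(L) \le |S| \le n$, as claimed.

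Next I would compute such a basis. Producing a $\ZZ$-basis of the integer kernel of an integer matrix is a classical task in algorithmic lattice theory: it can be carried out in time $\poly(d,n,b)$, for instance via a Hermite (or Smith) normal form of $M_S$, and --- crucially --- the basis vectors can be taken to have bit-length $\poly(d,n,b)$. This polynomial size bound on a kernel basis is the one genuinely non-trivial ingredient; it follows, e.g., from the analysis of the Kannan--Bachem algorithm, or from Cramer's rule applied to a maximal nonsingular square submatrix of $M_S$. This yields explicit $c^{(1)},\dots,c^{(r)}\in\ZZ^S$ of polynomial bit-length.

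Finally I would turn these exponent vectors into a circuit. Each $c^{(k)}_j$ has polynomial bit-length but possibly exponential magnitude, so I would compute $x_j^{c^{(k)}_j}$ by repeated squaring, using a single division gate when $c^{(k)}_j < 0$; this costs $\poly(d,n,b)$ gates. Multiplying the at most $|S|$ resulting factors gives a subcircuit computing $f_k = x^{c^{(k)}}$, and collecting the $r\le n$ outputs produces the desired circuit $\mathcal C$ of total size $\poly(d,n,b)$. The real obstacle is the conceptual first step: recognizing that passing from ordinary invariant polynomials to invariant Laurent monomials replaces the poorly-behaved semigroup of invariant monomials by the lattice $L$, which admits a small, efficiently computable generating set; granting the size bound on a kernel basis and the routine repeated-squaring construction, the rest is bookkeeping.
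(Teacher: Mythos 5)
Your proposal is correct and follows essentially the same route as the paper: identify $\CC[X_S]^T$ with Laurent monomials indexed by the integer kernel lattice $L_S = \ker_\ZZ(M_S)$ (Proposition~\ref{prop:inv-Laurent}), compute a polynomial-bit-length lattice basis via Smith normal form (Theorem~\ref{thm:smith} and Corollary~\ref{cor:smith}), and realize each basis monomial by repeated squaring with division gates (Remark~\ref{rmk:easy-circuit}). Your mention of Hermite normal form or Cramer's rule is a cosmetic variant of the paper's use of Smith normal form and does not change the argument.
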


Here we recall the notion of an \emph{arithmetic circuit with division}, which is a directed acyclic graph as follows.
Every node of indegree zero is called an input gate and is labeled by either a variable or a rational (complex) number.
Nodes of indegree one and outdegree one are labeled by $^{-1}$ and are called divison gates.
Nodes of indegree two and outdegree one and is labeled either $+$ or $\times$; in the first case it is a sum gate and in the second a product gate.
The only other nodes allowed are output gates which have indegree one and outdegree zero. Given an arithmetic circuit with division,
it computes a rational function at each output node in the obvious way. The bit size of such an arithmetic circuit is
the total number of nodes plus the total bit-length of the specification of all rational numbers computed in {\em all} of its gates.
The notion of {\em (division free) arithmetic circuits} is obtained by disallowing division gates.
They compute polynomials in the obvious way.

We emphasize that the number of generators produced by Theorem~\ref{th:laurent} is at most~$n$ (in particular, independent of the bit-length~$b$), in stark contrast to the situation for monomial invariants.
Moreover, the bit-length of~$\mathcal{C}$ is polynomially bounded.

As a consequence of Theorem~\ref{th:laurent}, we are also able to construct arithmetic circuits that compute a generating set of \emph{rational invariants}.
For a representation $\rho\colon G \rightarrow \GL(V)$, the action of $G$ on the polynomial ring $\CC[V]$ always extends to an action on its field of rational functions, the rational functions $\CC(V)$.
A rational function $f \in \CC(V)$ is called \emph{invariant} if $g \cdot f = f$ for all $g \in G$.
The collection of all rational invariants forms the sub-field $\CC(V)^G$ of \emph{rational invariants}.
A collection of rational invariants~$f_1,\dots,f_r \in \CC(V)$ is called a \emph{system of generating rational invariants} if they generate $\CC(V)^G$ as a field extension of~$\CC$.
Note that any invariant Laurent polynomial is a rational invariant, but the converse is not necessarily true.
Nevertheless:

\begin{corollary}\label{cor:1.3}
Let $M \in\Mat_{d,n}(\ZZ)$ define an $n$-dimensional representation of $T = (\CC^\times)^d$. Assume that the bit-lengths of the entries of $M$ are bounded by $b$.  Then, in $\poly(d,n,b)$-time, we can construct an arithmetic circuit with division $\mathcal{C}$ whose output gates compute a system of generating rational invariants $f_1,\dots,f_r \in \CC(x_1,\dots,x_n)^T$, where $r \leq n$.
\end{corollary}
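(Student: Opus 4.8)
The plan is to reduce the construction of generating rational invariants to the construction of generating invariant Laurent monomials provided by Theorem~\ref{th:laurent}, applied on a suitable open subset of~$\CC^n$. The key observation is that although $\CC(x_1,\dots,x_n)^T$ is genuinely a field (so we cannot expect it to be generated by Laurent monomials over all of $\CC^n$), the field of rational invariants does not change if we restrict attention to the dense torus-invariant open set $X_{[n]} = \{v : v_j \neq 0 \text{ for all } j\}$: restriction of rational functions gives an isomorphism $\CC(x_1,\dots,x_n)^T \cong \CC[X_{[n]}]_{\mathrm{loc}}^T$, where on the right we take the field of fractions of the Laurent polynomial algebra $\CC[X_{[n]}] = \CC[x_1^{\pm 1},\dots,x_n^{\pm 1}]$, because $X_{[n]}$ is Zariski-dense in~$\CC^n$. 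So it suffices to produce a generating set for the field of fractions of the algebra of $T$-invariant Laurent polynomials in $x_1,\dots,x_n$.

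First I would invoke Theorem~\ref{th:laurent} with $S = [n]$ to obtain, in $\poly(d,n,b)$ time, an arithmetic circuit with division computing a system of generating invariant Laurent monomials $f_1,\dots,f_r$ with $r \leq n$ for $\CC[X_{[n]}]^T$. Here the relevant structural fact (already exploited in the discussion preceding Theorem~\ref{th:laurent}) is that $\CC[X_{[n]}]^T$ is the monomial algebra $\bigoplus_{c \in \ker_\ZZ M} \CC x^c$, where $\ker_\ZZ M$ denotes the lattice of integer vectors in the kernel of the weight matrix; the $f_i$ correspond to a generating set of the \emph{monoid} (not lattice) of this algebra, but a generating set of the monoid of $\ker_\ZZ M$ is in particular a generating set of the \emph{lattice} $\ker_\ZZ M$ as a group — indeed, any spanning set of the monoid spans the ambient lattice over $\ZZ$, and one only needs to also throw in the inverses, which is where the division gates come in. I would then take the \emph{same} circuit $\mathcal{C}$: since any element of $\CC[X_{[n]}]^T$ is a polynomial in the $f_i$, its field of fractions is generated as a field extension of~$\CC$ by the $f_i$ together with their inverses $f_i^{-1}$; but the $f_i$ are Laurent monomials, so $f_i^{-1} = x^{-c_i}$ is again a $T$-invariant Laurent monomial, hence already a rational invariant, and it is computed from $f_i$ by a single division gate. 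Thus appending $r \leq n$ division gates to the output of the circuit of Theorem~\ref{th:laurent} yields an arithmetic circuit with division computing $f_1,\dots,f_r, f_1^{-1},\dots,f_r^{-1}$, all lying in $\CC(x_1,\dots,x_n)^T$ and together generating it as a field over~$\CC$.

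If one insists on at most $n$ output gates (rather than up to $2n$), one can be slightly more careful: a generating set of the \emph{lattice} $\ker_\ZZ M$ — which one can extract in polynomial time from the monoid generators by standard integer linear algebra, e.g.\ Hermite normal form — has at most $n - \rk M \leq n$ elements, and the Laurent monomials $x^{c_1},\dots,x^{c_k}$ attached to such a lattice basis already generate the whole monomial algebra $\CC[X_{[n]}]^T$ as an algebra \emph{with division}, hence generate $\CC(x_1,\dots,x_n)^T$ as a field; so one runs the Laurent-monomial circuit on a lattice basis instead of monoid generators, giving $r = k \leq n$ output gates. Either way, correctness of the claim that these generate the field of rational invariants reduces to the two ingredients above: the density of $X_{[n]}$ in $\CC^n$ (so restriction is a field isomorphism on invariants), and the fact that the field of fractions of a monomial algebra $\bigoplus_{c\in L}\CC x^c$ for a lattice $L\subseteq\ZZ^n$ is exactly $\CC(\{x^{c}: c \in B\})$ for any $\ZZ$-basis $B$ of~$L$.

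I expect the only genuine subtlety — and the main thing one must state carefully rather than invoke as routine — is the passage from the \emph{monoid} structure underlying Theorem~\ref{th:laurent} to the \emph{lattice/field} structure needed here: one must check that the monoid generators of $\CC[X_{[n]}]^T$ do span the kernel lattice $\ker_\ZZ M$ over $\ZZ$ (equivalently that the monoid is not contained in a proper sublattice — which is automatic here because the monoid of invariant monomials, being closed under the coordinate-wise structure, generates its own lattice), and that the field of fractions of the monomial algebra on that lattice is captured by any lattice basis together with inverses. Everything else is bookkeeping about gluing on $O(n)$ extra gates, which does not affect the $\poly(d,n,b)$ running time or bit-length bound inherited from Theorem~\ref{th:laurent}.
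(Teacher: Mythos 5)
Your overall strategy is the right one and matches the paper's: specialize Theorem~\ref{th:laurent} to $S=[n]$ and then argue that the invariant Laurent monomials generate the field of rational invariants. However, there is a genuine gap exactly where the paper spends its only real effort.

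Your density argument shows that restriction gives $\CC(x_1,\dots,x_n)^T = \bigl(\mathrm{Frac}\,\CC[X_{[n]}]\bigr)^T$, i.e.\ the $T$-invariant \emph{elements} of the fraction field of the Laurent polynomial ring. But you then silently replace this with $\mathrm{Frac}\bigl(\CC[X_{[n]}]^T\bigr)$, the fraction field of the invariant subalgebra, when you write ``it suffices to produce a generating set for the field of fractions of the algebra of $T$-invariant Laurent polynomials.'' The inclusion $\mathrm{Frac}(A^T)\subseteq (\mathrm{Frac}\,A)^T$ is trivial, but the reverse inclusion — that every rational invariant $f=p/q$ can be rewritten as a ratio of \emph{invariant} Laurent polynomials — is not: a priori $p$ and $q$ are not individually semi-invariant, and one must argue this. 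This is precisely what the paper's Proposition~\ref{pro:rat} establishes, by observing that if $p/q$ is invariant and $p,q$ are coprime, then $t\cdot p = \alpha(t)p$ and $t\cdot q = \alpha(t)q$ for a common character $\alpha$, so $p$ (and $q$) splits into monomials of a single weight, whence each $q/x^e$ (for $x^e$ a monomial of $p$) is an invariant Laurent polynomial and $f = \sum_e p_e\,(q/x^e)^{-1}$. You need some version of this argument; simply invoking Zariski density does not deliver it.

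A second, more minor point: the ``monoid versus lattice'' subtlety that you flag as the main delicate step is not actually an issue. Theorem~\ref{th:laurent} is proved via Algorithm~\ref{algo-lattice}, which computes a \emph{lattice basis} of $L_S$ directly from the Smith normal form, and Proposition~\ref{prop:inv-Laurent}(3) already records that a lattice basis $\{c^{(1)},\dots,c^{(r)}\}$ yields algebra generators $x^{c^{(1)}},\dots,x^{c^{(r)}}$ of $\CC[X_S]^T$. So there is no need to extract a lattice basis from monoid generators, nor to append $r$ extra division gates: the $r\le n$ Laurent monomials already output by the circuit of Theorem~\ref{th:laurent} are the generators, and their inverses are generated from them inside the field for free. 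The place where you should focus your effort is the semi-invariance argument above, not the lattice bookkeeping.
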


This result is in distinct contrast to the impossibility of finding succinct circuits for generating \emph{polynomial} invariants under natural complexity assumptions~\cite{GIMOWW}.

Furthermore, we can complement Theorem~\ref{thm:main} in the following way:
if two orbit closures do not intersect, $\overline{O_v} \cap \overline{O_w} \ne \emptyset$,
then we can construct in polynomial time an arithmetic circuit computing a separating invariant monomial that can serve as a ``witness'' of the non-intersection.

\begin{corollary}\label{cor:sep-invar}
Let $M \in\Mat_{d,n}(\ZZ)$ define an $n$-dimensional representation of $T = (\CC^\times)^d$.
Let $v,w \in \QQ(i)$ be such that $\overline{O_v} \cap \overline{O_w} = \emptyset$.
Assume the bit-lengths of the entries of $v,w$ and $M$ are bounded by~$b$.
Then, in $\poly(d,n,b)$-time, we can construct an arithmetic circuit of bit-length $\poly(d,n,b)$,
which computes an invariant monomial $f$ such that $f(v) \neq f(w)$.
\end{corollary}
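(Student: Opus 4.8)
The plan is to combine Mumford's theorem with the Laurent-monomial machinery of Theorem~\ref{th:laurent}. For a torus the invariant ring $\CC[x_1,\dots,x_n]^T$ is linearly spanned by the genuine invariant monomials, i.e.\ the $x^c$ with $c\in\NN^n$ and $Mc=0$, so Theorem~\ref{thm:Mumford} guarantees that the hypothesis $\overline{O_v}\cap\overline{O_w}=\emptyset$ implies the existence of a genuine invariant monomial $x^{c_0}$ with $v^{c_0}\ne w^{c_0}$. The entire difficulty is to \emph{produce} such a monomial in polynomial time, with a polynomial bound on the bit-length of the resulting circuit; I never compute $c_0$ itself but only use its existence.

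First I would reduce to the case $\supp v=\supp w$. Write $S_v=\supp v$, $S_w=\supp w$, $R=S_v\cap S_w$. Using linear programming (from which a feasible integer point, when one exists, can be extracted with polynomial bit-length) I test whether there is $c\in\NN^n$ with $\supp c\subseteq S_v$, $Mc=0$ and $\sum_{j\in S_v\setminus S_w}c_j\ge 1$; if so, $x^c$ is a genuine invariant monomial with $v^c\ne 0=w^c$, which I output. I run the symmetric test with $v$ and $w$ interchanged. If both tests fail, then a short case analysis on the support of a hypothetical separator---if $\supp c_0\not\subseteq S_v$ then $v^{c_0}=0$, hence $w^{c_0}\ne 0$, hence $\supp c_0\subseteq S_w$, so the second test would have succeeded, and symmetrically---shows that every genuine invariant monomial separating $v$ and $w$ is supported on $R$. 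Hence I may pass to the restrictions $v|_R,w|_R$, which have \emph{full} support $R$, and look there for a genuine invariant monomial supported on $R$ separating them; at least one exists (in particular $c_0\ne 0$ and $R\ne\emptyset$).

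In this equal-support situation I would proceed as follows. Let $R'\subseteq R$ be the ``essential support'', the set of coordinates that occur with positive exponent in some genuine invariant monomial supported on $R$; it is obtained from $\le|R|$ LP feasibility tests, and one checks that every genuine invariant monomial supported on $R$ is actually supported on $R'$ (so $\supp c_0\subseteq R'$). Using Theorem~\ref{th:laurent} (or a direct integer-kernel computation) I compute a basis $b_1,\dots,b_k$, with entries of polynomial bit-length, of the lattice $\Lambda=\ker_{\ZZ}M\cap\ZZ^{R'}$ of exponent vectors of invariant Laurent monomials supported on $R'$. Since $c_0\in\Lambda$ with $v^{c_0}\ne w^{c_0}$ and both values nonzero, the group homomorphism $\chi\colon\Lambda\to\CC^{\times}$, $\chi(c)=\prod_{j\in R'}(v_j/w_j)^{c_j}$, is nontrivial, so $\chi(b_l)\ne 1$ for some $l$; as the exponents $(b_l)_j$ may be exponentially large integers, I decide this inequality not by evaluation but by computing the multiplicative relation lattice $L^{*}=\{d\in\ZZ^{R'}:\chi(d)=1\}$ of the Gaussian rationals $v_j/w_j$ (via a coprime base) and testing membership. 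Next I choose a strictly positive invariant monomial $x^e$ of full support $R'$---it exists precisely because $R'$ is the essential support, and an $e$ of polynomial bit-length is the sum of the single-coordinate witnesses produced above---take the least $N\ge 0$ with $b_l+Ne\ge 0$, and set $c:=b_l+Ne$ and $c':=Ne$. Then $c,c'\in\NN^{R'}$ with $Mc=Mc'=0$, so $x^c$ and $x^{c'}$ are genuine invariant monomials of polynomial bit-length, and $\chi(c)/\chi(c')=\chi(b_l)\ne 1$ forces $\chi(c)\ne 1$ or $\chi(c')\ne 1$; since the relevant values are nonzero, this means $v^c\ne w^c$ or $v^{c'}\ne w^{c'}$. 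Using membership in $L^{*}$ I decide which one separates $v$ from $w$, and I output it as a repeated-squaring circuit, which has bit-length $\poly(d,n,b)$ because its exponent vector does.

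I expect the main obstacle to be this last step. Mumford's theorem and Theorem~\ref{th:laurent} furnish, respectively, only a non-constructive genuine separator and a system of \emph{Laurent} (signed-exponent) generators, and a Laurent monomial cannot in general be cleared of denominators while remaining invariant. Making the conversion effective forces one to (a) pin down the essential support $R'$, (b) split a kernel-lattice basis vector into two non-negative invariant exponents with a polynomially bounded multiplier, and---most delicately---(c) decide the monomial (in)equalities $v^c=w^c$ in polynomial time even though $c$ has exponentially large entries, which cannot be done by direct arithmetic and instead requires reasoning through multiplicative relation lattices of Gaussian rationals. The support bookkeeping of the second step is routine by comparison, but is exactly what ensures that a separating monomial supported on the common support $R$ exists at all.
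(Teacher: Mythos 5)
Your proof is correct, and the core machinery is the same as the paper's: (i) reduce to a ``common support'' by LP feasibility tests that, when they succeed, already hand you a separating genuine monomial evaluating to zero on exactly one side; (ii) pass to the essential support $R'$, where invariant monomials live; (iii) compute a lattice basis of $\ker M\cap\ZZ^{R'}$ and find a basis element $b_l$ with $v^{b_l}\neq w^{b_l}$; (iv) clear negative exponents by adding a suitable multiple of a strictly positive invariant exponent vector and observe that at least one of the two resulting genuine monomials must still separate; (v) decide monomial (in)equalities by multiplicative structure rather than evaluation. The paper organizes this slightly differently: it cases directly on whether $\esupp(v)=\esupp(w)$ (which is automatic once supports agree, so your LP-based support reduction subsumes this case), and it clears negative exponents coordinate-by-coordinate via $f\cdot\prod_{j\in S_-}m_j^{-e_j}$, first checking if some $m_j$ already separates, rather than your single multiplier $b_l+Ne$ split as $(b_l+Ne)-Ne$. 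It also uses the g.c.d.\ algorithm of Proposition~\ref{prop:monomial} directly on each candidate exponent rather than building the full multiplicative relation lattice of the $v_j/w_j$, which is slightly lighter machinery but functionally equivalent for this purpose. Both routes are sound; yours is perhaps a bit more systematic in how it funnels the support bookkeeping through LP tests, while the paper's is a bit more economical in the negative-exponent clearing.
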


So far, we have discussed orbit problems for complex tori $T = (\CC^\times)^d$.
It is interesting to ask to which extent our results hold for ``compact'' tori, which are groups of the form~$K = (\Ss^1)^d$, where $\Ss^1 = \{z \in \CC^\times \ | \ |z| = 1\}$.%
\footnote{Note that $K$ is indeed compact, and a subgroup of $T$. Moreover, any commutative compact connected Lie group is of this form.}
Besides the fundamental algorithmic interest in this setting, such group actions are important in several areas.
For example, the time evolution of periodic systems in Hamiltonian mechanics are naturally given by $S^1$-actions, and important symmetries in classical and quantum physics are given by compact group actions.

In fact, the results discussed so far can also be used to give an efficient solution for orbit problems for compact tori.
Any (continuous) finite-dimensional representation of $(\Ss^1)^d$ extends to a representation of $(\CC^\times)^d$, so representations are specified as before by a weight matrix $M \in\Mat_{d,n}(\ZZ)$.
Moreover, the compactness implies that orbits $O_{K,v} = \{kv \ |\ k \in K\}$ are closed and so all three problems mentioned in Problem~\ref{prob:main} coincide.
Therefore, the following corollary solves all three problems for compact tori:

\begin{corollary}\label{cor:main}
Let the weight matrix $M \in\Mat_{d,n}(\ZZ)$ define an $n$-dimensional representation of  $T=(\CC^\times)^d$ and put $K = (\Ss^1)^d$.
Further, let $v,w \in \QQ(i)^n$ and assume that the bit-lengths of the entries of $v,w$ and $M$ are bounded by $b$.
Then, in $\poly(d,n,b)$-time, we can decide if $O_{K,v} = O_{K,w}$.
\end{corollary}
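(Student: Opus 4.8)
The plan is to reduce the compact-torus orbit-equality problem to the complex-torus orbit-equality problem of Theorem~\ref{thm:main}(1), using the polar decomposition of the torus. Write $T=(\CC^\times)^d = K\cdot A$, where $A=(\RR_{>0})^d$ is the subgroup of coordinatewise-positive elements; every $t\in T$ factors uniquely as $t=ka$ with $k\in K$ and $a\in A$, via $t_i=(t_i/|t_i|)\cdot|t_i|$. Under the representation $\rho_M$, an element $a\in A$ scales the $j$-th coordinate by the positive real number $\rho_M(a)_{jj}=\prod_{i} a_i^{m_{ij}}$, while $k\in K$ scales it by a complex number of modulus $1$.

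First I would establish the elementary equivalence, valid for all $v,w\in\CC^n$:
\[
  O_{K,v}=O_{K,w}
  \quad\Longleftrightarrow\quad
  \bigl(\,|v_j|=|w_j|\ \text{for all } j\in[n]\,\bigr)\ \text{and}\ O_{T,v}=O_{T,w}.
\]
The forward direction is immediate, since $K\subseteq T$ acts by unimodular scalings. For the converse, assume $|v_j|=|w_j|$ for all $j$ and $w=\rho_M(t)v$ for some $t\in T$, and polar-decompose $t=ka$. For each $j\in\supp(v)$ we get $|v_j|=|w_j|=\rho_M(a)_{jj}\,|v_j|$, hence $\rho_M(a)_{jj}=1$; for $j\notin\supp(v)$ the $j$-th coordinate of $v$ is zero. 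Thus $\rho_M(a)v=v$, so $w=\rho_M(k)\rho_M(a)v=\rho_M(k)v$, giving $w\in O_{K,v}$ and hence $O_{K,v}=O_{K,w}$.

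Given the equivalence, the algorithm is: (i) check whether $|v_j|^2=|w_j|^2$ for every $j$ --- since $v_j,w_j\in\QQ(i)$, each side is a rational of bit-length $\poly(b)$, so this costs $\poly(n,b)$; (ii) if all these equalities hold, call the algorithm of Theorem~\ref{thm:main}(1) to decide whether $O_{T,v}=O_{T,w}$, at cost $\poly(d,n,b)$; output ``yes'' precisely when both tests succeed. Correctness is exactly the displayed equivalence, and the total running time is $\poly(d,n,b)$, as required.

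I do not expect a serious obstacle here. The only subtle point is that the positive part $a$ of $t$ must genuinely \emph{fix} $v$, not merely preserve its $T$-orbit; this is what the modulus condition on $\supp(v)$ delivers, and outside the support there is nothing to verify. It is worth noting in the write-up that both conditions in the equivalence are necessary: equal moduli alone do not force equal $T$-orbits, and lying in the same $T$-orbit alone does not force equal moduli, so neither test can be dropped.
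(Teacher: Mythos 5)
Your proposal is correct and takes essentially the same approach as the paper: the paper also proves exactly the equivalence $O_{K,v}=O_{K,w}\iff O_{T,v}=O_{T,w}$ and $|v_j|=|w_j|$ for all $j$ (stated as a proposition preceding the corollary), with the converse direction established via the polar decomposition $t_i=r_ie^{\mathrm i\theta_i}$, and then reduces to Theorem~\ref{thm:main}(1) plus a modulus check. Your write-up is marginally more explicit than the paper's ``it is easy to see'' about why the radial part fixes $v$ on its support, but it is the same argument.
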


To give additional context to this result, we briefly mention some recent results achieving polynomial time algorithms for orbit closure intersection of \emph{specific} group actions.
For the left-right action (of $\SL_n \times \SL_n$ on $m$-tuples of $n \times n$ matrices), one approach to solving the orbit closure intersection problem is to (approximately) reduce to the orbit equality problem for the maximal compact subgroup (which happens to be ${\rm SU}(n) \times {\rm SU}(n)$, where ${\rm SU}(n)$ denotes the group of $n \times n$ unitary matrices with determinant $1$), see\cite{AZGLOW}.
This was achieved by using a geodesic convex optimization algorithm.
Given the recent advances in this area (see, e.g., \cite{BFGOWW2} and references therein), it is natural to ask if a similar approach could be useful for general reductive group actions.
For torus actions, interestingly, we can also go in the other direction.
Namely, our result for the orbit equality problem for the maximal compact subgroup, Corollary~\ref{cor:main}, is derived from our main result for complex tori, i.e., Theorem~\ref{thm:main}.
More generally, we observe that for arbitrary reductive group actions, the orbit equality problem for the maximal compact subgroup is always equivalent to an orbit closure intersection (or equality) problem for a related action of the larger group, see Theorem~\ref{thm:compacttogeneral} for a precise statement.


The results in this paper warrant the investigation of several interesting directions that we leave for future work, some of which we will discuss in Section~\ref{sec:future}.

\subsection{Further motivation and algorithmic applications}\label{subsec:applications}
As we saw above, orbit problems are related to a great number of applications.
Despite significant progress, for general reductive group actions it is still an open problem to design fast algorithms for these problems.
Our results fully resolve the situation in the case of torus actions and also show how to overcome barriers that had previously been pointed out in the literature~\cite{ikenmeyer2017vanishing,GIMOWW}.
Apart from its fundamental complexity theoretic interest, there are also several algorithmic applications where torus actions arise naturally.
Here we discuss in more detail some concrete applications to combinatorial optimization and to dynamical systems,
which were already mentioned briefly at the beginning of the introduction.

We first explain a link to combinatorial optimization.
Consider edge weights $w$ for the complete bipartite graph on $2n$ labeled vertices ($n$ on each side):
the weight $w(e)$ of an edge~$e$ is assumed to be a rational number, encoded in binary.
We define the weight $w(M)$ of a perfect matching~$M$ of~$G$ as the sum of the weights of the edges occurring in~$M$.

\begin{problem}\label{prblm:MatchW}
Given edge weights $w$ and $w'$ as above, decide whether they assign the same weight to \emph{every} perfect matching $M$ of $G$.
\end{problem}

Perhaps surprisingly, this problem can be reformulated as an orbit intersection problem for
a torus action (see below).
Therefore, Theorem~\ref{thm:main} implies that Problem~\ref{prblm:MatchW} can be solved in polynomial time.
This insight seems far from being obvious!

The relevant torus action here results from from matrix scaling, which has been widely studied and has
many applications; see~\cite{Sink} and~\cite{cohen2017matrix} for more recent developments.
Consider $\ST_n:= \{ (t_1,\ldots,t_n) \in \CC^\times \mid t_1\cdots t_n =1\}$, which is isomorphic to the algebraic torus $(\CC^{\times})^{n-1}$.
We let $\ST_n\times\ST_n$ act on $\Mat_n(\CC)$ by left-right multipliation as follows:
\begin{equation}\label{eq:ma-scal}
 ((t_1,\ldots,t_n),(s_1,\ldots,s_n)) \cdot (v_{ij}) := (t_i v_{ij} s_j)_{ij} .
\end{equation}
Moreover, we shall identify the edge weights~$w_{ij}$, where $i,j \in [n]$, with the matrix $v_w=(2^{w_{ij}}) \in\Mat_n(\CC)$.%
\footnote{As explained in footnote~\ref{foot:float}, our results also hold for input of this form, where the $w_{ij}$ are specified in binary.}
Then one can show that the answer to Problem~\ref{prblm:MatchW} is affirmative if and only if the orbit closures of~$v_w$ and~$v_{w'}$ in~$\Mat_n(\CC)$ intersect.
This follows from Mumford's theorem mentioned earlier, along with the fact that the invariant polynomials for this action are generated by the perfect matchings, namely the monomials~$f_\pi = x_{1,\pi(1)} \cdots x_{n,\pi(n)}$ where $\pi\in S_n$ ranges over the permutations~\cite[Theorem~3]{leep1999marriage}.
Indeed, multiplying entries of $v_{w}$ is the same as summing the corresponding edge weights in the exponent, hence~$f_\pi(v_w) = 2^{w(M)}$, where $M$ is the perfect matching defined by the permutation $\pi$.


We briefly comment on the 3-dimensional generalization of this action.
Here, $\ST_n \times \ST_n \times \ST_n$ acts on 3-tensors in $\CC^n \otimes \CC^n \otimes \CC^n$ in the natural way:
\begin{align*}
  ((t_1,\dots,t_n),(s_1,\dots,s_n),(u_1,\dots,u_n)) \cdot (v_{ijk}) = (t_i s_j u_k v_{ijk})_{ijk}.
\end{align*}
In this case, any system of generating polynomial invariants must include the (maximum) 3-dimensional matching monomials $f_{\pi,\tau} = x_{1,\pi(1),\tau(1)} \cdots x_{1,\pi(n),\tau(n)}$ for~$\pi,\tau\in S_n$, which led to the barrier result for torus actions in~\cite{GIMOWW}.
Of course, in this case there are additional generating invariants, see, e.g., \cite{linial2014vertices}.
Our results show that the corresponding orbit problems can nevertheless be solved in polynomial time!
Moreover, it is possible to efficiently exhibit \emph{separating} polynomial invariants (whenever they exists) as well as to construct systems of generating invariant \emph{Laurent polynomial} or \emph{rational} invariants.

Our second example concerns a connection to dynamical systems.
Consider a (massless) cue ball on a billiard table (assumed to be square to simplify the discussion).
We can ask:

\begin{problem}\label{prblm:billiard}
If we hit the cue ball at a given angle, will its trajectory end up in a pocket?
\end{problem}

It is well-known, and easy to see, that one can map trajectories on an ordinary billiard with reflecting boundaries to a billiard of twice the size with periodic boundaries, say $(\RR/2\pi\ZZ)^2$.
The trajectory of the ball depends fundamentally on the angle or slope. 
If the slope is irrational, then the trajectory will be dense, so the answer to Problem~\ref{prblm:billiard} is trivially yes.
Otherwise, the trajectory will be periodic and the problem is nontrivial.
We can model it as an orbit problem as follows.
Let the compact torus $\Ss^1$ act on $\CC^2$ by
\begin{align*}
  t \cdot (x,y) := (t^p x, t^q y),
\end{align*}
where $s=\frac qp$ is the slope by which we hit the ball.
We can identify points $(\theta,\nu)$ on the periodic billiard with points $(e^{i\theta},e^{i\nu}) \in \CC^2$.
In this way, Problem~\ref{prblm:billiard} reduces to a constant number of orbit equality problems for this action (one for each pocket).
While the problem is certainly easy to solve by a variety of methods, one can ask analogous questions for billiards in $n>2$~dimensions and by allowing a $d$-dimensional hyperplane worth of allowed cue directions.
Such generalizations similarly correspond to orbit problems for compact tori~$(\Ss^1)^d$ on some~$\CC^n$, and they can all be solved in polynomial time by using Corollary~\ref{cor:main}.


\subsection{Organization of the paper}\label{subsec:organization}
In Section~\ref{sec:inv}, we give an introduction to basic results in invariant theory that we will need to establish our results.
In Section~\ref{sec:rep-tori}, we focus on tori, their representations, and their invariants.
In particular, we will show that the faces of a natural convex polyhedral ``Newton cone'' are in one-to-one correspondence with the orbits in an orbit closure, which will be an important ingredient later on.

In Section~\ref{sec:gen-inv}, we discuss the definition and computation of suitable \emph{rational} invariants.
As mentioned above, our key result is that for fixed support, a small generating set of invariant Laurent monomials can be computed efficiently.
This result, which is Theorem~\ref{th:laurent}, is at the heart of our algorithms, and also of independent interest.
We achieve this using Smith normal forms.
As an easy consequence, this also implies that we can efficiently compute a small generating set of rational invariants for a given representation, that is, Corollary~\ref{cor:1.3}.

In Section~\ref{sec:oe}, we explain how to use the results of the preceding section to solve the orbit equality problem in polynomial time.
This establishes part~(1) of Theorem~\ref{thm:main}.
Here we rely on known results for testing if a given Laurent monomial (of possibly exponential degree) evaluates to the same value on two given vectors, and we present a brief sketch for completeness.

In Sections~\ref{sec:oci-to-oe} and \ref{sec:occ}, we show how to solve the orbit closure intersection and containment problems by reducing them to orbit equality.
This establishes parts~(2) and (3) of Theorem~\ref{thm:main}.
Here we use the polyhedral description of the structure of orbit closures as furnished by the Newton cone.
Furthermore, we show that given two points whose orbit closures do not intersect, we can efficiently construct a separating monomial invariant as a ``witness''.
This proves Corollary~\ref{cor:sep-invar}.

In Section~\ref{sec:compact}, we show how to solve the orbit equality problem for compact tori.
This establishes Corollary~\ref{cor:main}.
We also give, for general reductive groups~$G$, a reduction from orbit equality for a maximally compact subgroup~$K\subseteq G$ to orbit equality and orbit closure intersection for~$G$.

In Section~\ref{sec:future}, we summarize our results and discuss some important open problems and future directions.

\medskip\noindent\textbf{Conventions.}
In this paper, sometimes we work with monomials and sometimes with Laurent monomials.
Unless we use the prefix ``Laurent'', by a monomial, we mean $\prod_j x_j^{c_j}$ where $c_j \in \NN$, i.e., all exponents are non-negative.
Whenever exponents are allowed to be negative, we will be careful to specify that it is a Laurent monomial.

\section{Preliminaries of invariant theory}\label{sec:inv}
We will briefly recall the main results in invariant theory that are relevant for us (see~\cite{kraft,dolgachev:03,DK,Mumford-book} for details).
We will take our ground field to be $\CC$, the field of complex numbers, for simplicity.
However, much of this theory works for any algebraically closed field.
For a (finite-dimensional) vector space $V$, we denote by $\CC[V]$ the ring of polynomial functions on~$V$.
For our purposes, if $V$ is the standard vector space $\CC^n$, then $\CC[V] = \CC[x_1,\dots,x_n]$, the polynomial ring in $n$ variables, where $x_i$ is to be interpreted as the $i^{th}$ coordinate function.

Let $G$ be an algebraic group, i.e., it has the structure of an algebraic variety (not necessarily irreducible) such that the multiplication map $m\colon G \times G \rightarrow G$ and the inverse map $\iota\colon G \rightarrow G$ are morphisms of varieties.%
\footnote{A morphism of varieties simply means that in local coordinates the map is given by ratios of polynomials. For concreteness, the reader may simply think of an algebraic group as a matrix group, i.e., a subgroup of $\GL_n(\CC)$ that is described as the zero locus of a collection of polynomials.}
A morphism of algebraic groups $\rho: G \rightarrow \GL(V)$ is called a rational representation of $G$.%
\footnote{One can interpret this action as the action of the subgroup $\rho(G) \subseteq \GL(V)$ on $V$ by matrix-vector multiplication, where $\rho(G)$ is parametrized algebraically by an algebraic group $G$.}
We write $gv$ or $g \cdot v$ for $\rho(g) v$.
For a point $v \in V$, its orbit $O_v$ (or $O_{G,v}$ when the group is not clear from context) is the set of all points that can be reached from $v$ by the action of an element of the group, i.e.,
\begin{align*}
  O_v := \{gv \ |\ g \in G\}.
\end{align*}

We denote by $\overline{O_v}$ the closure of the orbit $O_v$.
The closure is to be taken either with respect to the Euclidean topology or the Zariski topology.
Indeed, the closures in both topologies coincide, a well-known fact that relies on a fundamental result in algebraic geometry due to Chevalley (see~\cite[I.\S10]{mumford:88}).
A polynomial function $f \in \CC[V]$ is called \emph{invariant} if it is oblivious to the group action, i.e., $f(gv) = f(v)$ for all $g \in G$, $v \in V$.
The collection of all invariant polynomials forms a subring
\begin{align*}
  \CC[V]^G := \{f \in \CC[V]\ |\ \forall\ g\in G, v \in V\ f(gv) = f(v) \}.
\end{align*}
One key observation is that invariant functions are constant along orbits and hence constant along orbit closures as well.
Hence, if the orbit closures of two points intersect, then they cannot be distinguished by an invariant function.
The converse was proved by Mumford for a special class of groups called reductive groups~\cite{Mumford-book} (see also~\cite[Corollary~2.3.8]{DK}).
An algebraic group $G$ is called {\em reductive} if every rational representation is a direct sum of irreducible representations,
wherein a representation is called irreducible if it has no non-trivial subrepresentations.
Examples of reductive groups include $\SL_n, \GL_n, {\rm Sp}_n, {\rm O}_n$, finite groups, and most importantly for us, tori (which we define formally in the next section),
as well as direct products thereof.\footnote{The group $B_n$ of upper triangular $n \times n$ invertible matrices is a typical example of a group that is not reductive.}
Reductive groups have played a central role for a number of mathematical fields for over a century. A particularly important result in the invariant theory of reductive groups is that invariant rings are finitely generated \cite{Hilb1, Hilb2,weyl:39}.

To state Mumford's result in the generality we need, we will define rational actions on varieties (a notion that naturally generalizes rational representations).
Let $X$ be an algebraic variety and let $\CC[X]$ denote the ring of regular functions on~$X$.
A rational action of an algebraic group $G$ on~$X$ is a morphism of varieties $G \times X \rightarrow X, (g,x) \mapsto g\cdot x$ satisfying $g \cdot (g' \cdot x) = (gg') \cdot x$ and $e \cdot x = x$ for all $x \in X$, $g,g' \in G$.
As in the vector space case, we denote the orbit of a vector~$v\in X$ by $O_v$.

\begin{theorem}[Mumford, \cite{Mumford-book}]\label{thm:Mumford}
Let $G$ be a reductive group. Let $X$ be an algebraic variety and suppose we have a rational action of $G$ on $X$. For $v,w \in X$ we have
$\overline{O_v} \cap \overline{O_w} = \emptyset$ if and only if there exists $f \in \CC[X]^G$ such that $f(v) \neq f(w)$.
\end{theorem}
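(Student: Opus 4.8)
The plan is to establish the two implications separately. The forward direction is soft, and the substance lies in the reverse direction, which I would prove by the standard averaging argument using the Reynolds operator of the reductive group $G$. Throughout I would restrict to the case $X$ affine, which is the only one relevant here (for instance $X=\CC^n$, $X=\Mat_n(\CC)$, or $X=X_S$, which is isomorphic to the affine torus $(\CC^\times)^{|S|}$ with coordinate ring the Laurent polynomials in $\{x_j\}_{j\in S}$). For the easy direction: if $f\in\CC[X]^G$ and $f(v)\neq f(w)$, then $f$ is constant on $O_v$ with value $f(v)$, and regular functions are continuous, so $f$ equals $f(v)$ identically on $\overline{O_v}$, and likewise equals $f(w)$ on $\overline{O_w}$; a common point of the two closures would force $f(v)=f(w)$, so they must be disjoint.

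For the converse, assume $\overline{O_v}\cap\overline{O_w}=\emptyset$. First I would record that $\overline{O_v}$ and $\overline{O_w}$ are $G$-stable closed subsets of $X$: every $g\in G$ acts as an automorphism of $X$, hence as a homeomorphism, so $g\cdot\overline{O_v}=\overline{g\cdot O_v}=\overline{O_v}$, and similarly for $w$. Since $X$ is affine and these two closed sets are disjoint, their vanishing ideals satisfy $I(\overline{O_v})+I(\overline{O_w})=\CC[X]$ by the Nullstellensatz. Writing $1=a+b$ with $a\in I(\overline{O_v})$ and $b\in I(\overline{O_w})$ and putting $h:=a$, I obtain a regular function $h\in\CC[X]$ with $h\equiv 0$ on $\overline{O_v}$ and $h\equiv 1$ on $\overline{O_w}$ (the latter because $h-1=-b$ vanishes on $\overline{O_w}$). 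In general $h$ is not invariant, so it remains to symmetrize it.

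The final step is to apply the Reynolds operator. Here I would use that, $G$ being reductive, the coordinate ring $\CC[X]$ — with the rational $G$-action $(g\cdot\phi)(x)=\phi(g^{-1}x)$ — is locally finite and semisimple, hence decomposes canonically as $\CC[X]=\CC[X]^G\oplus W$, where $W$ is the sum of all nontrivial isotypic components. Let $R\colon\CC[X]\to\CC[X]^G$ be the projection along $W$. It is $\CC$-linear, restricts to the identity on $\CC[X]^G$ (so $R(1)=1$), and carries every $G$-stable subspace $U$ into itself, since $U$ inherits the isotypic decomposition and $R(U)=U\cap\CC[X]^G$. Now $I(\overline{O_v})$ is a $G$-stable ideal of $\CC[X]$ and $h\in I(\overline{O_v})$, so $R(h)\in I(\overline{O_v})\cap\CC[X]^G$; in particular $R(h)(v)=0$. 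Likewise $h-1\in I(\overline{O_w})$, which is $G$-stable, so $R(h)-1=R(h-1)\in I(\overline{O_w})$, whence $R(h)(w)=1$. Therefore $f:=R(h)\in\CC[X]^G$ satisfies $f(v)=0\neq 1=f(w)$, as desired.

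I expect the only real obstacle to be setting up the Reynolds operator and the three properties used above. This rests on the structural facts that every rational $G$-representation is locally finite (each regular function lies in a finite-dimensional $G$-stable subspace) and — the crucial place where reductivity enters — that finite-dimensional rational $G$-representations are semisimple, which is exactly what guarantees the splitting $\CC[X]=\CC[X]^G\oplus W$ and hence the existence of $R$. Everything else (separating disjoint closed sets by a regular function, $G$-stability of vanishing ideals, the final computation) is routine. Note that this argument does not use finite generation of $\CC[X]^G$.
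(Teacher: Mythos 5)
The paper does not actually prove this theorem---it is quoted from Mumford and the textbook reference \cite[Corollary~2.3.8]{DK}---so there is no in-paper proof to compare against. Your argument is the standard one that appears in those references: separate the two disjoint $G$-stable closed sets by a regular function via the Nullstellensatz on the affine $X$, then symmetrize with the Reynolds operator $R$, using exactly the three properties you isolate ($\CC$-linearity, $R(1)=1$, and $R(U)\subseteq U$ for $G$-stable $U$). All the steps are correct: $\overline{O_v},\overline{O_w}$ are $G$-stable and closed, $I(\overline{O_v})+I(\overline{O_w})=\CC[X]$ because the sets are disjoint and $\CC$ is algebraically closed, and the computation $R(h)\in I(\overline{O_v})$, $R(h)-1\in I(\overline{O_w})$ gives $R(h)(v)=0\neq 1=R(h)(w)$. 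Your remark that the statement should be read for affine $X$ is also well-taken: as written the paper allows arbitrary varieties, but the proof (and the theorem as usually stated) requires affineness, which is all that is needed in the paper since the relevant $X$ are $\CC^n$ and the affine open sets $X_S$.
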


Another well-known important structural result states that every orbit closure $\overline{O_v}$ contains a {\em unique} closed orbit.

\begin{theorem}\label{th:uniqueclosedorbit}
Let $\rho\colon G \rightarrow \GL(V)$ be a rational representation of a reductive group $G$. Then:
\begin{enumerate}
\item For any $v\in V$, the orbit closure $\overline{O_v}$ contains a unique closed orbit, that we denote by $O_{\widetilde{v}}$.
\item\label{it:oci red} If $v,w \in V$, then
\[
\overline{O_v} \cap \overline{O_w} \ne \emptyset \Longleftrightarrow O_{\widetilde{v}} = O_{\widetilde{w}}.
\]
\end{enumerate}
\end{theorem}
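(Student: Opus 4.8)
The plan is to prove part~(1) first, splitting it into existence (a dimension argument) and uniqueness (an appeal to Mumford's theorem), and then to deduce part~(2) as a short formal consequence.

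For existence in part~(1), I would invoke two standard structural facts about actions of algebraic groups: every orbit $O_v$ is locally closed, i.e.\ open in its closure $\overline{O_v}$ (a consequence of Chevalley's theorem on constructible sets together with the homogeneity of $O_v$), and the complement $\overline{O_v}\setminus O_v$ is a closed $G$-stable set, hence a union of orbits, each of strictly smaller dimension than $O_v$. Granting these, among all orbits contained in $\overline{O_v}$ choose one of minimal dimension, say $O_u$ (possible since dimensions are nonnegative integers). If $O_u$ were not closed, then $\overline{O_u}\setminus O_u$ would be a nonempty $G$-stable subset of $\overline{O_v}$ containing an orbit of dimension $<\dim O_u$, contradicting minimality. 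Hence $O_u$ is closed in $\overline{O_v}$, and therefore in $V$. For uniqueness, suppose $O_{u_1},O_{u_2}\subseteq\overline{O_v}$ are two closed orbits; then $\overline{O_{u_1}}\cap\overline{O_{u_2}}=O_{u_1}\cap O_{u_2}=\emptyset$, so by Mumford's theorem (Theorem~\ref{thm:Mumford}) applied to the reductive group $G$ acting on $V$ there is $f\in\CC[V]^G$ with $f(u_1)\neq f(u_2)$. But an invariant polynomial is constant on $O_v$ and, being continuous, constant on $\overline{O_v}$; since $u_1,u_2\in\overline{O_v}$ this forces $f(u_1)=f(v)=f(u_2)$, a contradiction. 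Hence $O_{u_1}=O_{u_2}$, and we write $O_{\widetilde v}$ for this unique closed orbit.

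For part~(2): if $O_{\widetilde v}=O_{\widetilde w}$, this common orbit lies in both $\overline{O_v}$ and $\overline{O_w}$, so their intersection is nonempty. Conversely, if $Z:=\overline{O_v}\cap\overline{O_w}$ is nonempty, it is a closed $G$-stable subset of $V$, so by the dimension argument above it contains a closed orbit $O_u$. Then $O_u\subseteq\overline{O_v}$ is a closed orbit, so $O_u=O_{\widetilde v}$ by uniqueness; likewise $O_u=O_{\widetilde w}$, whence $O_{\widetilde v}=O_{\widetilde w}$. (Alternatively, one can give a purely Mumford-theoretic argument for the forward direction: $\overline{O_v}\cap\overline{O_w}\neq\emptyset$ means $f(v)=f(w)$ for all $f\in\CC[V]^G$; since $f(v)=f(\widetilde v)$ and $f(w)=f(\widetilde w)$ by invariance and continuity, $f(\widetilde v)=f(\widetilde w)$ for all invariants, so Mumford's theorem gives $\overline{O_{\widetilde v}}\cap\overline{O_{\widetilde w}}\neq\emptyset$, which for closed orbits means $O_{\widetilde v}=O_{\widetilde w}$.)

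I expect the main obstacle to be the uniqueness statement in part~(1): existence is essentially topological and dimension-theoretic and uses nothing about reductivity, but ruling out two disjoint closed orbits inside a single orbit closure genuinely requires the separation of disjoint closed $G$-stable sets by invariant functions, which is exactly the content of reductivity via Mumford's theorem. A secondary point needing care is the justification of the two structural facts used for existence (local closedness of orbits and the lower-dimensionality of boundary orbits); these are classical, so I would simply cite them, e.g.\ from \cite{kraft} or \cite{Mumford-book}, rather than reprove them.
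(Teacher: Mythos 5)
Your proof is correct, and it takes a somewhat more self-contained route than the paper. For part~(1), the paper simply cites~\cite[Theorem~2.3.6]{DK} for both existence and uniqueness of the closed orbit, whereas you re-derive it: existence from the standard dimension/local-closedness argument (which, as you correctly note, holds for any algebraic group action and does not use reductivity), and uniqueness from Mumford's theorem (the step where reductivity genuinely enters). For part~(2), the paper argues entirely via Mumford's theorem in contrapositive form: if $O_{\widetilde v}\ne O_{\widetilde w}$ then these two \emph{closed} orbits are separated by an invariant $f$, and continuity transfers $f(\widetilde v)\ne f(\widetilde w)$ to $f(v)\ne f(w)$, hence $\overline{O_v}\cap\overline{O_w}=\emptyset$; the converse direction is immediate since $O_{\widetilde v}\subseteq\overline{O_v}$. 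Your primary argument for part~(2) instead re-applies the existence step to the nonempty closed $G$-stable set $Z=\overline{O_v}\cap\overline{O_w}$ and then uses uniqueness twice; this is a clean alternative, and your parenthetical ``Mumford-theoretic'' variant is essentially identical to the paper's argument. Both decompositions are valid; the paper trades self-containedness for brevity by citing \cite{DK}, while yours makes the reliance on the dimension argument and on reductivity explicit.
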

\begin{proof}
(1) The first assertion is \cite[Theorem~2.3.6]{DK}.

(2) For the second assertion, if the orbit closures $\overline{O_v}$ and $\overline{O_w}$ are disjoint, then so are the orbits~$O_{\widetilde{v}}$ and~$O_{\widetilde{w}}$, which therefore must be different.
Conversely, suppose $O_{\widetilde{v}} \neq O_{\widetilde{w}}$.
Since these orbits are closed, by Theorem~\ref{thm:Mumford},
there is an invariant $f \in \CC[V]^G$ such that $f(\widetilde{v}) \neq f(\widetilde{w})$.
By continuity, $f(v) = f(\widetilde{v}) \neq f(\widetilde{w}) = f(w)$, which implies $\overline{O_v} \cap \overline{O_w} = \emptyset$ by another application of Theorem~\ref{thm:Mumford}.
\end{proof}

Part(2) of this theorem shows that the orbit closure intersection problem can be reduced to the orbit equality problem,
provided we can compute the unique closed orbit $O_{\widetilde{v}}$ contained in $\overline{O_v}$.
We will see in Section~\ref{sec:oci-to-oe} that if the group~$G$ is a torus, this can be achieved in polynomial time.

Another key result in understanding orbit closures is the Hilbert--Mumford criterion.
A {\em one-parameter subgroup} of $G$ is a morphism of algebraic groups $\sigma \colon \CC^\times \rightarrow G$.
For a representation of~$G$ on a vector space~$V$, we say that a subset $S \subseteq V$ is $G$-stable if $g \cdot s \in S$ for all $g \in G$, $s \in S$.


\begin{theorem}[Hilbert--Mumford criterion, \cite{Hilb2,Mumford-book}]\label{th:HM-crit}
Let $\rho \colon G \rightarrow \GL(V)$ be a rational representation of a reductive group $G$.
Suppose $S \subseteq V$ is a $G$-stable closed subvariety of $V$ and let~$v \in V$ such that $\overline{O_v} \cap S \neq \emptyset$.
Then there exists a one-parameter subgroup $\sigma \colon \CC^\times \rightarrow G$ such that $\lim_{\epsilon \to 0} \sigma(\epsilon) \cdot v \in S$.
\end{theorem}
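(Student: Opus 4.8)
The criterion is classical~\cite{Hilb2,Mumford-book}; we only indicate how one would prove it. The plan is to reduce, using the structure of orbit closures, to a purely formal degeneration over the field $\CC((t))$ of Laurent series, and then to read off the desired one-parameter subgroup from the Cartan decomposition of the loop group $G(\CC((t)))$.

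\emph{Reductions.} Because $S$ is $G$-stable, it is enough to find a one-parameter subgroup $\sigma$ of $G$ together with an element $g\in G$ such that $\lim_{\epsilon\to0}\sigma(\epsilon)\cdot(gv)$ exists and lies in $S$: then $g^{-1}\sigma g$ degenerates $v$ itself into $S$. This freedom to replace $v$ by a translate lets us assume $G$ connected (decompose $\overline{O_v}$ into the finitely many $G^\circ$-orbit closures of translates of $v$, one of which meets $S$). Next, pick $u\in\overline{O_v}\cap S$. Since $S$ is closed and $G$-stable, $\overline{O_u}\subseteq S$, and as $\overline{O_u}\subseteq\overline{O_v}$, Theorem~\ref{th:uniqueclosedorbit}(1) forces the unique closed orbit $O_{\widetilde v}$ of $\overline{O_v}$ into $\overline{O_u}\subseteq S$. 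So it suffices to produce a one-parameter subgroup $\sigma$ with $\lim_{\epsilon\to0}\sigma(\epsilon)\cdot v\in O_{\widetilde v}$.

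\emph{Main step.} Since $O_v$ is constructible by Chevalley's theorem and $\widetilde v\in\overline{O_v}$, the valuative criterion (curve selection) yields a point $\gamma(t)\in G(\CC((t)))$, after possibly replacing $t$ by a root $t^{1/n}$, with $\gamma(t)\cdot v\in V(\CC[[t]])$ and $(\gamma(t)\cdot v)|_{t=0}=\widetilde v$. By the Cartan (elementary-divisor) decomposition of the loop group, $G(\CC((t)))=\bigcup_\mu G(\CC[[t]])\,\mu(t)\,G(\CC[[t]])$ with $\mu$ ranging over cocharacters of a maximal torus, we write $\gamma(t)=a(t)\,\mu(t)\,b(t)$ with $a,b\in G(\CC[[t]])$ and $\mu$ a one-parameter subgroup of $G$. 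Put $g:=b(0)\in G$, so $gv\in O_v$. Then $\mu(t)\cdot(b(t)\cdot v)=a(t)^{-1}\cdot(\gamma(t)\cdot v)\in V(\CC[[t]])$, with value $a(0)^{-1}\widetilde v\in O_{\widetilde v}$ at $t=0$. Decompose $V=\bigoplus_{m}V_m$ into $\mu$-weight spaces. Pole-freeness of $\mu(t)\cdot(b(t)\cdot v)$ forces the components $(b(t)\cdot v)_m$ with $m<0$ to vanish at $t=0$, hence $gv\in\bigoplus_{m\ge0}V_m$, and $p:=\lim_{\epsilon\to0}\mu(\epsilon)\cdot(gv)$ exists and equals the weight-$0$ part $(gv)_0$. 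Comparing with the value at $t=0$ computed above gives $a(0)^{-1}\widetilde v=p+w$ for some $w\in\bigoplus_{m<0}V_m$; therefore $\lim_{\epsilon\to0}(-\mu)(\epsilon)\cdot(a(0)^{-1}\widetilde v)=p$, so $p$ lies in $\overline{O_{a(0)^{-1}\widetilde v}}=\overline{O_{\widetilde v}}=O_{\widetilde v}$, the last equality because $O_{\widetilde v}$ is closed. Conjugating $\mu$ by $g$ then finishes the argument.

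\emph{Where the work is.} The two non-elementary ingredients are (i) the valuative criterion producing the formal arc $\gamma(t)\in G(\CC((t)))$ out of the mere fact that $\widetilde v\in\overline{O_v}$, and (ii) the Cartan decomposition of $G(\CC((t)))$, which is what actually extracts a one-parameter subgroup from $\gamma(t)$; together these form the technical heart of the Hilbert--Mumford criterion, and everything after them is a short computation with $\mu$-weight spaces. We refer to~\cite{Hilb2,Mumford-book} for the complete arguments.
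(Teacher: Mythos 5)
The paper does not prove Theorem~\ref{th:HM-crit}; it quotes it as a classical result with citations to \cite{Hilb2,Mumford-book}, so there is no internal proof to compare against. Your sketch is a correct outline of the standard argument found in those references: reduce (via $G$-stability of~$S$ and the unique closed orbit from Theorem~\ref{th:uniqueclosedorbit}) to degenerating $v$ into $O_{\widetilde v}$; produce a formal arc $\gamma(t)\in G(\CC((t)))$ by the valuative criterion applied to the constructible orbit; factor $\gamma(t)=a(t)\mu(t)b(t)$ by Iwahori's Cartan decomposition of the loop group; and then read off a one-parameter subgroup $g^{-1}\mu g$ by a short weight-space computation. The weight-space bookkeeping (pole-freeness forcing $gv\in\bigoplus_{m\ge 0}V_m$, the specialisation $a(0)^{-1}\widetilde v=p+w$ with $w\in\bigoplus_{m<0}V_m$, and recovering $p\in O_{\widetilde v}$ by letting $-\mu$ act and using that $O_{\widetilde v}$ is closed) checks out, and the two ingredients you flag as nontrivial are indeed the technical heart of the theorem. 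So the sketch is sound and faithful to the cited sources; nothing to correct.
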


A particular use of the above theorem is to take $S = \{0\}$ or $S = O_{\tilde v}$.
When $G$ is a torus, the set of one-parameter subgroups has the structure of a $\ZZ$-lattice. We will discuss this further in the next section.

We end this section by introducing a key notion in invariant theory called the \emph{null cone}, whose significance will become clear in later sections.
For a collection $F$ of polynomials in $\CC[V]$, we denote by~$\mathbb{V}(F)$ their common zero locus in $V$.

\begin{definition}[Null cone]\label{def:null}
Let $\rho\colon G \rightarrow \GL(V)$ be a rational representation of a reductive group~$G$.
Then the \emph{null cone} is defined as
\[
\mathcal{N}_G(V) := \mathcal{N}(\rho) := \{v \in V\ |\ 0 \in \overline{O_v}\}.
\]
It can also be defined as the common zero locus of all invariant polynomials without constant part:
\[
\mathcal{N}_G(V) := \mathcal{N}(\rho) := \mathbb{V}(\bigcup_{d > 0} \CC[V]^G_d),
\]
where $\CC[V]^G_d$ denotes the space of invariant polynomials that are homogeneous of degree $d$.
The equivalence of the two definitions of the null cone follows from Theorem~\ref{thm:Mumford}.
\end{definition}

\section{Invariants and orbit closures of torus actions}\label{sec:rep-tori}
Invariant theory for general reductive groups can get very complicated.
However, for representations of tori, that is, \emph{commutative} connected reductive groups, a lot of the theory can be viewed as a combination of linear algebra and the study of convex polytopes.
We will collect important results regarding torus actions in this section and refer the reader to~\cite{Wehlau93, DK} for more details.
All the results in this section are already known or can be deduced from the existing literature, and we provide proof sketches for completeness.
Note that tori are reductive groups, so the results of the previous section hold in this setting.

We will first briefly recall torus actions and the notions of characters/weights, one-parameter subgroups and how weight matrices define a representation. Then, we give a linear algebraic description of invariant rings by determining the monomials that are invariant. Then, we describe a polyhedral perspective on orbits. In particular given a point $v$ in the vector space of the representation, we define a polyhedral cone, called the Newton cone. The Newton cone can be used to determine whether $v$ is in the null cone and moreover we give a correspondence between the faces of the Newton cone to orbits in the orbit closure of $v$, which is crucial in understanding the orbit closure containment problem.

For this entire section, fix a torus $T = (\CC^\times)^d$.\footnote{Any commutative connected reductive group is isomorphic to some $(\CC^\times)^d$. Important examples include ${\rm T}_d$, the group of diagonal $d \times d$ invertible matrices and its subgroup ${\rm ST}_d$ consisting of diagonal matrices with determinant $1$.}

\subsection{Representations and invariants}\label{subsec:rep inv tori}
As described in Section~\ref{subsec:intro-main-results}, any representation of a torus~$T$ is a ``scaling'' action
(after identifying $V$ with $\CC^n$ by an appropriate choice of basis).
Namely, each coordinate of $v \in \CC^n$ is multiplied by some (Laurent) monomial $\prod_{i=1}^d t_i^{\lambda_i}$ for integers~$\lambda_i \in \ZZ$.
These monomials (succinctly described by the so-called weight matrix, see below) together specify the representation. 
We now make this more precise.

A 1-dimensional (rational) representation is called a \emph{character} or a \emph{weight}.
Let $\mathcal{X}(T)$ denote the set of weights of $T$, which forms a group where the binary operation is  (pointwise) multiplication of functions.
To each $\lambda = (\lambda_1,\dots,\lambda_d) \in \ZZ^d$, we associate a weight, also denoted $\lambda$ by slight abuse of notation, namely
\[ \lambda\colon T \rightarrow \CC^\times, \quad \lambda(t) = \prod_{i=1}^d t_i^{\lambda_i}, \]
which gives an identification of abelian groups $\ZZ^d \cong \mathcal{X}(T)$.

Let $\rho \colon T \rightarrow \GL(V)$ be a (rational) representation of~$T$ where $V$ is an $n$-dimensional vector space.
We can choose a basis of $V$ consisting of weight vectors, wherein a vector~$v \in V$ is called a \emph{weight vector}
of weight~$\lambda \in \mathcal{X}(T)$ if $t \cdot v = \lambda(t) v$ for all $t \in T$.
Once we have chosen a weight basis, using the identification~$\mathcal{X}(T) \cong \ZZ^d$, the corresponding $n$ weights
can be collected into a $d \times n$ matrix with integer entries, which we call the \emph{weight matrix} of the representation.
Up to permutation of the columns, it is independent of the choice of weight basis, and it classifies the representation up to isomorphism.
Concretely, a matrix $M = (m_{ij}) \in \Mat_{d,n}(\ZZ)$ describes the representation $\rho_M \colon T \rightarrow \GL_n(\CC)$
defined in \eqref{eq:toract}.
That is, for $t = (t_1,\dots,t_d)$ and $v = (v_1,\dots,v_n) \in \CC^n$, we have
\[
 t \cdot v = \rho_M(t) v = \left( \left(\prod_{i=1}^d t_i^{m_{i1}}\right) v_1, \left(\prod_{i=1}^d t_i^{m_{i2}}\right) v_2, \dots, \left(\prod_{i=1}^d t_i^{m_{in}}\right) v_n \right).
\]
The matrix $M$ is the weight matrix for this action.
The $j^{th}$ standard basis vector~$e_j$ is a weight vector of weight~$m^{(j)} = (m_{1j}, m_{2j},\dots, m_{dj}) \in \ZZ^d = \mathcal{X}(T)$.
Note that $m^{(j)}$ is the $j^{th}$ column vector of $M$.

For the rest of this section, we fix an $n$-dimensional representation $\rho_M \colon T \rightarrow \GL_n(\CC)$ of the torus~$T=(\CC^\times)^d$
given by a weight matrix $M \in \Mat_{d,n}(\ZZ)$ with columns $m^{(j)}$ for $j\in[n]$.
The following well-known result describes the invariant ring of this action (see, e.g., \cite[Section~3]{DM-exp}):

\begin{proposition}\label{pro:3.1}~
\begin{enumerate}
\item\label{it:semigroup} Let $c\in\NN^n$. A monomial $x^c=\prod_j x_j^{c_j}$ is invariant if and only if $\sum_j c_j m^{(j)} = 0$;
\item The invariant ring $\CC[x_1,\dots,x_n]^T$ is spanned as a vector space by the invariant monomials.
\end{enumerate}
\end{proposition}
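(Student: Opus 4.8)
The plan is to reduce the whole statement to an explicit computation of how $T$ acts on monomials, together with the fact that monomials form a basis of the polynomial ring. First I would compute the action on the coordinate functions. Using the definition $(t\cdot f)(v) := f(\rho_M(t)^{-1}v)$ and the diagonal form of $\rho_M$ in \eqref{eq:toract}, one reads off $t\cdot x_j = \bigl(\prod_{i=1}^d t_i^{-m_{ij}}\bigr)\, x_j$, i.e. $x_j$ is a weight vector of weight $-m^{(j)}$. Multiplying these together, for any $c\in\NN^n$ we get
\[
  t\cdot x^c \;=\; \prod_{j=1}^n (t\cdot x_j)^{c_j} \;=\; \Bigl(\prod_{i=1}^d t_i^{-\sum_{j} c_j m_{ij}}\Bigr)\, x^c \;=\; \chi_c(t)\, x^c,
\]
where $\chi_c\in\mathcal{X}(T)$ is the character corresponding to $-\sum_j c_j m^{(j)}\in\ZZ^d$ under the identification $\mathcal{X}(T)\cong\ZZ^d$.

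For part~(\ref{it:semigroup}), $x^c$ is invariant if and only if $\chi_c(t)=1$ for all $t\in T$. Since $\mathcal{X}(T)\cong\ZZ^d$ is a group isomorphism — equivalently, since the Laurent monomials $\prod_i t_i^{a_i}$ for distinct $a\in\ZZ^d$ are linearly independent functions on $(\CC^\times)^d$ — the character $\chi_c$ is trivial precisely when its exponent vector $\sum_j c_j m^{(j)}$ is the zero vector. This yields the stated equivalence.

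For part~(2), the key point is that $\{x^c : c\in\NN^n\}$ is a vector-space basis of $\CC[x_1,\dots,x_n]$ and, by the computation above, each $x^c$ is a weight vector. Given an invariant $f=\sum_c \alpha_c x^c$ (a finite sum), the condition $t\cdot f = f$ becomes $\sum_c \alpha_c\chi_c(t)\,x^c = \sum_c \alpha_c x^c$ for every $t\in T$; comparing coefficients on the basis $\{x^c\}$ gives $\alpha_c(\chi_c(t)-1)=0$ for all $t$, hence $\chi_c$ is trivial whenever $\alpha_c\neq 0$. By part~(1), every monomial occurring in $f$ is then invariant, so $f$ lies in the span of the invariant monomials; the reverse inclusion is immediate since a linear combination of invariant monomials is fixed by $T$. (Alternatively, one can phrase this via the weight-space decomposition $\CC[x_1,\dots,x_n]=\bigoplus_{\mu\in\mathcal{X}(T)}\CC[x_1,\dots,x_n]_\mu$ of the rational $T$-module $\CC[x_1,\dots,x_n]$ and observe that $\CC[x_1,\dots,x_n]^T$ is exactly the weight-zero component $\CC[x_1,\dots,x_n]_0$.)

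I do not expect a genuine obstacle: the argument is entirely bookkeeping with the explicit scaling action. The only step that deserves a moment's care is the passage from ``$\chi_c$ is trivial as a function on $T$'' to ``$\sum_j c_j m^{(j)}=0$'', which is precisely the linear independence of distinct characters of a torus (already implicit in the identification $\mathcal{X}(T)\cong\ZZ^d$ recalled in \S\ref{subsec:rep inv tori}).
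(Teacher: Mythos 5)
Your proof is correct and follows essentially the same route as the paper's: compute the scaling action of $T$ on monomials, observe that $x^c$ is a weight vector with character determined by $\sum_j c_j m^{(j)}$, and then compare coefficients in the monomial basis to deduce that an invariant polynomial is a sum of invariant monomials. The extra details you supply (linear independence of distinct characters, the optional weight-space decomposition remark) are sound and just make explicit steps the paper leaves implicit.
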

\begin{proof}
For the action~$\rho$ of $G$ on $V$, there is a natural induced action of $G$ on the ring of polynomial functions~$\CC[V]$ defined by the formula $g \cdot f (v) := f(\rho(g)^{-1}v)$.
Applying this for the action $\rho_M$, we get an induced action of $T$ on $\CC[x_1,\dots,x_n]$.
It is easy to compute this action:
for a monomial~$x^c$ and $t \in T$, we have $t \cdot x^c = \lambda(t)^{-1} \, x^c$, where $\lambda \in \mathcal{X}(T)$ is the character corresponding to~$\sum_j c_j m^{(j)} \in \ZZ^d$.
It follows that the monomials which are invariant are precisely the ones for which $\sum_j c_j m^{(j)} = 0$, the trivial character, proving the first part.
The second part follows from the observation that a polynomial is invariant if and only if each monomial that occurs in it is invariant.
\end{proof}

Part~(\ref{it:semigroup}) of Proposition~\ref{pro:3.1} shows that the invariant monomials are in one-to-one correspondence with the nonnegative integer vectors in the kernel of the weight matrix.
Accordingly, they form a semigroup.
In general, such semigroups can have a large number of generators, which explains the difficulty of using polynomial invariants~\cite{durand-et-al:02}.
Our key idea to obtain efficient algorithms will be to instead consider invariant Laurent monomials, which form a lattice rather than a semigroup.
We will return to this in Section~\ref{sec:gen-inv}.

In turns out that the weights lead to a strong link to convex polyhedral geometry, which in turn characterizes the orbits in an orbit closure.
For this, we make the following definitions.
The \textit{support} of a vector $v\in \CC^n$ is defined as
\[
  \supp(v) := \{j\in [n]\mid v_j\neq 0\}.
\]
Let us record some of the properties of the support. By dimension (of an orbit, orbit closure, algebraic group, etc), we mean the dimension of the underlying variety.

\begin{lemma} \label{lem:supp-dim}
For $v,w \in \CC^n$ we have:
\begin{enumerate}
\item If $O_v = O_w$, then $\supp(v) = \supp(w)$.
\item If $\supp(v) = \supp(w)$, then $\dim O_v = \dim O_w$.
\item\label{it:supp down cl} If $w \in \overline{O_v}$, then $\supp(w) \subseteq \supp(v)$.
This inclusion is strict if and only if $w \in \overline{O_v} \setminus O_v$.
\end{enumerate}
\end{lemma}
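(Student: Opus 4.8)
The plan is to prove the three parts in order, each time exploiting that the torus acts on each coordinate by a nonzero scalar, namely $t\cdot v=\rho_M(t)v$ with $(\rho_M(t)v)_j=\lambda^{(j)}(t)\,v_j$ where $\lambda^{(j)}(t)=\prod_i t_i^{m_{ij}}\in\CC^\times$. The crucial elementary observation I would isolate first is that for any $t\in T$ and any $j\in[n]$, the $j$th coordinate of $t\cdot v$ is zero if and only if $v_j$ is zero, since $\lambda^{(j)}(t)\neq 0$. Hence $\supp(t\cdot v)=\supp(v)$ for every $t\in T$, so the support is constant along the orbit $O_v$. Part~(1) is then immediate: if $O_v=O_w$ then $w\in O_v$, so $\supp(w)=\supp(v)$.

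For part~(2), I would use that the dimension of an orbit equals $\dim T-\dim\stab(v)$, so it suffices to show $\dim\stab(v)$ depends only on $\supp(v)$. The stabilizer is $\stab(v)=\{t\in T\mid \lambda^{(j)}(t)=1\text{ for all }j\in\supp(v)\}$, which is cut out precisely by the characters indexed by the support, and hence depends only on $S:=\supp(v)$. Concretely, writing $M_S$ for the submatrix of columns in $S$, the identity component of $\stab(v)$ has dimension $d-\rk M_S$; in any case the dimension is a function of $S$ alone, so $\supp(v)=\supp(w)$ forces $\dim O_v=\dim O_w$. (Alternatively, and perhaps cleaner to write, one can note that $v$ and any $w$ with the same support lie in the same ``torus orbit of supports'' $X_S$, and the restricted action on $X_S$ is the same for all such points up to coordinatewise rescaling by the fixed nonzero scalars $w_j/v_j$, which is a diffeomorphism intertwining the two orbits; hence the orbits are isomorphic as varieties and in particular equidimensional.)

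For part~(3), the inclusion $\supp(w)\subseteq\supp(v)$ follows because for each $j\notin\supp(v)$ the coordinate function $x_j$ vanishes identically on $O_v$ (by the support-constancy above), hence on its closure $\overline{O_v}$, so $w_j=0$. For the ``strict if and only if'' statement: if $w\in O_v$ then $\supp(w)=\supp(v)$ by part~(1), so the inclusion is not strict; conversely, if $w\notin O_v$ then I claim the inclusion must be strict. This is the step I expect to require the most care. The idea is that if $w\in\overline{O_v}$ and $\supp(w)=\supp(v)=:S$, then both $v$ and $w$ lie in the locally closed subset $X_S$, and one shows $O_v$ is closed inside $X_S$ — equivalently, that every point of $\overline{O_v}$ with full support $S$ already lies in $O_v$. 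To see this I would invoke the Hilbert--Mumford criterion (Theorem~\ref{th:HM-crit}) or, more directly, the structure of torus orbit closures: a point $w\in\overline{O_v}\setminus O_v$ is a limit $\lim_{\epsilon\to 0}\sigma(\epsilon)\cdot v'$ along some one-parameter subgroup $\sigma$ for some $v'\in O_v$, and such a limit can only exist (and lie in $\overline{O_v}$) if it kills at least one coordinate, i.e. the pairing of $\sigma$ with $m^{(j)}$ is strictly positive for some $j\in S$ while being nonnegative on all of $S$; this forces $w_j=0$, so $\supp(w)\subsetneq S$. Thus $\supp(w)=\supp(v)$ together with $w\in\overline{O_v}$ implies $w\in O_v$, completing the contrapositive. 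The only subtlety is justifying the one-parameter-subgroup description of boundary points, which is exactly where the reductivity of $T$ and the Hilbert--Mumford machinery of Section~\ref{sec:inv} enter; everything else is the bookkeeping of nonzero monomial scalars.
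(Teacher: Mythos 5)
Your proof is correct, and parts (1), (2), and the inclusion in (3) coincide with the paper's argument (the coordinatewise-rescaling diffeomorphism you offer as an alternative for (2) is a pleasant variant but amounts to the same stabilizer observation). Where you genuinely diverge is in the ``strict if and only if'' direction of (3). The paper's proof is a short dimension argument: it cites the standard fact (from Humphreys, \S8.3) that $\overline{O_v}\setminus O_v$ is Zariski closed of dimension strictly less than $\dim O_v$, so $\dim O_w < \dim O_v$, and then invokes the contrapositive of part (2) to conclude $\supp(w)\subsetneq\supp(v)$. You instead argue via one-parameter subgroups, using the torus Hilbert--Mumford theorem (Theorem~\ref{thm:hmtori}) together with the explicit description of $1$-PS limits (Lemma~\ref{lem:tori-1psg}): if the limit along $\sigma$ has full support $S$ then all pairings $\nu\cdot m^{(j)}$, $j\in S$, vanish, so the limit is already in $O_v$. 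Both routes are sound and of comparable depth; the paper's is shorter and avoids any forward reference, while yours is more concrete and dovetails with the polyhedral machinery developed immediately afterwards. One small caution: the general Hilbert--Mumford criterion (Theorem~\ref{th:HM-crit}) alone does \emph{not} suffice here — applied with the closed $T$-stable set $\overline{O_w}$ it only yields a limit in $\overline{O_w}$, not in $O_w$, so the limit's support could still drop. You do need the torus-specific strengthening (Theorem~\ref{thm:hmtori}), which your ``or, more directly'' clause hints at but should be made the primary citation.
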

\begin{proof}
(1) is clear, since each coordinate simply gets rescaled by a nonzero number by the group action.
For (2) we note that the stabilizer group $\stab(v)$ of $v$ only depends on $\supp(v)$.
The claim follows using $\dim O_v = d - \dim \stab(v)$.
For~(3), the inclusion of supports holds since taking limits can never increase the support.
Finally, it is known~\cite[\S8.3]{Humphreys:75} that $\overline{O_v} \setminus O_v$ is a Zariski closed subset of dimension
strictly less than $\dim O_v$. Hence $w \in \overline{O_v} \setminus O_v$ implies $\dim O_w < \dim O_v$
and therefore $\supp(w) \subsetneq \supp(v)$ by part~(2).
\end{proof}

\subsection{Newton cone and orbit closures}
We define the \emph{Newton cone}~$C(v)$ of a vector~$v\in \CC^n$ to be the rational polyhedral cone generated by
the weights corresponding to the indices in the support, that is,
\[
 C(v) := \Big\{\sum_{j\in\supp(v)} c_j m^{(j)} \mid c_j \ge 0 \Big\}\subseteq\RR^d .
\]
The {\em lineality space} of the cone $C(v)$ is defined as
$L(v) := C(v) \cap (-C(v))$. Clearly, it is the largest linear subspace contained in $C(v)$.
The cone $C(v)$ is called {\em pointed} iff $L(v)=0$.
(Compare~\cite{schrijver:86} for the structure of polyhedral cones.)

These notions are standard in geometric programming, which essentially studies optimization problems associated with torus actions, albeit often with a different representation and motivation; see, e.g.,~\cite{BLNW:20} and references therein.
The connection is particularly apparent and useful in the study of polynomial capacities which have important applications to approximate counting~\cite{linial2000deterministic,gurvits2004combinatorial}.

We will see that the Newton cone contains all the information about the orbits contained in an orbit closure.
To start, we show that membership in the null cone can be characterized as follows.
Define the \emph{essential support} of a vector $v \in V$ as
\begin{equation}\label{def:esupp}
 \esupp(v) := \{ j \in \supp(v) \mid m^{(j)} \in L(v) \}.
\end{equation}

\begin{lemma}\label{lem-esupp}
Let $k \in \supp(v)$.
We have $k \in \esupp(v)$ if and only if there exists an invariant monomial $\prod_{j \in \supp(v)} x_j^{c_j}$ with $c_j \in \ZZ_{\geq 0}$ such that $c_k > 0$.
\end{lemma}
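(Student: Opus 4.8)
The plan is to unwind the definition of $\esupp$ in terms of the Newton cone and then match nonnegative-integer kernel vectors of the weight matrix with invariant monomials via Proposition~\ref{pro:3.1}(\ref{it:semigroup}). Throughout, write $S := \supp(v)$ and let $M_S$ denote the submatrix of $M$ with columns $\{m^{(j)} : j \in S\}$. Recall that $k \in \esupp(v)$ means precisely $m^{(k)} \in L(v) = C(v) \cap (-C(v))$.

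First I would handle the direction ``$\Leftarrow$''. Given an invariant monomial $\prod_{j\in S} x_j^{c_j}$ with $c_j \in \ZZ_{\geq 0}$ and $c_k > 0$, Proposition~\ref{pro:3.1}(\ref{it:semigroup}) gives $\sum_{j\in S} c_j m^{(j)} = 0$, whence $-m^{(k)} = \sum_{j \in S,\, j \neq k} (c_j/c_k)\, m^{(j)}$ is a nonnegative combination of the generators of $C(v)$ (here we use $c_k>0$). Thus $-m^{(k)} \in C(v)$; since trivially $m^{(k)} \in C(v)$ as $k \in S$, we conclude $m^{(k)} \in C(v)\cap(-C(v)) = L(v)$, i.e.\ $k \in \esupp(v)$.

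Next, for ``$\Rightarrow$'', suppose $k \in \esupp(v)$. Then $-m^{(k)} \in L(v) \subseteq C(v)$, so $-m^{(k)} = \sum_{j\in S} a_j m^{(j)}$ for some real $a_j \ge 0$. Rearranging shows that the polyhedron $P := \{ b \in \RR^S : M_S b = 0,\ b \ge 0,\ b_k \ge 1\}$ is nonempty (it contains the vector with $b_k = a_k+1$ and $b_j = a_j$ for $j \neq k$). Since $P$ is cut out by rational data, it contains a rational point $b^*$; clearing denominators, $c := N b^* \in \ZZ_{\ge 0}^S$ for a suitable positive integer $N$ satisfies $\sum_{j\in S} c_j m^{(j)} = 0$ with $c_k \ge N > 0$. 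By Proposition~\ref{pro:3.1}(\ref{it:semigroup}), $\prod_{j\in S} x_j^{c_j}$ is then an invariant monomial of the required form.

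The only genuinely nontrivial step — and the one I would flag as the main obstacle — is the passage from the \emph{real} nonnegative combination witnessing $-m^{(k)} \in C(v)$ to an \emph{integral} one with $c_k > 0$. This is resolved by the standard fact that a nonempty polyhedron defined by rational (in)equalities contains a rational point, or equivalently by a Carath\'eodory argument that extracts a linearly independent subsystem of columns whose (automatically rational) solution is then scaled to be integral. Everything else is a routine translation between the cone-theoretic and monomial descriptions, using Proposition~\ref{pro:3.1} in both directions.
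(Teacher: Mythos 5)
Your proof is correct and follows essentially the same route as the paper's: the paper's proof is a one-liner that says ``it is easy to see that $m^{(k)} \in L(v)$ iff there is a nonnegative integral combination $\sum_{j\in\supp(v)} c_j m^{(j)}=0$ with $c_k>0$'' and then invokes Proposition~\ref{pro:3.1}, and your argument just fills in the details of that ``easy to see'' step, including the (correct) passage from a real nonnegative combination to an integral one via rationality of the polyhedron.
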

\begin{proof}
It is easy to see that $m^{(k)} \in L(v)$ if and only if there is a non-negative integral linear combination $\smash{\sum_{j \in \supp(v)} c_j m^{(j)}} = 0$ with $c_k > 0$.
By Proposition~\ref{pro:3.1}, this is equivalent to the existence of an invariant monomial $\smash{\prod_{j \in \supp(v)} x_j^{c_j}}$ with $c_j \in \ZZ_{\geq 0}$ such that $c_k > 0$.
\end{proof}

\begin{corollary}\label{cor:null cone}
We have that $v$ is in the null cone $\mathcal{N}(\rho_M)$ if and only if $\esupp(v) = \emptyset$.
\end{corollary}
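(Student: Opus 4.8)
The plan is to prove both implications directly from the zero-locus description of the null cone in Definition~\ref{def:null}, feeding in Lemma~\ref{lem-esupp} and Proposition~\ref{pro:3.1}. Throughout I would use the second (polynomial) definition $\mathcal{N}(\rho_M) = \mathbb{V}(\bigcup_{d>0}\CC[V]^T_d)$ rather than the orbit-closure one; these agree by the remark following Definition~\ref{def:null}.

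For the ``only if'' direction I would argue by contraposition: assume $\esupp(v)\neq\emptyset$ and pick $k\in\esupp(v)$. By Lemma~\ref{lem-esupp} there is an invariant monomial $f=\prod_{j\in\supp(v)}x_j^{c_j}$ with all $c_j\geq 0$ and $c_k>0$. Regarded as an element of $\CC[x_1,\dots,x_n]$, this $f$ is homogeneous of degree $\sum_j c_j\geq c_k>0$, hence lies in $\CC[V]^T_d$ for $d=\deg f>0$. Since $v_j\neq 0$ for every $j\in\supp(v)$, we get $f(v)=\prod_{j\in\supp(v)}v_j^{c_j}\neq 0$, so $v$ does not lie in $\mathbb{V}(\bigcup_{d>0}\CC[V]^T_d)$, i.e.\ $v\notin\mathcal{N}(\rho_M)$.

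For the ``if'' direction, assume $\esupp(v)=\emptyset$. By Proposition~\ref{pro:3.1}(2) the invariant ring is spanned by invariant monomials, so every homogeneous invariant of positive degree is a $\CC$-linear combination of nonconstant invariant monomials; it therefore suffices to show that every nonconstant invariant monomial $x^c$ (with $0\neq c\in\NN^n$ and $\sum_j c_j m^{(j)}=0$) vanishes at $v$. Suppose not, so $x^c(v)=\prod_j v_j^{c_j}\neq 0$. Then $v_j\neq 0$ whenever $c_j>0$, i.e.\ $\{j:c_j>0\}\subseteq\supp(v)$. Picking any $k$ with $c_k>0$, the relation $\sum_j c_j m^{(j)}=0$ is a nonnegative integral combination supported inside $\supp(v)$ with $c_k>0$, so Lemma~\ref{lem-esupp} yields $k\in\esupp(v)$, contradicting $\esupp(v)=\emptyset$. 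Hence all nonconstant invariant monomials, and thus all invariants of positive degree, vanish at $v$, so $v\in\mathcal{N}(\rho_M)$.

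I do not expect a genuine obstacle here; the statement is essentially an unwinding of definitions. The only point needing a little care is the bookkeeping between ``monomials supported on $\supp(v)$'' as required in the hypothesis of Lemma~\ref{lem-esupp} and an arbitrary invariant monomial $x^c\in\CC[x_1,\dots,x_n]$: the key observation is that nonvanishing at $v$ forces $\{j:c_j>0\}\subseteq\supp(v)$, after which Lemma~\ref{lem-esupp} applies verbatim (padding $c$ with zeros on $\supp(v)\setminus\{j:c_j>0\}$ if one wishes to match notation exactly).
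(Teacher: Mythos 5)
Your proof is correct and follows the route the paper clearly intends: Corollary~\ref{cor:null cone} is placed immediately after Lemma~\ref{lem-esupp} as a direct consequence of that lemma combined with the zero-locus description of the null cone in Definition~\ref{def:null} and the monomial spanning statement of Proposition~\ref{pro:3.1}. Your handling of the one subtle bookkeeping point --- that nonvanishing of $x^c$ at $v$ forces $\{j: c_j>0\}\subseteq\supp(v)$, so the exponent vector may be regarded as supported on $\supp(v)$ and Lemma~\ref{lem-esupp} applies --- is exactly what is needed.
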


\noindent
Equivalently, $v$ is in the null cone if and only if $C(v)$ is pointed and $m^{(j)} \neq 0$ for all $j \in \supp(v)$.




In fact, much more can be said.
Let us first recall the notion of faces of polyhedral cones.
If $C(v)$ is contained in a closed halfspace $H_+$ of $\RR^d$ bounded by a linear hyperplane $H$,
then we call the intersection $F=H\cap C(v)$ a {\em face} of $C(v)$ when it is non-empty.
The cone itself is also considered a face of $C(v)$: by definition, it is the largest face of $C(v)$.
On the other hand, each face of $C(v)$ must contain the lineality space $L(v)$, which is therefore the smallest face of $C(v)$.

We will see shortly that the faces of $C(v)$ are in bijective correspondence with the orbits contained in $\overline{O_v}$.
For this, we need to introduce some more notation.
For a subset $J \subseteq \supp(v)$, we define the \emph{restriction}~$v|_{J}$ to be the vector with entries
\[
(v|_{J})_j = \begin{cases}
 v_j & \mbox{ if } j \in J,\\
 0 & \mbox{ otherwise,}
 \end{cases}
\] as its $j$-th coordinate.
Let now $F$ be a face of $C(v)$ defined by a closed half-space $H_+ = \{ y \in \RR^d \;|\; \nu \cdot y \geq 0 \}$ for some $\nu\in\RR^d$, that is,
\[
 F= \{ y\in C(V) \mid \nu \cdot y = 0 \}.
\]
Since $C(v)$ is rational, we may assume that $\nu$ has integer components.
We assign to $F$ the subset of indices
\[
 S_F := \{ j \in \supp(v) \mid m^{(j)} \in F \}
\]
and define $v_F := v|_{S_F}$.
Let us check that the orbit $O_{v_F}$ of $v_F$ is contained in $\overline{O_v}$.
The one-parameter subgroup $\sigma\colon \CC^\times \to T$ given by $\sigma(\epsilon) = (\epsilon^{\nu_1},\ldots,\epsilon^{\nu_d})$ satisfies
\begin{equation}\label{eq:SEV}
 \sigma(\epsilon) \cdot v = \rho_M(\sigma(\epsilon)) v = (\epsilon^{\nu \cdot m^{(1)}} v_1,\ldots,\epsilon^{\nu \cdot m^{(n)}} v_n) .
\end{equation}
It follows that $\lim_{\epsilon\to 0} \sigma(\epsilon) \cdot v = v_F$ and hence $v_F \in \overline{O_v}$.
The same reasoning shows that $v_{F} \in \overline{O_{v_{F'}}}$ if $F$ is a face contained in the face $F'$.

The following result is well known, see e.g., \cite[Example~1.3]{Popov-occ}, but we sketch a proof for completeness.

\begin{proposition} \label{prop:orbit-structure}
The map $F\mapsto O_{v_F}$ is a bijection between the set of faces of $C(v)$ and
the set of orbits contained in $\overline{O_v}$. Moreover, we have
$$
  F\subseteq F' \Longleftrightarrow \overline{O_{v_F}}\subseteq\overline{O_{v_{F'}}}\ .
$$
\end{proposition}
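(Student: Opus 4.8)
The plan is to establish the two claims in turn: first that $F \mapsto O_{v_F}$ is a bijection between faces of $C(v)$ and orbits in $\overline{O_v}$, and then the order-reversing (in the appropriate sense) compatibility with inclusion. I will freely use the setup already established before the statement: for a face $F$ cut out by a rational supporting hyperplane $\nu^\perp$, the one-parameter subgroup $\sigma(\epsilon) = (\epsilon^{\nu_1},\dots,\epsilon^{\nu_d})$ satisfies $\lim_{\epsilon\to 0}\sigma(\epsilon)\cdot v = v_F$, so that $v_F \in \overline{O_v}$, and more generally $v_F \in \overline{O_{v_{F'}}}$ whenever $F \subseteq F'$. This already gives the forward implication $F \subseteq F' \Rightarrow \overline{O_{v_F}} \subseteq \overline{O_{v_{F'}}}$ and also shows the map is well defined.

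For \textbf{surjectivity}, I would start from an arbitrary orbit $O_w \subseteq \overline{O_v}$. By Lemma~\ref{lem:supp-dim}\,(\ref{it:supp down cl}) we have $J := \supp(w) \subseteq \supp(v)$, and I claim $w$ lies in the same orbit as $v|_J$. One clean way: since $w \in \overline{O_v}$, by the Hilbert--Mumford criterion (Theorem~\ref{th:HM-crit}), applied with the $T$-stable closed set $S = X_{\subseteq J} := \{u : \supp(u) \subseteq J\}$, there is a one-parameter subgroup $\sigma$ with $\lim_{\epsilon\to 0}\sigma(\epsilon)\cdot v \in S$; writing $\sigma(\epsilon) = (\epsilon^{\nu_i})_i$ for some $\nu \in \ZZ^d$, formula~\eqref{eq:SEV} forces $\nu\cdot m^{(j)} \geq 0$ for all $j\in\supp(v)$ (else the limit does not exist) and the limit is exactly $v|_{J'}$ where $J' = \{j\in\supp(v) : \nu\cdot m^{(j)} = 0\}$. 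Then $F := \nu^\perp \cap C(v)$ is a face with $S_F = J'$ and $v_F = v|_{J'} \in \overline{O_v}$. It remains to match $J'$ with $J=\supp(w)$ and to see $O_w = O_{v_F}$: a torus orbit is determined by its support \emph{together with} the value of every invariant Laurent monomial on that support, so I would argue that on a fixed support all that matters is the essential support and the invariant-monomial values, and that these are preserved under the limit — in fact the limit $v_F$ and any $w$ of the same support in $\overline{O_v}$ must coincide up to the torus action. (A slightly more direct route for surjectivity: show every orbit in $\overline{O_v}$ contains a point of the form $v|_J$ for some $J \subseteq \supp(v)$ that is "face-closed", i.e.\ $J = S_F$ for a face $F$, by iterating the degeneration coming from a single supporting functional.)

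For \textbf{injectivity} and the reverse implication $\overline{O_{v_F}} \subseteq \overline{O_{v_{F'}}} \Rightarrow F \subseteq F'$, I would use supports: $\overline{O_{v_F}} \subseteq \overline{O_{v_{F'}}}$ gives $\supp(v_F) \subseteq \supp(v_{F'})$, i.e.\ $S_F \subseteq S_{F'}$; so it suffices to show that a face $F$ of $C(v)$ is recovered from $S_F$ as $F = C(v|_{S_F}) = \operatorname{cone}\{m^{(j)} : j \in S_F\}$. This is the standard fact that a face of a polyhedral cone is the cone on the generators it contains, provided $C(v)$ is generated by the $m^{(j)}$, $j \in \supp(v)$ — which it is by definition of $C(v)$. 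Granting that, $S_F \subseteq S_{F'}$ implies $F = \operatorname{cone}\{m^{(j)}: j\in S_F\} \subseteq \operatorname{cone}\{m^{(j)}: j\in S_{F'}\} = F'$, and also $F \mapsto S_F$ is injective, hence so is $F\mapsto O_{v_F}$ (using additionally that $O_{v_F}$ determines $\supp(v_F) = S_F$). Combining with the forward implication already noted, we get the displayed equivalence, and combining surjectivity with injectivity gives the bijection.

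The \textbf{main obstacle} I anticipate is the identification step in surjectivity: knowing $O_w \subseteq \overline{O_v}$ with $\supp(w) = S_F$, I must conclude $O_w = O_{v_F}$ rather than merely that both have support $S_F$. The cleanest argument is to restrict the whole picture to the coordinates in $S_F$, where $w$ and $v_F$ both have full support, note that $w \in \overline{O_{v_F}}$ by what we have shown (both are limits reachable inside $\overline{O_v}$, and one can compare dimensions via Lemma~\ref{lem:supp-dim}\,(2) to rule out $w$ lying on the boundary), and then invoke that on a fixed full support the orbit is closed relative to $X_{S_F}$ when the restricted cone is a subspace — or, failing a soft argument, directly compare values of a generating set of invariant Laurent monomials in the variables $\{x_j\}_{j\in S_F}$ on $w$ and $v_F$, which by Proposition~\ref{pro:3.1} and continuity must agree, forcing $O_w = O_{v_F}$. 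I would flag this as the one place to be careful; everything else is the standard face/cone dictionary plus the Hilbert--Mumford degeneration already spelled out in the text.
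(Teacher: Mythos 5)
Your overall skeleton matches the paper's: you establish well-definedness and one direction of the equivalence from the one-parameter-subgroup degeneration already set up before the proposition, recover $F$ from $S_F=\supp(v_F)$ as the cone on the weights it contains (giving injectivity and the reverse implication), and try to get surjectivity from Hilbert--Mumford. But there is a genuine gap precisely where you flag one, and it is the gap the paper is careful to address.

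For surjectivity you apply Theorem~\ref{th:HM-crit} to the closed, $T$-stable set $S = X_{\subseteq J}$ with $J=\supp(w)$. That only guarantees a one-parameter subgroup whose limit $v|_{J'}$ has $J' \subseteq J$; there is no reason $J'=J$, and no reason the limit lies in $O_w$. Indeed Theorem~\ref{th:HM-crit} requires the target set to be closed, and $O_w$ generally is not, so it cannot be applied with $S=O_w$. Your two proposed patches do not close this. The Laurent-monomial argument (which, made precise, shows that any two points of $\overline{O_v}$ with the \emph{same} support are in the same orbit, since all $x^c$ with $c\in L_{S}$ agree on them) only resolves the identification $O_w=O_{v_F}$ \emph{assuming} $\supp(w)$ is already known to equal $S_F$ for some face~$F$; it does not show that $\supp(w)$ must be face-closed in the first place. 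The ``iterate the degeneration'' remark, if carried out, would essentially be reproving the needed strengthening of Hilbert--Mumford from scratch, but it is not carried out. What the paper does instead is invoke Theorem~\ref{thm:hmtori} (Kraft's refinement for torus actions), which asserts that for a torus there \emph{is} a one-parameter subgroup whose limit lies in $O_w$ itself, not merely in a closed superset. With that single ingredient the limit is automatically of the form $v_F$ with $O_{v_F}=O_w$ (by Lemma~\ref{lem:tori-1psg}), and surjectivity follows cleanly. This is exactly the subtlety the text discusses immediately before the proposition, and it is what is missing from your argument.

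Everything else -- well-definedness via \eqref{eq:SEV}, injectivity via $F=\mathrm{cone}\{m^{(j)}: j\in S_F\}$, the deduction $F\subseteq F' \Rightarrow v_F\in\overline{O_{v_{F'}}}$ and its converse via supports -- agrees with the paper and is fine.
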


The proof of surjectivity relies on a strengthening of the Hilbert--Mumford criterion (Theorem~\ref{th:HM-crit}).
Recall this states that if we consider a closed subset $S$ that is stable under the group action and intersects the orbit closure of some point $v$,
then there is a one-parameter subgroup that will drive $v$ to a point in $S$ in the limit.
However, a subtle point is that this requires $S$ to be closed. In general, orbits are not closed, so a point $w$ could be in the orbit closure of a point $v$, but the orbit of $w$ may not be closed.
In this case, Theorem~\ref{th:HM-crit} does not apply to $S=O_w$, and indeed the orbit of~$w$ need not be reachable from~$v$ by a limit of a one-parameter subgroup.
The following theorem shows that for torus actions such a phenomenon does not happen.
This crucial fact will also prove useful for us algorithmically in Section~\ref{sec:occ}.

\begin{theorem}[\cite{kraft}, Kapitel~III.2.2]{\label{thm:hmtori}}
Let $\rho\colon T \rightarrow \GL(V)$ be a rational representation.
Suppose $v,w \in V$ are such that $w\in\overline{O_v}$.
Then there exists a one-parameter subgroup $\sigma\colon\CC^\times\rightarrow T$ such that
\[
\lim_{\epsilon\to 0}\sigma(\epsilon) \cdot v \in O_w.
\]
\end{theorem}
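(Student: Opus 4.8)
The plan is to pass to coordinates, reduce the statement to a feasibility question about the Newton cone, and settle that feasibility by a theorem of the alternative whose ``bad'' side is excluded using the hypothesis $w \in \overline{O_v}$. Choose a weight basis of $V$ identifying $\rho$ with the scaling action $\rho_M$ for a weight matrix $M \in \Mat_{d,n}(\ZZ)$ with columns $m^{(1)},\dots,m^{(n)}$, and set $S := \supp(w)$ and $S' := \supp(v)$, so $S \subseteq S'$ by Lemma~\ref{lem:supp-dim}. Every one-parameter subgroup of $T$ has the form $\sigma(\epsilon) = (\epsilon^{\nu_1},\dots,\epsilon^{\nu_d})$ with $\nu \in \ZZ^d$, and by~\eqref{eq:SEV} the $j$-th coordinate of $\sigma(\epsilon)\cdot v$ is $\epsilon^{\langle \nu, m^{(j)}\rangle} v_j$; hence $\lim_{\epsilon\to 0}\sigma(\epsilon)\cdot v$ exists precisely when $\langle \nu, m^{(j)}\rangle \ge 0$ for all $j \in S'$, and then equals $v|_{\{j \in S'\,:\,\langle \nu, m^{(j)}\rangle = 0\}}$. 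So it suffices to find $\nu \in \ZZ^d$ with $\langle \nu, m^{(j)}\rangle = 0$ for $j \in S$ and $\langle \nu, m^{(j)}\rangle > 0$ for $j \in S'\setminus S$, and then to check that the limit $v|_S$ so obtained — which visibly has support exactly $S=\supp(w)$ — in fact lies in the orbit $O_w$, not merely in $\overline{O_v}$.

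For the existence of such a $\nu$ I would invoke a theorem of the alternative (Motzkin transposition; see~\cite{schrijver:86}): either such $\nu$ exists, or there are scalars $c_j \ge 0$ ($j \in S'\setminus S$), not all zero, and $d_j \in \RR$ ($j \in S$), with $\sum_{j \in S'\setminus S} c_j\, m^{(j)} = \sum_{j\in S} d_j\, m^{(j)}$. To exclude the second case, write $w = \lim_k t_k\cdot v$ with $t_k\in T$ and set $a_k := (\log|t_{k,1}|,\dots,\log|t_{k,d}|) \in \RR^d$; since $(t_k\cdot v)_j = \bigl(\prod_i t_{k,i}^{m_{ij}}\bigr) v_j$, we have $\langle m^{(j)}, a_k\rangle = \log\bigl|(t_k\cdot v)_j / v_j\bigr|$ for $j \in S'$. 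From $(t_k\cdot v)_j \to w_j$ it follows that $\langle m^{(j)}, a_k\rangle \to \log|w_j/v_j|$, a finite number, for $j \in S$ (where $w_j \ne 0$), while $\langle m^{(j)}, a_k\rangle \to -\infty$ for $j \in S'\setminus S$ (where $w_j = 0$). Pairing the supposed relation with $a_k$ gives $\sum_{j\in S'\setminus S} c_j\langle m^{(j)}, a_k\rangle = \sum_{j\in S} d_j \langle m^{(j)}, a_k\rangle$ for all $k$, but the left side tends to $-\infty$ (some $c_j>0$) while the right side has a finite limit, a contradiction. Hence $\nu \in \RR^d$ exists; it lies in the rational linear subspace cut out by the equations $\langle \mu, m^{(j)}\rangle = 0$ ($j\in S$), on which the finitely many strict inequalities $\langle \mu, m^{(j)}\rangle > 0$ ($j \in S'\setminus S$) define an open set, so $\nu$ may be replaced by a nearby rational point, and then by an integral one after clearing denominators.

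It remains to see $v|_S \in O_w$. From $(t_k\cdot v)_j \to w_j$ and $v_j \ne 0$ we get $\prod_i t_{k,i}^{m_{ij}} \to w_j/v_j \in \CC^\times$ for each $j \in S$, so $(w_j/v_j)_{j\in S}$ is a limit of points in the image of the homomorphism of tori $\phi\colon T \to (\CC^\times)^S$, $t\mapsto \bigl(\prod_i t_i^{m_{ij}}\bigr)_{j\in S}$. Since this image is closed (being the image of a morphism of algebraic groups, in fact a subtorus of $(\CC^\times)^S$), there is $t \in T$ with $\prod_i t_i^{m_{ij}} = w_j/v_j$ for all $j\in S$; then $t\cdot(v|_S) = w$, so $v|_S \in O_w$. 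With $\sigma(\epsilon) = (\epsilon^{\nu_1},\dots,\epsilon^{\nu_d})$ for the $\nu$ constructed above, this yields $\lim_{\epsilon\to 0}\sigma(\epsilon)\cdot v = v|_S \in O_w$, as desired.

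The crux is excluding the second alternative in the middle step; this is where the topological hypothesis $w\in\overline{O_v}$ is leveraged, via the substitution $a_k = \log|t_k|$ and the dichotomy that the ``escaping'' coordinates $j\in S'\setminus S$ drive $\langle m^{(j)}, a_k\rangle$ to $-\infty$ whereas the ``surviving'' coordinates $j\in S$ keep it bounded. (The degenerate case $S = S'$ is harmless: there $\nu = 0$ works and $w \in O_v$ already by Lemma~\ref{lem:supp-dim}, and the Motzkin step also handles it uniformly, since there is no nonzero family $(c_j)_{j\in\emptyset}$.)
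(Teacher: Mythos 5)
Your proof is correct, and self-contained. Note that the paper does not provide a proof of this theorem at all; it cites Kraft's text (Kapitel III.2.2), so there is no internal argument to compare against. Your route is a nice elementary one that sits squarely in the same spirit as the paper's polyhedral/convex-geometric view of torus orbits: you reduce existence of the desired one-parameter subgroup to feasibility of the system $\langle \nu, m^{(j)}\rangle = 0$ ($j\in S$), $\langle\nu,m^{(j)}\rangle > 0$ ($j\in S'\setminus S$), and then separate the two alternatives of Motzkin's transposition theorem by tracking the behaviour of $a_k = \log|t_k|$ along a sequence $t_k\cdot v \to w$: the coordinates in $S'\setminus S$ force $\langle m^{(j)}, a_k\rangle \to -\infty$, while those in $S$ stay bounded, which rules out a nontrivial nonnegative relation of the surviving weights against the escaping ones. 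The residual step — that the limit point $v|_S$ is actually in the orbit $O_w$ and not merely in $\overline{O_w}$ — is the subtle part that distinguishes this statement from the bare Hilbert--Mumford criterion, and you settle it cleanly with the fact that a morphism of algebraic tori has Zariski-closed (hence Euclidean-closed) image, so $(w_j/v_j)_{j\in S}$, being a Euclidean limit of $\phi(t_k)$, is itself of the form $\phi(t)$. Compared with the textbook reference, what your argument buys is precisely what the paper needs elsewhere but takes as a black box: an explicit, coordinate-level proof using only linear programming duality and the closedness of torus images, fitting naturally alongside Lemma~\ref{lem:tori-1psg} and Proposition~\ref{prop:orbit-structure}.
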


Before we prove Proposition~\ref{prop:orbit-structure}, we discuss a bit about the structure of one-parameter subgroups.
For each $\nu \in \ZZ^d$, we define a one-parameter subgroup of $T$, namely $\sigma\colon \CC^\times \rightarrow T$ defined
by $\sigma(\epsilon) = (\epsilon^{\nu_1},\ldots,\epsilon^{\nu_d})$.
Any one-parameter subgroup of $T$ is of this form.
This gives an identification of abelian groups $\ZZ^d \cong \mathcal{Y}(T)$, where $\mathcal{Y}(T)$ denotes the collection of all one-parameter subgroups of $T$.

We leave the proof of the following well known lemma to the reader.

\begin{lemma}\label{lem:tori-1psg}
Let $\sigma\colon \CC^\times \rightarrow T$ be a one-parameter subgroup, so $\sigma(\epsilon) = (\epsilon^{\nu_1},\ldots,\epsilon^{\nu_d})$ for some~$\nu\in\ZZ^d$, and let $v\in \CC^n$.
\begin{enumerate}
\item The limit $\lim_{t \to 0} \sigma(t) \cdot v$ exists if and only if $m^{(j)} \cdot \sigma \geq 0$ for all $j \in \supp(v)$.
\item If the limit exists, then $\lim_{t \to 0} \sigma(t) \cdot v = v|_S$, where $S = \{j \in \supp(v) \ | \ m^{(j)} \cdot \sigma =  0\}$.
\end{enumerate}
\end{lemma}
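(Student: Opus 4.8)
The plan is to reduce the statement to the elementary fact that the one-variable monomial $t \mapsto t^a$ has a finite limit as $t\to 0$ if and only if $a\ge 0$, by simply computing the action $\sigma(t)\cdot v$ coordinate by coordinate.

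\emph{Step 1 (coordinatewise formula).} Writing $\sigma(t) = (t^{\nu_1},\dots,t^{\nu_d})$ and substituting into the description \eqref{eq:toract} of $\rho_M$ (exactly as already recorded in \eqref{eq:SEV}), the $j$-th coordinate of $\sigma(t)\cdot v$ is
\[
 \bigl(\rho_M(\sigma(t))\, v\bigr)_j \;=\; \Bigl(\prod_{i=1}^d t^{\nu_i m_{ij}}\Bigr) v_j \;=\; t^{\,m^{(j)}\cdot\sigma}\, v_j ,
\]
where $m^{(j)}\cdot\sigma := \sum_{i=1}^d m_{ij}\nu_i \in \ZZ$ denotes the natural pairing between the weight $m^{(j)}$ and the one-parameter subgroup $\sigma$ (the same pairing appearing in \eqref{eq:SEV}).

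\emph{Step 2 (scalar limits).} Next I would fix $j\in[n]$ and set $a = m^{(j)}\cdot\sigma$. If $j\notin\supp(v)$ then $v_j=0$, so this coordinate is identically $0$. If $j\in\supp(v)$ then $v_j\neq 0$, and $\lim_{t\to 0} t^a v_j$ exists in $\CC$ precisely when $a\ge 0$: it equals $v_j$ when $a=0$, it equals $0$ when $a>0$, and for $a<0$ the function is unbounded near $t=0$ and has no limit in $\CC$. Then, since convergence in $\CC^n$ is convergence in each coordinate, I would assemble: $\lim_{t\to 0}\sigma(t)\cdot v$ exists if and only if $m^{(j)}\cdot\sigma\ge 0$ for every $j\in\supp(v)$, which is assertion~(1); and when it exists, the $j$-th coordinate of the limit is $v_j$ exactly when $j\in\supp(v)$ and $m^{(j)}\cdot\sigma=0$, i.e.\ exactly when $j\in S$, and is $0$ otherwise, so the limit is $v|_S$, which is assertion~(2).

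There is no genuine obstacle here: the lemma is a direct unwinding of the definition of a torus representation together with the behaviour of a single monomial at the origin. The only points requiring a little care are fixing the sign convention for the pairing $m^{(j)}\cdot\sigma$ so that it matches \eqref{eq:SEV}, and interpreting ``the limit exists'' in the vector space $\CC^n$ itself (rather than in some compactification, in which a divergent coordinate would still acquire a value).
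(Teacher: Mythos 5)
Your proof is correct and is exactly the direct coordinatewise computation that the paper intends: in fact the paper explicitly states ``We leave the proof of the following well known lemma to the reader,'' and the key formula $\bigl(\rho_M(\sigma(t))v\bigr)_j = t^{\,m^{(j)}\cdot\nu}\,v_j$ already appears verbatim in \eqref{eq:SEV}. Nothing to add; the only small notational point, which you handled, is that the paper's shorthand $m^{(j)}\cdot\sigma$ means the pairing $m^{(j)}\cdot\nu$ with the exponent vector $\nu\in\ZZ^d$ of the one-parameter subgroup.
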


\begin{proof}[Proof of Proposition~\ref{prop:orbit-structure}]
We have already verified that $O_{v_F}$ is an orbit contained in $\overline{O_v}$, hence $F \mapsto O_{v_F}$ is well-defined as a map from the set of faces of~$C(v)$ to the set of orbits contained in~$\overline{O_v}$.
To see that it is injective, note that $F$ is the cone generated by $\supp(v_F) = S_F$.
For surjectivity, let~$O_w$ be an orbit contained in $\overline{O_v}$ and $\sigma\colon\CC^\times\rightarrow T$ be a one-parameter subgroup as in Theorem~\ref{thm:hmtori}.
There is $\nu\in\ZZ^d$ such that
$\sigma(\epsilon) = (\epsilon^{\nu_1},\ldots,\epsilon^{\nu_d})$.
By Lemma~\ref{lem:tori-1psg}, the existence of $\lim_{\epsilon\to 0}\sigma(\epsilon) \cdot v$ means that $\nu\cdot m^{(j)} \ge 0$ for all $j\in \supp(v)$.
In other words, $C(v)$ is contained in the halfspace
$\{ y\in\RR^d \mid \nu \cdot y \ge 0 \}$.
Moreover, the limit equals $v_F$, where~$F$ is the face
$F := \{ y\in C(v) \mid \nu \cdot y = 0 \}$ of $C(v)$.
Therefore, $v_F \in O_w$, hence $O_{v_F}=O_w$, and we have shown surjectivity.

In order to show the remaining equivalence, recall that we argued below \eqref{eq:SEV} that if $F \subseteq F'$ then~$v_F \in \overline{O_{v_{F'}}}$.
The preceding argument also implies the converse.
\end{proof}

As an immediate consequence of Proposition~\ref{prop:orbit-structure}, we get the following result, which not only reproves Lemma~\ref{lem-esupp} but also characterizes the closed orbit in an orbit closure.
For this, define
\[
  \widetilde{v} := v|_{L(v)} = v|_{\esupp(v)}.
\]

\begin{corollary}\label{cor:CO}
The orbit $O_{\tilde v}$ corresponding to the lineality space $L(v)$
is contained in every orbit closure contained in $\overline{O_v}$.
Therefore, it is the unique closed orbit contained in $\overline{O_v}$.

In particular, the orbit $O_v$ is closed if and only if $C(v) = L(v)$, i.e., $C(v)$ equals its linear span.
Moreover, $v$ is in the null cone if and only if $\esupp(v) = \emptyset$.
\end{corollary}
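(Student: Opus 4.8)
The plan is to read everything off the order-preserving bijection $F \mapsto O_{v_F}$ between the faces of the Newton cone $C(v)$ and the orbits contained in $\overline{O_v}$ furnished by Proposition~\ref{prop:orbit-structure}, together with the observation made just before that proposition that the lineality space $L(v)$ is the \emph{smallest} face of $C(v)$. First I would identify the orbit attached to the face $L(v)$: by definition $v_{L(v)} = v|_{S}$ with $S = \{\, j \in \supp(v) \mid m^{(j)} \in L(v)\,\}$, and by~\eqref{def:esupp} this index set is exactly $\esupp(v)$, so $v_{L(v)} = v|_{\esupp(v)} = \widetilde v$. Since $L(v) \subseteq F$ for every face $F$ of $C(v)$, Proposition~\ref{prop:orbit-structure} gives $\overline{O_{\widetilde v}} \subseteq \overline{O_{v_F}}$ for all such $F$; that is, $\overline{O_{\widetilde v}}$ — and in particular $O_{\widetilde v}$ — is contained in every orbit closure contained in $\overline{O_v}$, and taking $F = C(v)$ also shows $O_{\widetilde v} \subseteq \overline{O_v}$. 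This is the first assertion.

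Next I would verify that $O_{\widetilde v}$ is closed; then, being a closed orbit inside $\overline{O_v}$, it must be \emph{the} unique closed orbit there, either by Theorem~\ref{th:uniqueclosedorbit}(1) or simply because distinct orbits are disjoint while $\overline{O_{\widetilde v}}$ already lies inside every orbit closure in $\overline{O_v}$. By Proposition~\ref{prop:orbit-structure} applied to $\widetilde v$, the orbit $O_{\widetilde v}$ is closed exactly when $C(\widetilde v)$ has a single face, i.e., when $C(\widetilde v)$ coincides with its linear span. Now $C(\widetilde v)$ is by construction the cone generated by $\{\, m^{(j)} \mid j \in \esupp(v)\,\}$, so $C(\widetilde v) \subseteq L(v)$; it then suffices to show $-m^{(j)} \in C(\widetilde v)$ for each $j \in \esupp(v)$, since then $C(\widetilde v)$ is closed under negation, hence a linear subspace, hence equal to its linear span. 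This is the one genuinely nonformal point, but it is a standard fact about polyhedral cones (see~\cite{schrijver:86}): as $m^{(j)} \in L(v)$ we have $-m^{(j)} \in C(v)$, say $-m^{(j)} = \sum_{i \in \supp(v)} a_i m^{(i)}$ with all $a_i \ge 0$; in the resulting nonnegative relation among the $m^{(i)}$ the coefficient of $m^{(j)}$ is positive, and for any generator $m^{(i)}$ occurring with positive coefficient, solving the relation for it writes $-m^{(i)}$ as a nonnegative combination of the other generators, so $m^{(i)} \in L(v)$; hence every $m^{(i)}$ with $a_i > 0$ lies in $L(v)$, and therefore $-m^{(j)} \in C(\widetilde v)$, as needed. (In fact this also gives $C(\widetilde v) = L(v)$.)

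Finally, for the two ``in particular'' statements: $O_v$ is the orbit attached to the largest face $C(v)$ (here $v_{C(v)} = v$), so $O_v$ is closed iff $\overline{O_v}$ consists of a single orbit iff $C(v)$ has a single face; since $C(v)$ and $L(v)$ are respectively the largest and smallest faces, this holds iff $C(v) = L(v)$, i.e., $C(v)$ equals its own linear span. And $v$ lies in the null cone iff $0 \in \overline{O_v}$ iff the unique closed orbit $O_{\widetilde v}$ equals $\{0\}$ iff $\widetilde v = 0$ iff $\esupp(v) = \emptyset$ (using $v_j \ne 0$ for all $j \in \supp(v) \supseteq \esupp(v)$), which reproves Corollary~\ref{cor:null cone}.
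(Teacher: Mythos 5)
Your proof is correct, and it reads off the statement from Proposition~\ref{prop:orbit-structure} in the same way the paper does: $L(v)$ is the smallest face of $C(v)$, the face $L(v)$ corresponds to $v_{L(v)} = v|_{\esupp(v)} = \widetilde v$, and minimality of the face gives minimality of the orbit closure. Your identification of $S_{L(v)} = \esupp(v)$ is right, and the two ``in particular'' deductions at the end are clean.

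The one place where you do more work than the paper's ``immediate consequence'' suggests is in establishing that $O_{\widetilde v}$ is actually closed. You prove this directly by showing $C(\widetilde v) = L(v)$, i.e., that the Newton cone of~$\widetilde v$ is a linear subspace, via a polyhedral argument about cancelling generators. That argument is correct (and yields the pleasant extra fact $C(\widetilde v) = L(v)$), but it is the only ``nontrivial'' step in your write-up. There is a shorter route, which you in fact gesture at but don't carry through: Theorem~\ref{th:uniqueclosedorbit}(1) already guarantees that $\overline{O_v}$ contains a unique closed orbit $O_z$; since $O_z \subseteq \overline{O_v}$, your first assertion gives $O_{\widetilde v} \subseteq \overline{O_z} = O_z$, and disjointness of orbits forces $O_{\widetilde v} = O_z$. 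This both identifies the closed orbit and proves closedness of $O_{\widetilde v}$ in one stroke, with no polyhedral computation. So: both arguments are valid; yours is more self-contained (it doesn't lean on the general reductive-group existence theorem to get closedness) and gives slightly more information, while the alternative is the genuinely ``immediate'' deduction the paper is alluding to.
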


\section{Generating Laurent polynomials and rational invariants}\label{sec:gen-inv}
In this section, we discuss the computation of suitable rational invariants, which is the heart of our algorithms, and the main novelty of this paper.
As explained in the introduction, the starting point is the simple observation that two orbits can only be equal when they have the same support (Lemma~\ref{lem:supp-dim}).
But once we restrict to vectors of fixed support, it is natural to consider a larger class of invariants, namely Laurent polynomials, which are polynomials that can also have negative exponents.
In Section~\ref{se:lattice} we will see that the invariant Laurent polynomials for a given support naturally form a lattice that can be computed from the weight matrix.
This allows us to give an efficient algorithm for computing small sets of generators.
As a consequence, we can also efficiently compute a system of generating rational invariants.

For the rest of this section, we fix an $n$-dimensional representation $\rho_M \colon T \rightarrow \GL_n(\CC)$ of the torus~$T=(\CC^\times)^d$ given by a weight matrix $M \in \Mat_{d,n}(\ZZ)$ with columns $m^{(j)}$ for $j\in[n]$.

\subsection{Invariant Laurent polynomials}\label{se:lattice}
For $S \subseteq [n]$, consider the set of vectors with support~$S$, that is, the variety
\begin{equation}\label{eq:XS}
  X_S = \{ v \in \CC^n \;|\; \supp(v) = S \} = \{ v \in \CC^n \;|\; v_j \neq 0 \text{ if and only if } j \in S \}.
\end{equation}
The ring of regular functions on~$X_S$, denoted~$\CC[X_S]$, is naturally identified with the ring of Laurent polynomials in variables $\{x_j\}_{j\in S}$.
That is,
\[
  \CC[X_S] = \CC[x_j, x_j^{-1} \;|\; j \in S].
\]
We observe that $\rho_M$ restricts to an action of $T$ on $X_S$, and induces an action on $\CC[X_S]$.
The proposition below shows that the
algebra~$\CC[X_S]^T$ of \emph{invariant Laurent polynomials} can be succinctly described in terms of the lattice
\begin{equation}\label{def:L}
  L_S = \Bigl\{ c \in \ZZ^S \;|\; \sum_{j \in S} c_j m^{(j)} = 0 \Bigr\} = \ker(M_S) \cap \ZZ^{|S|},
\end{equation}
where $\ZZ^S := \{ c \in \RR^n \;|\; c_j = 0 \text{ for all } j \not\in S \} \cong \ZZ^{|S|}$, and $M_S$ denotes the submatrix of the weight matrix~$M$,
obtained by removing all columns except those labeled by~$S$.

\begin{proposition}\label{prop:inv-Laurent} 
\begin{enumerate}
\item Let $c\in\ZZ^S$. A Laurent monomial $x^c=\prod_{j \in S} x_j^{c_j}$ is invariant if and only if~$c \in L_S$.
\item The algebra of invariant Laurent polynomials $\CC[X_S]^T$ is spanned as a vector space by the invariant Laurent monomials.
\item If $\{c^{(1)}, c^{(2)},\dots,c^{(r)}\}$ is a lattice basis of~$L_S$, then $\CC[X_S]^T$ is generated as an algebra by the invariant Laurent monomials $\smash{\{x^{c^{(1)}},\dots,x^{c^{(r)}}\}}$.
\end{enumerate}
\end{proposition}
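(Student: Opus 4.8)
The plan is to imitate the proof of Proposition~\ref{pro:3.1}, promoting it from ordinary monomials to Laurent monomials. Parts~(1) and~(2) are, modulo the bookkeeping of allowing negative exponents, the very same computation; the one genuinely new ingredient is part~(3), a short piece of lattice theory, and it is exactly there that passing to Laurent monomials pays off.

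For part~(1) I would compute the induced $T$-action on a Laurent monomial directly. Each coordinate function $x_j$ with $j\in S$ is a unit on $X_S$, so the formula $g\cdot f(v)=f(\rho_M(g)^{-1}v)$ extends to all of $\CC[X_S]$, and for $t\in T$ and $c\in\ZZ^S$ one gets
\[
  t\cdot x^c \;=\; \chi_c(t)^{-1}\,x^c, \qquad \chi_c \;:=\; \sum_{j\in S} c_j\, m^{(j)} \;\in\; \ZZ^d \cong \mathcal{X}(T),
\]
which is the same calculation as in the proof of Proposition~\ref{pro:3.1}, now with the $c_j$ allowed to be negative. Hence $x^c$ is invariant iff $\chi_c$ is the trivial character, i.e.\ iff $\sum_{j\in S}c_j m^{(j)}=0$, i.e.\ iff $c\in L_S$. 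For part~(2) I would write an arbitrary $f\in\CC[X_S]^T$ as a finite $\CC$-linear combination $f=\sum_c a_c x^c$ of pairwise distinct Laurent monomials. These form a $\CC$-basis of $\CC[X_S]=\CC[x_j^{\pm1}:j\in S]$ by the very definition of the Laurent polynomial ring, so comparing coefficients in the identity $t\cdot f=f$ and using the formula above gives $a_c(\chi_c(t)^{-1}-1)=0$ for every $t\in T$ and every $c$; thus $a_c\neq0$ forces $\chi_c$ trivial, i.e.\ $c\in L_S$. In other words every Laurent monomial occurring in an invariant Laurent polynomial is itself invariant, which is precisely the asserted spanning. (Equivalently, $\CC[X_S]$ decomposes into $T$-weight spaces indexed by the $\chi_c$, and $\CC[X_S]^T$ is the weight-zero summand.)

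For part~(3) the key observation is that $c\mapsto x^c$ is a group isomorphism from $(L_S,+)$ onto the group of invariant Laurent monomials under multiplication: it is injective because distinct exponent vectors give distinct monomials, surjective by part~(1), and a homomorphism since $x^{c+c'}=x^c x^{c'}$ --- and its image is a \emph{group}, not just a semigroup, exactly because negative exponents are available. Given a lattice basis $c^{(1)},\dots,c^{(r)}$ of $L_S$, any $c\in L_S$ equals $\sum_{i=1}^r k_i c^{(i)}$ with $k_i\in\ZZ$, hence $x^c=\prod_{i=1}^r(x^{c^{(i)}})^{k_i}$ is a Laurent monomial in $x^{c^{(1)}},\dots,x^{c^{(r)}}$. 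Combined with part~(2), every invariant Laurent polynomial is a $\CC$-linear combination of Laurent monomials in the $x^{c^{(i)}}$, so the $x^{c^{(i)}}$ --- together with their inverses $(x^{c^{(i)}})^{-1}=x^{-c^{(i)}}$, which is why Theorem~\ref{th:laurent} produces an arithmetic circuit \emph{with division} --- generate $\CC[X_S]^T$ as an algebra.

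I do not expect a real obstacle: once the dictionary between invariant Laurent monomials and the lattice $L_S$ is in place, everything is routine. Two points deserve a line of care: (i) the precise reading of ``generated as an algebra'' --- invariant Laurent monomials are units, so their inverses (equivalently, division gates) must be permitted, consistent with Theorem~\ref{th:laurent}; and (ii) the remark that a lattice basis of $L_S$ has size $r\le|S|\le n$, whereas the semigroup $L_S\cap\NN^n$ of ordinary invariant monomials may require far more generators --- this contrast is the whole point of working with Laurent monomials and is what later makes the algorithm of Theorem~\ref{th:laurent} efficient.
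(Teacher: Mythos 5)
Your proof is correct and follows essentially the same route as the paper, which merely remarks that parts~(1) and~(2) go ``as in Proposition~\ref{pro:3.1}'' and that part~(3) is an immediate consequence; you have simply filled in those details. Your observation in part~(3) that the lattice basis gives integer (possibly negative) coordinates, so that the inverses $(x^{c^{(i)}})^{-1}=x^{-c^{(i)}}$ are implicitly needed to ``generate as an algebra'' in the strict sense --- consistent with Theorem~\ref{th:laurent} producing a circuit \emph{with division} --- is a genuine subtlety in the statement that the paper glosses over, and it is to your credit that you noticed it.
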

\begin{proof}
The first two parts are shown using an argument similar to the proof of Proposition~\ref{pro:3.1}.
The third statement is an immediate consequence.
\end{proof}

It is instructive to compare this with the discussion below Proposition~\ref{pro:3.1}, where we saw that the invariant polynomials are similarly described by the \emph{semigroup} of nonnegative vectors in the kernel of the weight matrix.
By working with vectors of fixed support, we instead obtain a natural lattice structure, which simplifies the situation considerably.
For example, the lattice~$L_S$ and hence the algebra of invariant Laurent polynomials~$\CC[X_S]^T$ have at most~$|S|\leq n$ generators -- in stark contrast to the situation for invariant polynomials.

We now discuss how to compute lattice bases as in Proposition~\ref{prop:inv-Laurent}.
It is well known that every integer matrix $M$ can be diagonalized by multiplying from left and right with unimodular matrices.
This is known as the \emph{Smith normal form}~\cite{smith}.
The Smith normal form can be computed in polynomial time~\cite{snf}.
We record these facts in the following theorem.

\begin{theorem}[Smith normal form]\label{thm:smith}
Let $M\in\Mat_{d,n}(\ZZ)$. Then, there exist unimodular matrices $U\in\Mat_{d,d}(\ZZ), W\in\Mat_{n,n}(\ZZ)$ such that
\[
  UMW=\begin{bmatrix}
\alpha_1 & 0 & 0 & & \dots & & 0\\
0 & \alpha_2 & 0 & & \dots & & 0\\
0 & 0 & \ddots & & & & 0\\
 & & & \alpha_r & & & \vdots\\
\vdots &\vdots & & & 0 & & \\
 & & & &  &\ddots & \\
0 & 0 & 0 & \dots & & & 0
\end{bmatrix}
\]
and the diagonal elements satisfy $\alpha_i\mid \alpha_{i+1}$ for $i=1,2,\dots,r-1$, where $r$ equals the rank of $M$.
The matrix $UMW$ is unique and called the Smith normal form of $M$.

Moreover, if the bit-lengths of the entries of $M$ are bounded by $b$, then the matrices $U$, $W$, and~$UMW$ can be computed in $\poly(d,n,b)$-time.
\end{theorem}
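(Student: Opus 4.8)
The plan is to prove existence by an explicit reduction over $\ZZ$, uniqueness by a determinantal invariant, and the complexity claim by bounding the bit-length of all intermediate data. \emph{For existence}, I would first note that adding an integer multiple of one row (resp.\ column) to another, and swapping rows or columns, are precisely left- (resp.\ right-) multiplications by elementary unimodular matrices, so it suffices to bring $M$ to diagonal shape by such operations. Pick a nonzero entry of smallest absolute value, move it to position $(1,1)$ by permutations, and use division with remainder to subtract integer multiples of the first row from the other rows, and of the first column from the other columns. Each off-pivot entry in the first row or column is thereby replaced by a remainder of strictly smaller absolute value (or $0$); iterating, the pivot value strictly decreases whenever the first row and column are not yet cleared, so the process terminates with the first row and column zero apart from a pivot $\alpha_1$. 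Recursing on the lower-right $(d-1)\times(n-1)$ block produces the diagonal form, with the number $r$ of nonzero diagonal entries equal to $\rk(M)$.

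\emph{For the divisibility chain} $\alpha_i \mid \alpha_{i+1}$, I would interleave one extra step into the reduction: once $\alpha_1$ has been isolated, if some entry $m_{ij}$ with $i,j\ge 2$ is not divisible by $\alpha_1$, add row $i$ to row $1$ and rerun the clearing step; the new pivot has absolute value $<|\alpha_1|$, so this too terminates, and it ends with $\alpha_1$ dividing every remaining entry, in particular $\alpha_2$. Applying this inside each level of the recursion yields the full chain.

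\emph{For uniqueness}, for $0\le k\le r$ let $d_k(M)$ denote the greatest common divisor of all $k\times k$ minors of $M$, with $d_0=1$. By the Cauchy--Binet formula each $k\times k$ minor of $UMW$ is an integer combination of $k\times k$ minors of $M$, and conversely, so $d_k$ is unchanged under $M\mapsto UMW$ for unimodular $U,W$. For a diagonal matrix with $\alpha_1\mid\cdots\mid\alpha_r$ one checks directly that the $k\times k$ minors are the products $\prod_{i\in I}\alpha_i$ over $k$-subsets $I$, whose gcd is $\alpha_1\cdots\alpha_k$; hence $\alpha_k=d_k(M)/d_{k-1}(M)$ is determined by $M$ alone, and fixing signs so that $\alpha_i>0$ pins down $UMW$ uniquely.

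\emph{For the complexity bound}, the weak point --- and the step I expect to be the main obstacle --- is that a naive execution of the reduction can blow up intermediate entries to exponential bit-length. The standard remedy, which I would invoke from the literature rather than reprove, is to perform all arithmetic modulo a fixed nonzero multiple of $d_r(M)$, for instance the determinant of a maximal nonsingular square submatrix of $M$, which by Hadamard's inequality has bit-length $\poly(d,n,b)$, combined with a suitable pivot-selection rule; this keeps every entry of the working matrix of size $\poly(d,n,b)$ throughout. Consequently the Smith normal form, together with the transforming matrices $U$ and $W$ obtained by accumulating the elementary operations (whose entries are bounded in the same way), is computed in $\poly(d,n,b)$ time. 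This is exactly the content of~\cite{snf}.
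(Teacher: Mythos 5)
The paper does not prove this theorem; it is stated as a known fact, with the existence and uniqueness attributed to Smith~\cite{smith} and the polynomial-time computability attributed to~\cite{snf}. Your proposal supplies the standard proof, and it is correct: the division-with-remainder reduction to diagonal form, the extra row-addition step that forces the divisibility chain $\alpha_i\mid\alpha_{i+1}$, and the uniqueness argument via the determinantal invariants $d_k(M)$ (gcd of all $k\times k$ minors, unchanged under left/right multiplication by unimodular matrices via Cauchy--Binet, yielding $\alpha_k=d_k/d_{k-1}$) are all exactly the textbook treatment. You are also right to flag that the naive reduction is not polynomial time and that controlling intermediate bit-lengths --- by doing arithmetic modulo a Hadamard-sized multiple of $d_r(M)$, or by a comparable pivoting strategy --- is the genuinely nontrivial part; like the paper, you defer this to~\cite{snf} rather than reproving it, which is entirely reasonable given the paper's scope. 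One cosmetic remark: when you say ``conversely'' in the uniqueness argument, it would be cleaner to state explicitly that the reverse inclusion follows by applying the same Cauchy--Binet argument to $U^{-1}(UMW)W^{-1}=M$, using that $U^{-1},W^{-1}$ are again integer matrices.
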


Using the Smith normal form it is easy to compute a basis of the lattice~$L_S$.
We state this in the following algorithm and corollary.

\begin{algorithm}\label{algo-lattice} \emph{Computation of a basis of the lattice of invariant Laurent monomials:}
\begin{description}
\item [Input]  $M \in \Mat_{d,n}(\ZZ)$ and $S\subseteq[n]$.
\item [Step 1] Compute the submatrix~$M_S$ of $M$ obtained by deleting all columns except those in~$S$.
\item [Step 2] Compute the Smith normal form $UM_SW$ of $M_S$ (as in Theorem~\ref{thm:smith}).
\item [Step 3] Return $\{w^{(r+1)}, w^{(r+2)}, \dots, w^{(n)}\}$, where $w^{(j)}$ denotes the $j^{th}$ column of $W$.
\end{description}
\end{algorithm}

\begin{corollary}\label{cor:smith}
Let $M\in\Mat_{d, n}(\ZZ)$ and $S \subseteq [n]$, and suppose the bit-lengths of the entries of~$M$ are bounded by~$b$.
Then Algorithm~\ref{algo-lattice} computes a basis for the lattice $L_S$ defined in \eqref{def:L} in $\poly(d,n,b)$-time.
In particular, each $w^{(j)}$ has bit-length $\poly(d,n,b)$.
\end{corollary}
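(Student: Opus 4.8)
The plan is to verify that the output of Algorithm~\ref{algo-lattice} is a lattice basis of $L_S$ and then account for the running time and bit-lengths. First, I would observe that the kernel lattice of an integer matrix is invariant under multiplying the matrix on the left by a unimodular matrix: since $U$ is invertible over $\QQ$, we have $M_S c = 0 \iff (U M_S) c = 0$, and since $U$ has integer entries, this identity descends to $\ZZ^{|S|}$ as well. So computing $\ker(M_S)\cap\ZZ^{|S|}$ is the same as computing $\ker(U M_S W \cdot W^{-1})\cap\ZZ^{|S|}$, and it suffices to understand the kernel lattice of the diagonalized matrix $D := U M_S W$ together with the change of coordinates furnished by $W$.

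Next I would make the change of variables explicit. Write $D = U M_S W$ in Smith normal form as in Theorem~\ref{thm:smith}, with nonzero diagonal entries $\alpha_1,\dots,\alpha_r$ where $r = \rk(M_S)$. For $c \in \ZZ^{|S|}$, set $c' := W^{-1} c$, which is again an integer vector because $W$ is unimodular; conversely every $c'\in\ZZ^{|S|}$ arises this way from $c = W c'$. Since $M_S c = 0 \iff U M_S W (W^{-1} c) = 0 \iff D c' = 0$, and $D$ is diagonal with the first $r$ diagonal entries nonzero, we get $D c' = 0 \iff c'_1 = \dots = c'_r = 0$. Hence $L_S = \{ W c' \mid c'\in\ZZ^{|S|},\ c'_1 = \dots = c'_r = 0 \}$, which is exactly the $\ZZ$-span of the last $|S| - r$ columns of $W$. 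When $S = [n]$ these are the columns $w^{(r+1)},\dots,w^{(n)}$ as stated; for general $S$ one applies the same argument to the $|S|\times|S|$ matrix $W$ produced for $M_S$ (the statement as written implicitly identifies $\ZZ^S\cong\ZZ^{|S|}$). Because $W$ is unimodular its columns are $\ZZ$-linearly independent, so the last $|S|-r$ of them form a genuine lattice basis of $L_S$, not merely a generating set.

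For the complexity claim, Step~1 is a trivial projection onto the columns indexed by $S$ and does not increase bit-lengths. Step~2 invokes Theorem~\ref{thm:smith}, which guarantees that $U$, $W$, and $UM_SW$ are computable in $\poly(d,n,b)$ time; in particular the entries of $W$ have bit-length $\poly(d,n,b)$. Step~3 just reads off a subset of the columns of $W$. Composing these gives the overall $\poly(d,n,b)$ bound, and the bit-length bound on each $w^{(j)}$ is inherited directly from Theorem~\ref{thm:smith}. I do not anticipate a genuine obstacle here: the only point requiring care is the bookkeeping of the two coordinate changes (by $U$ on the domain side, by $W$ on the kernel side) and the observation that unimodularity is what makes the passage between $\ZZ^{|S|}$-statements and $\QQ$-statements reversible; everything else is routine and already packaged in Theorem~\ref{thm:smith}.
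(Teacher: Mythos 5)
Your proposal is correct, and it fills in precisely the argument the paper leaves implicit: since $U$ is invertible, $\ker(M_S)=\ker(UM_S)=\ker(D\,W^{-1})$ with $D=UM_SW$ the Smith normal form, so the unimodular change of variables $c'=W^{-1}c$ (which is a bijection on integer vectors) turns the condition $M_Sc=0$ into $c'_1=\dots=c'_r=0$, exhibiting $L_S$ as the $\ZZ$-span of the last $|S|-r$ columns of $W$, which are $\ZZ$-independent because $W$ is unimodular. You also correctly note the small notational glitch in Step~3 of Algorithm~\ref{algo-lattice} (the index should run up to $|S|$, not $n$, since $W$ is $|S|\times|S|$), and the complexity and bit-length bounds are inherited directly from Theorem~\ref{thm:smith} exactly as you say.
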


Alternatively, one can use lattice algorithms; we refer the interested reader to \cite[Corollary~5.4.10]{GLS-book}.

\begin{remark}\label{rmk:easy-circuit}
It is easy to see that given an exponent vector $c = (c_1,\dots,c_n) \in \ZZ_{\ge 0}^n$, where the bit-lengths of the $c_i$s are bounded by $b$, an arithmetic circuit computing the monomial $x^c$ of size $\poly(n,b)$ can be constructed in $\poly(n,b)$-time.
Similarly, if $c\in \ZZ^n$, an arithmetic circuit with division computing the Laurent monomial $x^c$ can be constructed in $\poly(n,b)$-time.
\end{remark}

\begin{proof}[Proof of Theorem~\ref{th:laurent}]
This follows from Proposition~\ref{prop:inv-Laurent}, Corollary~\ref{cor:smith}, and Remark~\ref{rmk:easy-circuit}.
\end{proof}

\subsection{Rational invariants}\label{se:rational}
In the remainder of this section we will discuss rational invariants.
For $V = \CC^n$, recall that $\CC[V] = \CC[x_1,\dots,x_n]$ is the polynomial ring in~$n$ variables.
Let $\CC(V) = \CC(x_1,\dots,x_n)$ the field of rational functions (its fraction field).
In other words, any element in $\CC(V)$ is a ratio of two polynomials.
The action of $T$ on $\CC[V]$ extends to $\CC(V)$.
Then $\CC(V)^T$ is the field of \emph{rational invariants}.
Clearly, any invariant Laurent polynomial is a rational invariant, but the converse need not be the case.

Nevertheless, we can show that the invariant Laurent polynomials in all variables (that is, for support $S=[n]$) generate the rational invariants as a field.

\begin{proposition}\label{pro:rat}
Let $A := \CC[X_{[n]}] = \CC[x_1,x_1^{-1},\dots,x_n,x_n^{-1}]^T$ denote the algebra of invariant Laurent polynomials,
and let $F := \CC(x_1,\dots,x_n)^T$ denote the field of rational invariants.
Then, $A$~generates $F$ as a field, i.e., the field of fractions of $A$ is $F$.
\end{proposition}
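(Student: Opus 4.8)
The plan is to prove the nontrivial inclusion $F\subseteq\mathrm{Frac}(A)$; the reverse inclusion is immediate, since every invariant Laurent polynomial is an invariant rational function, so $A\subseteq F$, and $F$ is a field. The heart of the argument is the following reduction: \emph{every rational invariant can be written as a ratio of two semi-invariants (weight vectors in $\CC[x_1,\dots,x_n]$) of one and the same weight}. Granting this, suppose $f=p/q$ with $p,q\in\CC[x_1,\dots,x_n]$ both semi-invariants of weight $\mu\in\ZZ^d$. Pick any monomial $x^c$ ($c\in\NN^n$) occurring in $q$; it necessarily has weight $Mc=\mu$. Then for every monomial $x^\alpha$ occurring in $p$ we have $M(\alpha-c)=M\alpha-Mc=0$, so $x^{\alpha-c}$ is an invariant Laurent monomial by Proposition~\ref{prop:inv-Laurent}(1); hence $p/x^c\in A$, and likewise $q/x^c\in A$ with $q/x^c\neq 0$. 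Therefore $f=(p/x^c)/(q/x^c)\in\mathrm{Frac}(A)$.

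To establish the reduction, take $f\in F$ with $f\neq 0$ (the case $f=0\in\CC\subseteq A$ is trivial) and write $f=p/q$ in lowest terms; this is legitimate because $\CC[x_1,\dots,x_n]$ is a UFD. Each $t\in T$ acts on $\CC[x_1,\dots,x_n]$ by a $\CC$-algebra automorphism (it merely rescales each monomial by a character value, as in the proof of Proposition~\ref{pro:3.1}), so $t\cdot p$ and $t\cdot q$ are again coprime; and from $t\cdot f=f$ we get $p\,(t\cdot q)=q\,(t\cdot p)$. Unique factorization, together with the fact that the units of $\CC[x_1,\dots,x_n]$ are exactly the nonzero constants, forces $q$ and $t\cdot q$ to be associates, i.e. $t\cdot q=\beta(t)\,q$ for some $\beta(t)\in\CC^\times$, and then $t\cdot p=\beta(t)\,p$ as well. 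Comparing coefficients of the (linearly independent) monomials shows that all monomials appearing in $p$ and in $q$ share a common weight $\mu$, which is precisely the statement that $p$ and $q$ are semi-invariants of the same weight.

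The only nonroutine step is this last reduction, i.e. arguing that in a reduced representative $p/q$ both $p$ and $q$ are semi-invariants. It rests on two facts that I would state explicitly: (a) $\CC[x_1,\dots,x_n]$ is a UFD whose units are the nonzero scalars, and (b) each $t\in T$ acts by a ring automorphism, hence preserves coprimality. Everything after that is bookkeeping with weights via Proposition~\ref{prop:inv-Laurent}. Minor care points: treat $f=0$ separately, and note that $q/x^c\neq0$ so the final fraction is well defined. (One could equivalently run the whole argument inside the Laurent polynomial ring $\CC[x_1^{\pm1},\dots,x_n^{\pm1}]$, also a UFD, but then its units are the Laurent monomials, so a couple of sign/support checks are needed; working in the ordinary polynomial ring is marginally cleaner.)
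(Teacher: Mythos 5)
Your proof is correct and follows essentially the same route as the paper's: both write $f=p/q$ with $p,q$ coprime, use that $T$ acts by ring automorphisms together with unique factorization to conclude $p,q$ are semi-invariants of a common weight, and then normalize by a monomial of that weight to land in $\mathrm{Frac}(A)$. The only cosmetic difference is the final bookkeeping step — the paper divides $q$ by each monomial $x^e$ of $p$ and writes $f=\sum_e p_e/f_e$ as a sum of reciprocals of elements of $A$, whereas you divide both $p$ and $q$ by a single monomial of $q$ and exhibit $f$ directly as a ratio of two elements of $A$; your variant is slightly more direct.
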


\begin{proof}
Let $f \in F^\times$ and write $f = \frac p q$, where $p,q \in \CC[x_1,\dots,x_n]$ have no common factors.
Since $f$ is invariant, we have for any~$t \in T$ that 
\begin{align*}
  \frac{t \cdot p}{t \cdot q} = t \cdot f = f = \frac pq.
\end{align*}
Accordingly, $t \cdot p = \alpha(t) p$ and $t \cdot q = \alpha(t) q$ for some $\alpha(t) \in \CC^\times$.
Thus, $p$ and $q$ span one-dimensional representations.
This in turn implies that $\alpha\colon T \to \CC^\times$ is a character, as discussed in Section~\ref{subsec:rep inv tori}, and further that $p$ (and also $q$) is a sum of monomials with the same weight, i.e., $p = \sum_e p_e x^e$ such that $t \cdot x^e = \alpha(t) x^e$ for $p_e\neq0$.
In particular, $f_e = \frac q {x^e}$ is a Laurent polynomial \emph{invariant} if $p_e\ne 0$, and we can write
\begin{align*}
  f
= \frac p q
= \sum_e p_e \frac{x^e}q
= \sum_e p_e \frac1{f_e},
\end{align*}
which concludes the proof.
\end{proof}

As a direct consequence, any system of generating invariant Laurent polynomials (as an algebra) also serves as a system of generating rational invariants (as a field extension of $\CC$).
Thus we obtain:

\begin{proof}[Proof of Corollary~\ref{cor:1.3}]
This follows from Theorem~\ref{th:laurent} (with $S=[n]$) and Proposition~\ref{pro:rat}.
\end{proof}

\section{Orbit equality problem}\label{sec:oe}
In this section, we will give a polynomial time algorithm for the orbit equality problem.
Given two points, the strategy is to compute a small collection of invariant Laurent monomials (using the result of Section~\ref{sec:gen-inv})
whose evaluations at the two given points will determine whether the two points are in the same orbit.
The efficient testing of whether two Laurent monomials evaluate to the same value actually requires an idea:
this has already been studied in the literature and we briefly sketch in Section~\ref{subsec:moneq} how to do this.

We still assume an $n$-dimensional representation $\rho_M \colon T \rightarrow \GL_n(\CC)$ of the torus~$T=(\CC^\times)^d$ given by a weight matrix $M \in \Mat_{d,n}(\ZZ)$ with columns $m^{(j)}$ for $j\in[n]$.

In general, invariants can only decide orbit closure intersection, not orbit equality.
However, the crucial point is that in the varieties~\eqref{eq:XS} consisting of vectors of fixed support any $T$-orbit is closed.

\begin{proposition}\label{prop:X orbits closed}
Let $S \in [n]$, $X_S$ be the variety defined in~\eqref{eq:XS}, and $v \in X_S$.
Then the orbit $O_v$ is a closed subset of~$X_S$.
\end{proposition}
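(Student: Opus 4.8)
The plan is to deduce this immediately from Lemma~\ref{lem:supp-dim}\eqref{it:supp down cl}. First I would observe that the orbit $O_v$ is indeed contained in $X_S$: each group element rescales every coordinate by a nonzero scalar, so $\supp(tv) = \supp(v) = S$ for all $t \in T$, and hence $O_v \subseteq X_S$ as claimed.

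Next I would pin down what it means for $O_v$ to be closed in $X_S$. Since $X_S$ carries the subspace topology from $\CC^n$ (with either the Euclidean or Zariski topology, which agree on orbit closures by the Chevalley-type result recalled in Section~\ref{sec:inv}), the closure of $O_v$ inside $X_S$ is exactly $\overline{O_v} \cap X_S$, where $\overline{O_v}$ denotes the ambient closure. So it suffices to show $\overline{O_v} \cap X_S = O_v$, and the inclusion ``$\supseteq$'' is already handled by the previous paragraph.

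For the reverse inclusion, take any $w \in \overline{O_v} \cap X_S$. By Lemma~\ref{lem:supp-dim}\eqref{it:supp down cl}, $\supp(w) \subseteq \supp(v)$, and the inclusion is \emph{strict} precisely when $w \in \overline{O_v} \setminus O_v$. But $w \in X_S$ forces $\supp(w) = S = \supp(v)$, so the inclusion is not strict, and therefore $w \in O_v$. This gives $\overline{O_v} \cap X_S \subseteq O_v$, completing the argument. (If one prefers an argument intrinsic to this section, the same conclusion follows from Proposition~\ref{prop:orbit-structure}: the orbits in $\overline{O_v}$ are the $O_{v_F}$ for faces $F$ of $C(v)$, and $\supp(v_F) = S_F \subsetneq \supp(v)$ unless $F = C(v)$, i.e. unless $O_{v_F} = O_v$.)

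I do not anticipate a genuine obstacle here; the only point requiring a little care is the bookkeeping about which topology and which ambient space one takes closures in, but this is routine given the facts already assembled. The real content is entirely contained in Lemma~\ref{lem:supp-dim}\eqref{it:supp down cl}.
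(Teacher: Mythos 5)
Your argument is exactly the paper's: both rely on Lemma~\ref{lem:supp-dim}\eqref{it:supp down cl} to conclude $O_v = \overline{O_v} \cap X_S$, hence $O_v$ is closed in $X_S$. You simply spell out the routine topological bookkeeping that the paper leaves implicit; the substance is identical.
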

\begin{proof}
By Lemma~\ref{lem:supp-dim}~(3) we have $O_v=\overline{O_v}\cap X_S$ which implies that the orbits are closed in $X_S$.
\end{proof}

Orbit equality in~$V$ can always be reduced to orbit equality in some~$X_S$, since equality of supports is a necessary condition (Lemma~\ref{lem:supp-dim}~(1)).
The importance of the above result is that the latter orbit equality and orbit closure intersection are equivalent in $X_S$.
Together with Theorem~\ref{thm:Mumford} we obtain the following result.

\begin{corollary}\label{cor:orb sup}
Suppose $\supp(v) = \supp(w) = S$.
Then, $O_v \neq O_w$ if and only if there is an invariant Laurent monomial $f = \prod_{j \in S} x_j^{c_j}$ such that $f(v) \neq f(w)$.
\end{corollary}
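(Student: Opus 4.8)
The plan is to combine the previous two results: Proposition~\ref{prop:X orbits closed}, which says that inside $X_S$ every $T$-orbit is already closed, and Theorem~\ref{thm:Mumford} (Mumford), applied not to the representation on $V=\CC^n$ but to the action of the torus $T$ on the affine variety $X_S$, whose coordinate ring is $\CC[X_S] = \CC[x_j, x_j^{-1} \mid j \in S]$. Note that $X_S$ is indeed an affine variety (a principal open subset of $\CC^n$, defined by the non-vanishing of $\prod_{j\in S} x_j$), and $T$ acts rationally on it by the restriction of $\rho_M$, so Mumford's theorem in the version stated (for rational actions of reductive groups on algebraic varieties) applies verbatim.

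First I would argue the forward direction. Suppose $O_v \neq O_w$ with $\supp(v)=\supp(w)=S$, so $v,w \in X_S$. By Proposition~\ref{prop:X orbits closed}, both $O_v$ and $O_w$ are closed in $X_S$; since they are distinct closed orbits, they are disjoint, i.e. $\overline{O_v} \cap \overline{O_w} = \emptyset$ where closures are taken in $X_S$. By Theorem~\ref{thm:Mumford} applied to the $T$-action on $X_S$, there exists an invariant regular function $f \in \CC[X_S]^T$ with $f(v) \neq f(w)$. By Proposition~\ref{prop:inv-Laurent}~(2), $\CC[X_S]^T$ is spanned as a vector space by invariant Laurent monomials; writing $f$ as a $\CC$-linear combination of such monomials, at least one of them, say $g = \prod_{j \in S} x_j^{c_j}$ with $c \in L_S$, must satisfy $g(v) \neq g(w)$, since otherwise $f(v)=f(w)$. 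This gives the required separating Laurent monomial.

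The converse is the easy direction and I would dispatch it in one line: an invariant Laurent monomial $f$ is constant on every $T$-orbit (it is $T$-invariant as a function on $X_S$), so $O_v = O_w$ forces $f(v) = f(w)$; contrapositively, $f(v) \neq f(w)$ implies $O_v \neq O_w$. No actual obstacle arises here.

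There is no serious difficulty in this corollary — it is a bookkeeping combination of the two cited results. The only point that deserves a moment's care, and the one I would be sure to state explicitly, is that Mumford's theorem is being invoked for the action on the non-closed subvariety $X_S$ rather than on $V$ itself; this is legitimate precisely because $X_S$ is affine and $T$-stable, and because the statement of Theorem~\ref{thm:Mumford} as recorded above is already phrased for rational actions on arbitrary algebraic varieties. A secondary point worth a clause is the passage from an arbitrary separating invariant $f \in \CC[X_S]^T$ to a separating invariant \emph{monomial}, which uses part~(2) of Proposition~\ref{prop:inv-Laurent} together with the elementary observation that if each monomial summand agreed on $v$ and $w$ then so would their linear combination.
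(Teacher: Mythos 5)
Your proof is correct and follows essentially the same route the paper takes: orbits are closed in $X_S$ by Proposition~\ref{prop:X orbits closed}, Mumford's theorem applied to the $T$-action on the affine variety $X_S$ yields a separating invariant in $\CC[X_S]^T$, and Proposition~\ref{prop:inv-Laurent}~(2) lets one replace that invariant by a separating Laurent monomial. The extra care you take in noting that $X_S$ is an affine $T$-stable variety (so that Theorem~\ref{thm:Mumford} applies in the stated generality) is exactly the point the paper is relying on.
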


Thus, we obtain the following algorithm for the orbit equality problem.

\begin{algorithm}\label{algo-oe} \emph{Deciding orbit equality:}
\begin{description}
\item [Input] $M \in \Mat_{d,n}(\ZZ)$ and $v,w \in \QQ(i)^n$.
\item [Step 1] Check if $\supp(v) = \supp(w)$. If not, $O_v \neq O_w$, so we can stop.
\item [Step 2] Use Algorithm~\ref{algo-lattice} to compute a lattice basis~$\mathcal B$ for the lattice~$L_S$ defined in~\eqref{def:L}.
\item [Step 3] For each $e \in \mathcal{B}$, we check if $v^e = w^e$ (as described in Section~\ref{subsec:moneq} below). \\
If they are all equal, then~$O_v = O_w$. Else, $O_v \neq O_w$.
\end{description}
\end{algorithm}

\begin{proof} [Proof of Theorem~\ref{thm:main}, part (1)]
The correctness of Algorithm~\ref{algo-oe} follows from Proposition~\ref{prop:inv-Laurent} and Corollary~\ref{cor:orb sup}.
We now analyze its runtime.
Clearly, the first step can be implemented efficiently.
For the second step, we can appeal to Corollary~\ref{cor:smith}.
For step~3, we first observe that, again by Corollary~\ref{cor:smith}, the exponents~$e$ have bit-length $\poly(d,n,b)$.
Then Proposition~\ref{prop:monomial} below shows that this step can also be implemented in time $\poly(d,n,b)$.
\end{proof}


\subsection{Laurent monomial equivalence}\label{subsec:moneq}
We now discuss how to test if a Laurent monomial $x^e$ evaluates to the same value at two points $v$ and $w$.
In our context, where each component $e_j$ of the exponent vector $e = (e_1,\dots,e_n)$ has poly-sized bit-lengths, it is unreasonable to evaluate the Laurent monomials explicitly, because the answer may very well require exponentially large bit-length.
Yet, it is possible to check if $v^e = w^e$ efficiently.
We describe a simple algorithm based on g.c.d.'s, which has appeared before (see, for example, \cite{ESY14}) in the case where the entries of $v$ and $w$ are in $\ZZ$ (or equivalently $\QQ$).
The result is much older; for example, it follows from the results in~\cite{BDS}, as mentioned in~\cite{Ge93}, which gives a generalization to number fields.%
\footnote{In particular Ge's result~\cite{Ge93} implies that Theorem~\ref{thm:main} extends to the case where the entries of $v$ and $w$ are taken from some algebraic number field.}
Here we present a short self-contained proof and then follow up with the rather simple extension to Gaussian rationals.

\begin{lemma}
Suppose $a_1,\dots,a_k,b_1,\dots,b_r \in\QQ$ and $e_1,\dots,e_k, f_1,\dots,f_r \in \ZZ$ have bit-lengths at most $s$.
Then, in $\poly(k,r,s)$-time, we can decide if $\prod_{i=1}^k a_i^{e_i} = \prod_{j=1}^r b_j^{f_j}$.
\end{lemma}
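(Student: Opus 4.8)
The plan is to reduce the question $\prod_{i=1}^k a_i^{e_i} = \prod_{j=1}^r b_j^{f_j}$ to an equivalent statement over a common set of ``building blocks'' of bounded bit-length. First I would clear denominators: writing each rational $a_i = p_i/q_i$ and $b_j = p'_j/q'_j$ in lowest terms, and separating positive and negative exponents, the equality becomes an equality of the form $\prod_\ell u_\ell^{g_\ell} = \prod_m t_m^{h_m}$ where the $u_\ell, t_m$ are \emph{positive integers} (the various $p_i,q_i,p'_j,q'_j$) and the $g_\ell, h_m$ are \emph{nonnegative} integers, all of bit-length $\poly(k,r,s)$. Of course one must also handle signs and zeros separately: if any base is $0$ the answer is immediate (the product is $0$ iff some base with positive exponent is $0$), and the sign of each side is determined by the parities of the exponents on the negative bases, which is a trivial $\poly$-time check. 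So it suffices to decide equality of two products of positive integers with positive integer exponents.

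Next I would build a common refinement of the bases into pairwise-coprime parts. Concretely, starting from the finite list of positive integers $N_1,\dots,N_s$ (all the $u_\ell$ and $t_m$ together), one repeatedly replaces any pair $N_a, N_b$ with $\gcd > 1$ by $\gcd(N_a,N_b)$ and the cofactors $N_a/\gcd$, $N_b/\gcd$, until the list consists of pairwise coprime integers $d_1,\dots,d_t$ (the ``gcd-free basis''); each original $N_a$ is then a product $\prod_c d_c^{\alpha_{ac}}$ with exponents $\alpha_{ac}$ of polynomial bit-length, and this whole computation runs in $\poly(k,r,s)$ time since each step strictly decreases the total bit-length of the list and gcd is polynomial-time. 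Substituting these factorizations into both sides and collecting exponents, the two sides are equal as integers if and only if, for every $c$, the total exponent of $d_c$ on the left equals that on the right (here I use that the $d_c$ are pairwise coprime, so the factorization into the $d_c$'s is unique up to the exponents). These total exponents are sums of $\poly(k,r,s)$ many integers of polynomial bit-length, hence computable and comparable in polynomial time. That settles the rational case.

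For the extension to Gaussian rationals $\QQ(i)$, I would exploit that $\ZZ[i]$ is a Euclidean domain, so gcd's exist and can be computed in polynomial time there as well; the same gcd-free-basis construction produces pairwise-coprime Gaussian integers, and unique factorization into them reduces the equality to comparing integer exponents as before — with the one extra bookkeeping item that $\ZZ[i]^\times = \{1,i,-1,-i\}$, so one must track a global unit, which again amounts to checking a residue mod $4$ of a polynomially computable linear combination of the exponents. Alternatively, and perhaps more cleanly, one can avoid Gaussian arithmetic entirely by the norm/argument trick: $\prod_i a_i^{e_i} = \prod_j b_j^{f_j}$ in $\CC^\times$ holds iff the moduli agree and the arguments agree; $|a_i|^2 \in \QQ$, so equality of moduli squared is an instance of the rational case already solved, and equality of arguments can be handled by a separate $\ZZ[i]$-gcd computation (or by reducing to the rational case via $\Re$ and $\Im$ parts). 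The main obstacle — and the one genuinely needing care — is the bit-length bookkeeping: one must verify that the gcd-free basis and all the exponent vectors stay of polynomial bit-length through the (iterated) gcd computations, so that ``compare the total exponents'' is genuinely a polynomial-time comparison of polynomial-size integers rather than a naive exponentiation that would blow up. Everything else is routine.
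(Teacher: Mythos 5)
Your argument is correct, and it takes a genuinely different (though related) route from the paper's. The paper gives a direct iterative cancellation: after clearing denominators and moving terms so all exponents are nonnegative, repeatedly pick a base $a_l$ on the left and $b_m$ on the right with $\gcd(a_l,b_m)=d\geq 2$, divide both by $d$, and attach $d^{|e_l-f_m|}$ to the side with the larger exponent; the key observation is that the product $P$ of all bases drops to $P/d^2\le P/4$ each step, so the process terminates in $O(\log P)=\poly(k,r,s)$ steps, at which point the two sides have pairwise coprime bases and can only agree if both are trivial. You instead first compute a \emph{gcd-free basis} (coprime factor refinement) of all bases jointly, express every $a_i,b_j$ as a product of the refined primes $d_c$, and then reduce the problem to comparing the total exponent of each $d_c$ on the two sides. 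This is essentially the Bach--Driscoll--Shallit approach, which is in fact the reference \cite{BDS} the paper points to as the origin of the result, whereas the paper opts for a shorter self-contained variant. Both are polynomial time; your approach yields a canonical normal form (useful if one later wants separating witnesses or to compare against several other products), while the paper's is a leaner in-place cancellation that avoids building the refinement explicitly. You are also more careful than the paper's proof text about signs and zeros, which is a fair point. One small caveat: your parenthetical suggestion of handling the argument comparison in the Gaussian case ``by reducing to the rational case via $\Re$ and $\Im$ parts'' does not obviously work, since $\Re$ and $\Im$ of the two products are not themselves of the monomial form the rational lemma handles and cannot be computed directly at exponential exponent sizes; but your primary route through $\ZZ[i]$-gcds with unit bookkeeping is sound and matches the paper's (brief) extension to Gaussian rationals.
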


\begin{proof}
By clearing denominators, we may assume that $a_1,\dots,a_k,b_1,\dots,b_r$ are integers.
By moving terms to the other side, we can further assume w.l.o.g.that all $e_i, f_j \geq 0$.
Pick some $a_l$ and some~$b_m$ that are not coprime. Then, consider $d = {\rm gcd}(a_l, b_m) \geq 2$. W.l.o.g., we can assume $e_l \geq f_m$.
Then, test if $ d^{e_l - f_m} (a'_l)^{e_l}\prod_{i \neq l} a_i^{e_i}  =  (b'_m)^{f_m} \prod_{j \neq m} b_j^{f_j}$, where $a'_l = a_l/d$ and $b'_m = b_m/d$.
This is an iterative procedure which stops when each $a_i$ is coprime to $b_j$. At which point, unless all $a_i$'s and $b_j$'s are equal to $1$, both sides cannot be equal.

The question is how long does such an iterative procedure take.
Consider the quantity $P:= |a_1\cdots a_k b_1 \cdots b_r|$.
After applying one step, the resulting quantity $P'$ satisfies
$P' = P/d^2 \le P/4$. Since initially, $P$ is $2^{\poly(k,r,s)}$-sized,
there are at most a polynomial number of iterative steps.
Hence, the entire procedure takes $\poly(k,r,s)$-time.
\end{proof}

An analogous result with the same proof holds for the ring $\ZZ[i]$ of Gaussian integers and its quotient field $\QQ(i)$ of Gaussian rationals,
using that this ring has unique factorization into irreducible elements.
In the following proposition, we assume that a Gaussian rational $a = \alpha + i \beta \in \QQ(i)$ is described by giving the encodings of $\alpha$ and $\beta$ in binary.

\begin{proposition}\label{prop:monomial}
Suppose $a_1,\dots,a_k,b_1,\dots,b_r\in \QQ(i)$ and $e_1,\dots,e_k, f_1,\dots,f_r \in \ZZ$ all have bit-lengths bounded by $s$.
Then, in $\poly(k,r,s)$-time, we can decide if $\prod_{i=1}^k a_i^{e_i} = \prod_{j=1}^r b_j^{f_j}$.
\end{proposition}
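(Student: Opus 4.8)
The plan is to mirror the argument of the preceding lemma over $\QQ$, replacing the ring $\ZZ$ with the ring $\ZZ[i]$ of Gaussian integers, which is a principal ideal domain and in particular has unique factorization into irreducibles. First I would reduce to the case where all the $a_i$ and $b_j$ lie in $\ZZ[i]$: each $a_i = \alpha_i + i\beta_i$ with $\alpha_i,\beta_i \in \QQ$ can be written as $a_i = \gamma_i / \delta_i$ with $\gamma_i \in \ZZ[i]$ and $\delta_i \in \ZZ_{>0}$ of bit-length polynomial in $s$ (clear the common denominator of $\alpha_i$ and $\beta_i$), and then clearing these denominators across the equation $\prod a_i^{e_i} = \prod b_j^{f_j}$ only multiplies the total bit-length by a polynomial factor. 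Next, by moving factors with negative exponents to the opposite side, I may assume all exponents $e_i, f_j$ are nonnegative integers, again with no blow-up in bit-length.

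At this point the task is to decide an equality $\prod_{i=1}^k a_i^{e_i} = \prod_{j=1}^r b_j^{f_j}$ with $a_i, b_j \in \ZZ[i] \setminus \{0\}$ and $e_i, f_j \in \ZZ_{\ge 0}$. I would run exactly the gcd-based iteration from the previous lemma: while some $a_l$ and some $b_m$ share a nonunit common divisor, compute $d = \gcd(a_l, b_m)$ in $\ZZ[i]$ (this is polynomial-time, e.g. by the Euclidean algorithm, since $\ZZ[i]$ is Euclidean with respect to the norm), write $a_l = d\,a_l'$, $b_m = d\,b_m'$, assume w.l.o.g. $e_l \ge f_m$, and replace the left side's factor $a_l^{e_l}$ by $d^{e_l - f_m}(a_l')^{e_l}$ and the right side's factor $b_m^{f_m}$ by $(b_m')^{f_m}$. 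The iteration terminates when every $a_i$ is coprime to every $b_j$ in $\ZZ[i]$; by unique factorization, the two products are then equal if and only if, after cancelling units, every remaining $a_i$ and $b_j$ is a unit (i.e. lies in $\{\pm 1, \pm i\}$) and the accumulated unit factors on both sides agree — a condition that can be checked directly. For the running time, I would track the quantity $P := \prod_i N(a_i) \cdot \prod_j N(b_j)$, where $N$ denotes the Gaussian-integer norm; each step divides $P$ by $N(d)^2 \ge 4$, and $P$ is initially bounded by $2^{\poly(k,r,s)}$, so only polynomially many steps occur, each of polynomial cost.

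The only genuinely new ingredient over the $\QQ$ case is the arithmetic in $\ZZ[i]$: one needs that gcd computation is polynomial-time and that the norm multiplicativity $N(xy) = N(x)N(y)$ gives the same geometric decrease of the potential $P$. Both are standard. The main obstacle to be careful about is bookkeeping the unit ambiguity — over $\ZZ$ one only worries about signs, but over $\ZZ[i]$ there are four units $\{1,-1,i,-i\}$, so when forming $\gcd(a_l, b_m)$ and the quotients $a_l', b_m'$ one must accumulate the unit that is split off and, at the end, verify that the product of all accumulated units on the left equals that on the right. This is a finite check and does not affect the complexity. Hence the iteration runs in $\poly(k,r,s)$ time, which proves the proposition.
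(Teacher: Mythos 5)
Your proposal is correct and follows exactly the paper's approach: the paper's proof of this proposition is the single remark that the gcd-based argument of the preceding lemma over $\ZZ$ ``holds with the same proof'' for $\ZZ[i]$, which is a Euclidean domain and hence a UFD. You have simply spelled out what that one-liner leaves implicit — clearing denominators into $\ZZ[i]$, running the same Euclidean gcd iteration with the norm $N(\cdot)$ replacing absolute value as the decreasing potential, and handling the slightly larger unit group $\{\pm 1,\pm i\}$ — so there is nothing genuinely different here.
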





\begin{remark}\label{rem:float}
For computational purposes, in many instances, numbers are described by their `floating point' representations.
The floating point description of a Gaussian rational $a \in \QQ(i)$ is described by giving the binary encodings of $\alpha,\beta \in \QQ$ and $p \in \ZZ$ such that $a = (\alpha+ i \beta) 2^p$.
If we assume that $a_1,\dots,a_k,b_1,\dots,b_r\in \QQ(i)$ in the proposition above are given by their floating point descriptions, we can still decide monomial equivalence in polynomial time.
Indeed, if we write each $a_j = (\alpha_j + i \beta_j) 2^{p_j}$ and $b_j = (\gamma_j + i \delta_j) 2^{q_j}$, then deciding whether $\prod_{j=1}^k a_j^{e_j} = \prod_{j=1}^r b_j^{f_j}$ simplifies to deciding if
\[
\left(\prod_{j=1}^k (\alpha_j + i \beta_j)^{e_j} \right) \cdot 2^{\sum_{j=1}^k e_j p_j} = \left(\prod_{j=1}^r (\gamma_j + i \delta_j)^{f_j}\right) \cdot 2^{\sum_{j=1}^r f_j q_j},
\]
which can again be interpreted as an instance of Proposition~\ref{prop:monomial} and hence can be checked in polynomial time.
Since all other computations in our algorithms only involve supports of vectors, it follows that all results in this paper generalize to this input model, as claimed in footnote~\ref{foot:float}.

An even easier special case arises for numbers of the form $a=2^p$, with $p\in\QQ$ specified by its binary encoding, as in the perfect matching application discussed in Section~\ref{subsec:applications}.
Indeed, if $a_j = 2^{p_j}$ and $b_j = 2^{q_j}$ for $j\in[n]$, then deciding whether $\prod_{j=1}^k a_j^{e_j} = \prod_{j=1}^r b_j^{f_j}$ simply amounts to verifying whether $\sum_{j=1}^n p_j e_j = \sum_{j=1}^n q_j f_j$, which is clearly possible in polynomial time.
\end{remark}

\section{Orbit closure intersection and explicit separating invariants}\label{sec:oci-to-oe}
In this section, we discuss how to solve the orbit closure intersection problem in polynomial time by efficiently reducing it to the orbit equality problem.
The problem of orbit closure intersection has a manifestly analytic point of view, but also an algebraic point of view by Mumford's theorem, Theorem~\ref{thm:Mumford}.
In other words, when orbit closures of two points do not intersect, there is an invariant polynomial that takes different values on both points, serving as a ``witness'' to the fact that the orbit closures do not intersect.
Accordingly, given two vectors whose orbit closures do not intersect, we also explain how to efficiently construct an arithmetic circuit which computes an invariant monomial separating the two vectors.

\subsection{Reduction to orbit equality}

The key idea is the following.
Recall from Theorem~\ref{th:uniqueclosedorbit} that any orbit closure~$\overline{O_v}$ contains as unique closed orbit~$O_{\tilde v}$, and that two orbit closures intersect if and only if they contain the same closed orbit.
In Corollary~\ref{cor:CO}, we showed that the unique closed orbit has a concrete polyhedral characterization: we can take $\tilde v = v|_{\esupp(v)}$, the restriction of the vector~$v$ to its essential support.
Accordingly, the map~$v\mapsto \widetilde{v}$ provides a reduction of the orbit closure intersection problem for~$\rho_M$ to the the orbit equality problem for~$\rho_M$.
The following lemma shows that the essential support (and hence the reduction map) can be computed in polynomial time by using linear programming.

\begin{lemma}\label{lem-esupp-poly}
Let $M \in\Mat_{d,n}(\ZZ)$ define an $n$-dimensional representation of $T = (\CC^\times)^d$, and let~$v\in \CC^n$.
For $k\in\supp(v)$, we have $k \in \esupp(v)$ if and only if there is a non-negative linear combination $\sum_{j \in \supp(v)} c_j m^{(j)} = 0$ such that $c_k > 0$.
If the bit-lengths of the entries of~$M$ are bounded by $b$, the latter can be decided in $\poly(d,n,b)$-time by using linear programming.
\end{lemma}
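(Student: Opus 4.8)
The plan is to establish the claimed characterization of the essential support and then to argue that the resulting feasibility question is a linear program of polynomial size. First I would recall from the definitions that $k \in \esupp(v)$ means $k \in \supp(v)$ and $m^{(k)} \in L(v) = C(v) \cap (-C(v))$, where $C(v)$ is the Newton cone generated by the weights $m^{(j)}$ for $j \in \supp(v)$. The condition $m^{(k)} \in -C(v)$ says precisely that $-m^{(k)} = \sum_{j \in \supp(v)} d_j m^{(j)}$ for some $d_j \geq 0$, equivalently $m^{(k)} + \sum_{j} d_j m^{(j)} = 0$; setting $c_k = d_k + 1 > 0$ and $c_j = d_j \geq 0$ for $j \neq k$ yields a nonnegative combination $\sum_{j \in \supp(v)} c_j m^{(j)} = 0$ with $c_k > 0$. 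Conversely, given such a combination, one checks $m^{(k)} \in C(v)$ trivially (it is one of the generators) and $m^{(k)} = -\frac{1}{c_k}\sum_{j \neq k} c_j m^{(j)} \in -C(v)$, so $m^{(k)} \in L(v)$ and hence $k \in \esupp(v)$. This reproduces the content of Lemma~\ref{lem-esupp} but now phrased over the reals rather than the integers, which is what we need for linear programming; note that by homogeneity a rational (equivalently real) solution with $c_k > 0$ exists if and only if one with $c_k \geq 1$ exists.

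Next I would set up the explicit linear feasibility problem. Fix $k \in \supp(v)$ and let $S = \supp(v)$. We ask whether the system
\begin{align*}
  \sum_{j \in S} c_j\, m^{(j)} &= 0, & c_j &\geq 0 \ \ (j \in S), & c_k &\geq 1
\end{align*}
has a solution $c \in \RR^S$. This is a linear program (say, with trivial objective, or one may maximize $c_k$) with at most $n$ variables, $d$ equality constraints, and $|S| + 1 \leq n+1$ inequality constraints, whose coefficients are entries of $M$ and hence have bit-length at most $b$. By the standard polynomial-time solvability of linear programming over the rationals — for instance via the ellipsoid method, as in \cite{GLS-book,schrijver:86} — feasibility of this system can be decided in $\poly(d,n,b)$-time. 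Running this test for each of the at most $n$ indices $k \in \supp(v)$ then computes the set $\esupp(v)$ within the same time bound.

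The argument is essentially routine once the reduction to the cone-membership condition is made; the only point requiring a little care is the passage from the integral statement of Lemma~\ref{lem-esupp} to the real (LP) statement, i.e.\ verifying that the existence of a nonnegative \emph{real} combination with $c_k > 0$ is equivalent to the existence of a nonnegative \emph{integral} one. This follows because the set of real solutions is a rational polyhedral cone, so it is spanned by rational — hence, after clearing denominators, integral — vectors, and a rational solution with $c_k > 0$ can be scaled to an integral one with $c_k > 0$. I do not anticipate a genuine obstacle here; the main thing to get right is simply citing the correct polynomial-time LP result and confirming the input size of the LP is polynomial in $(d,n,b)$.
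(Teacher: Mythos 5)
Your proof is correct and follows essentially the same route as the paper: unpack $k\in\esupp(v)$ as $m^{(k)}\in L(v)=C(v)\cap(-C(v))$ (the content of Lemma~\ref{lem-esupp} together with Proposition~\ref{pro:3.1}), recast this as the feasibility of the homogeneous system $\sum_{j\in\supp(v)}c_j m^{(j)}=0$, $c\geq 0$, $c_k\geq 1$, and solve it by polynomial-time LP. The only difference is that you make explicit the routine passage between real, rational, and integral nonnegative solutions, which the paper's two-line proof leaves implicit.
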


\begin{proof}
The characterization follows from Proposition~\ref{pro:3.1} and Lemma~\ref{lem-esupp}.
It amounts to a basic decisional problem of linear programming, which is well known to be solvable in polynomial time,
see~\cite{GLS-book}.
\end{proof}

The above proof also shows that a nonvanishing invariant monomial as in Lemma~\ref{lem-esupp-poly} can be computed in polynomial time.
As explained above, we arrive at the following algorithm and results.

\begin{algorithm}\label{algo:oci-to-oe} \emph{Reduction of orbit closure intersection to orbit equality:}
\begin{description}
\item[Input] $M \in \Mat_{d,n}(\ZZ), v,w \in \QQ(i)^n$.
\item[Step 1] Compute $\esupp(v)$ in the following way:
For each $k\in\supp(v)$, use linear programming to determine if there is a non-negative linear combination $\sum_{j \in \supp(v)} c_j m^{(j)} = 0$ with~$c_k>0$.
The set $\esupp(v)$ consists of all $k\in\supp(v)$ for which this is the case.
\item[Step 2] Compute $\esupp(w)$ in the same way.
\item[Step 3] Return $\tilde v = v|_{\esupp(v)}$ and $\tilde w = w|_{\esupp(w)}$.
\end{description}
\end{algorithm}

\begin{corollary} \label{cor:red-oci-oe}
Let $M \in \Mat_{d,n}(\ZZ)$ describe an $n$-dimensional representation of $T = (\CC^\times)^d$.
Further, let $v,w \in \QQ(i)^n$ and assume the bit-lengths of the entries of $M,v$, and $w$ are bounded by~$b$.
Then there is a $\poly(d,n,b)$-time reduction that reduces the problem of deciding $\overline{O_v} \cap \overline{O_w} \ne \emptyset$
to the problem of deciding if $O_{\widetilde{v}} = O_{\widetilde{w}}$, where $\widetilde{v}$ and $\widetilde{w}$ have bit-lengths bounded by $b$.
\end{corollary}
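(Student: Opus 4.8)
The plan is to combine the structural results already established in Sections~\ref{sec:inv} and~\ref{sec:rep-tori} with the efficiency of linear programming. The mathematical content of the reduction is essentially contained in Theorem~\ref{th:uniqueclosedorbit}~(2) together with Corollary~\ref{cor:CO}: the former says $\overline{O_v}\cap\overline{O_w}\ne\emptyset$ if and only if $O_{\widetilde v}=O_{\widetilde w}$, and the latter identifies the unique closed orbit explicitly as $O_{\widetilde v}$ with $\widetilde v = v|_{\esupp(v)}$. So correctness of the reduction is immediate once we know that the map $v\mapsto\widetilde v$ is computable in the stated time bound and does not blow up bit-lengths. The only thing left to verify is the algorithmic claim, and that is what Lemma~\ref{lem-esupp-poly} and Algorithm~\ref{algo:oci-to-oe} are for.

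First I would spell out the reduction map precisely: on input $(M,v,w)$, Algorithm~\ref{algo:oci-to-oe} outputs $(M,\widetilde v,\widetilde w)$, and we claim $(M,v,w)$ is a ``yes'' instance of orbit closure intersection precisely when $(M,\widetilde v,\widetilde w)$ is a ``yes'' instance of orbit equality. For correctness, invoke Corollary~\ref{cor:CO} to see that $\widetilde v$ as computed by the algorithm equals $v|_{\esupp(v)}$, which is the representative of the unique closed orbit in $\overline{O_v}$, and similarly for $w$; then apply Theorem~\ref{th:uniqueclosedorbit}~(2). For the running time, observe that computing $\supp(v)$ and $\supp(w)$ is trivial, and that for each $k\in\supp(v)$ the membership test $k\in\esupp(v)$ is, by Lemma~\ref{lem-esupp-poly}, a feasibility question for the linear system $\sum_{j\in\supp(v)} c_j m^{(j)} = 0$, $c_j\ge 0$, $c_k\ge 1$ in the rational variables $c_j$, which is solvable in $\poly(d,n,b)$-time by standard linear programming (e.g.\ \cite{GLS-book}). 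Since we run at most $n$ such tests for $v$ and $n$ for $w$, the total time is $\poly(d,n,b)$.

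For the bit-length bound on the output, the key point is that $\widetilde v$ is literally a coordinate restriction of $v$: its entries are among the entries of $v$ (the rest being zero). Hence the maximal bit-length of the entries of $\widetilde v$ is at most that of $v$, so bounded by $b$; the same holds for $\widetilde w$. The weight matrix $M$ is passed through unchanged, so its bit-length is unaffected. This is exactly what the statement of Corollary~\ref{cor:red-oci-oe} requires.

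I do not anticipate any real obstacle here, since all the heavy lifting was done earlier: the identification of the closed orbit with the lineality space (Proposition~\ref{prop:orbit-structure} and Corollary~\ref{cor:CO}), which ultimately rests on the torus-specific strengthening of the Hilbert--Mumford criterion (Theorem~\ref{thm:hmtori}), and the reduction of closed-orbit equality to orbit closure intersection (Theorem~\ref{th:uniqueclosedorbit}). The one point that deserves a moment's care is the translation, in Lemma~\ref{lem-esupp-poly}, between the integrality of $\esupp(v)$ (defined via non-negative \emph{integer} combinations through Proposition~\ref{pro:3.1} and Lemma~\ref{lem-esupp}) and the \emph{rational} feasibility problem actually solved by linear programming; but this is harmless, since a rational non-negative solution with $c_k>0$ can be cleared to an integral one, and conversely, so the two feasibility questions have the same answer. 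With that observed, the corollary follows by assembling Lemma~\ref{lem-esupp-poly}, Corollary~\ref{cor:CO}, and Theorem~\ref{th:uniqueclosedorbit}~(2), exactly as Algorithm~\ref{algo:oci-to-oe} is set up to do.
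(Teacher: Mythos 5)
Your proposal matches the paper's own argument essentially line for line: it reduces orbit closure intersection to orbit equality via the map $v\mapsto \widetilde v = v|_{\esupp(v)}$, justified by Theorem~\ref{th:uniqueclosedorbit}~(2) and Corollary~\ref{cor:CO}, with efficiency from Lemma~\ref{lem-esupp-poly} and the observation that restriction only shrinks entries. The extra remark about clearing denominators to pass between rational and integral feasibility of the LP is a correct and worthwhile clarification of a point the paper leaves implicit in Lemma~\ref{lem-esupp-poly}.
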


\begin{proof}[Proof of Theorem~\ref{thm:main}, part $(2)$]
This follows from part~(1), combined with Corollary~\ref{cor:red-oci-oe}.
\end{proof}

\subsection{Explicit separating invariant}
For torus actions, our reduction of orbit closure intersection to orbit equality will give us an invariant Laurent monomial that takes different values on the two points.
But a separating invariant Laurent monomial itself does \emph{not} serve as a witness (at least not naively, one needs further properties about the support of the Laurent monomial for it to serve as a witness).
We now prove Corollary~\ref{cor:sep-invar}, which asserts that given two vectors we can nevertheless efficiently construct an arithmetic circuit which computes an invariant \emph{monomial} separating them.

\begin{proof} [Proof of Corollary~\ref{cor:sep-invar}]
We already noted that, by linear programming, we can compute the essential supports of $v$ and $w$
in $\poly(d,n,b)$-time. We distinguish two cases.

\textbf{Case 1:} $\esupp(v) \ne \esupp(w)$

\noindent
Suppose $k \in \esupp(v) \setminus \esupp(w)$ without loss of generality.
By Lemma~\ref{lem-esupp} there is an invariant monomial $f = \prod_{j \in \supp(v)} x_j^{c_j}$ such that $c_k > 0$.
Let us verify that $f(v) \ne f(w)$.
We clearly have $f(v)\ne 0$.
On the other hand, $f(w)=f(\widetilde{w})=0$, since $\widetilde{w} \in \overline{O_w}$, but $k$ is not contained in $\supp(\widetilde{w})=\esupp(w)$.
So we indeed have $f(v) \ne f(w)$.
In addition, we can find $(c_1,\dots,c_n)$ in $\poly(d,n,b)$-time by linear programming (Lemma~\ref{lem-esupp-poly}),
so we can construct an arithmetic circuit for $f$ in $\poly(d,n,b)$-time by Remark~\ref{rmk:easy-circuit}.

\medskip

{\bf Case 2:} $\esupp(v) =\esupp(w)$

\noindent
Let $S := \esupp(v) =\esupp(w)$.
We assume that $\overline{O_v} \cap \overline{O_w} = \emptyset$, which implies $O_{\widetilde{v}} \cap O_{\widetilde{w}} = \emptyset$.
Thus, by Corollary~\ref{cor:orb sup}, there is an invariant Laurent monomial $f = x^e$ with the property that~$f(\widetilde v) \neq f(\widetilde w)$, and hence $f(v) \neq f(w)$.
Just like in Algorithm~\ref{algo-oe}, we can in $\poly(d,n,b)$-time compute such an exponent vector~$e\in\ZZ^n$, with bit-length of the $e_i$ bounded above by $\poly(d,n,b)$.

Our goal is to produce an invariant monomial that separates $v$ and $w$,
so we need to modify~$f$ so as to get rid of the negative exponents.
In the process, we must ensure that the bit-length of the circuit does not explode.
By Lemma~\ref{lem-esupp}, for each $k \in S$, there exists $c^{(k)}\in\NN^n$  such that $\smash{\sum_{j \in \supp(v)} c^{(k)}_j m^{(j)} = 0}$ and~$\smash{c^{(k)}_k > 0}$.
We can compute $\smash{c^{(k)}}$ in $\poly(d,n,b)$-time by linear programming.
Let $m_k = \smash{x^{c^{(k)}}}$ denote the corresponding invariant monomial.
Put $S_-:= \{j \in S\ | \ e_j < 0\}$.
If~$m_j(v) \neq m_j(w)$ for some $j \in S_-$, then $m_j$ is an explicit separating invariant monomial and we are done by Remark~\ref{rmk:easy-circuit}.
Assume now $m_j(v) = m_j(w)$ for all $j \in S_-$.
Then $\widetilde{f} := x^d := f \cdot \prod_{j\in S_-} m_j^{-e_j}$ is a Laurent monomial that separates~$v$ and~$w$.
We verify now that the exponent vector $d$ has non-negative entries.
By construction, we have for $k\in S_-$,
\begin{align*}
  d_k &= e_k + (-e_k)  c^{(k)}_k  +  \sum_{j \in S_-, j\ne k} (-e_j)\cdot c^{(j)}_k \geq 0 ,
\intertext{since $e_k<0$ and $e_j<0$ for all $j\in S_-$, while $\smash{c^{(k)}_k\geq1}$, and $\smash{c^{(j)}_k \geq 0}$.
For $k\in[n]\setminus S_-$, we have}
  d_k &= e_k + \sum_{j \in S_-} (-e_j)\cdot c^{(j)}_k \geq 0,
\end{align*}
since $e_k \geq 0$ for $k \in S \setminus S_-$ and $e_k = 0$ for $k\not\in S$, while $e_j<0$ for $j \in S_-$.
Altogether, we have shown that indeed all components of $d$ are non-negative.
We finally note that $d$ can be computed in polynomial time, in particular, it has bit-length $\poly(d,n,b)$.
So by Remark~\ref{rmk:easy-circuit}, we can construct an arithmetic circuit of size $\poly(d,n,b)$ that computes $\widetilde{f}$ in $\poly(d,n,b)$-time.
\end{proof}

\section{Orbit closure containment}\label{sec:occ}
In this section, we discuss how to solve the the orbit closure containment problem in polynomial time by efficiently reducing it to the orbit equality problem.

The notion of orbit closure containment is in general quite tricky to capture.
Polynomial invariants do not suffice, since two orbit closures can intersect (hence all polynomial invariants agree) with neither being contained in the other -- this is precisely the difference between the orbit closure intersection and the orbit closure containment problem.
Instead, the key idea for the reduction comes from one-parameter subgroups.
We already discussed in Section~\ref{sec:rep-tori} that if $w \in \overline{O_v}$ then~$O_w$ can be reached from~$v$ by a one-parameter subgroup.
The following proposition gives a concrete polyhedral description of the relevant one-parameter subgroups.

\begin{lemma} \label{lem:tori-occ-justify}
Let $M \in\Mat_{d,n}(\ZZ)$ define an $n$-dimensional representation of $T = (\CC^\times)^d$, and let~$v,w \in \CC^n$.
Then $w \in \overline{O_v}$ if and only if there exists a one-parameter subgroup $\sigma\colon \CC^\times \to T$, so $\sigma(\epsilon) = (\epsilon^{\nu_1},\dots,\epsilon^{\nu_d})$ for some $\nu\in\ZZ^d$, such that
\begin{enumerate}
\item $ \{j \in \supp(v) \ | \ m^{(j)} \cdot \nu =  0\} = \supp(w)$ and $m^{(k)} \cdot \nu > 0$ for all $k \in \supp(v) \setminus \supp(w)$;
\item $O_{(v|_{\supp(w)})} = O_w$.
\end{enumerate}
\end{lemma}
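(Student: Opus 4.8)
The plan is to treat the lemma as bookkeeping around two earlier results: the Hilbert--Mumford criterion for tori (Theorem~\ref{thm:hmtori}) and the explicit description of limits of one-parameter subgroups (Lemma~\ref{lem:tori-1psg}). Both implications are essentially the same computation read in opposite directions.

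For the ``only if'' direction, suppose $w \in \overline{O_v}$. First I would apply Theorem~\ref{thm:hmtori} to obtain a one-parameter subgroup $\sigma(\epsilon) = (\epsilon^{\nu_1},\dots,\epsilon^{\nu_d})$ with $\nu \in \ZZ^d$ such that $\lim_{\epsilon \to 0} \sigma(\epsilon)\cdot v$ exists and lies in $O_w$. By Lemma~\ref{lem:tori-1psg}, existence of the limit forces $m^{(j)}\cdot\nu \geq 0$ for all $j \in \supp(v)$, and the limit equals $v|_S$ with $S = \{ j \in \supp(v) \mid m^{(j)}\cdot\nu = 0 \}$. Since $v|_S \in O_w$, Lemma~\ref{lem:supp-dim}~(1) gives $S = \supp(v|_S) = \supp(w)$, which is the equality in condition~(1); the strict inequality $m^{(k)}\cdot\nu > 0$ for $k \in \supp(v)\setminus\supp(w)$ follows because such $k$ lie outside $S$ yet still satisfy $m^{(k)}\cdot\nu \geq 0$. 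Condition~(2) is then immediate, since $v|_{\supp(w)} = v|_S \in O_w$ means $O_{(v|_{\supp(w)})} = O_w$.

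For the ``if'' direction I would run the same argument in reverse. Given $\sigma$ satisfying~(1) and~(2), condition~(1) says $m^{(j)}\cdot\nu = 0$ on $\supp(w)$ and $m^{(k)}\cdot\nu > 0$ on $\supp(v)\setminus\supp(w)$, hence $m^{(j)}\cdot\nu \geq 0$ for all $j\in\supp(v)$. Lemma~\ref{lem:tori-1psg} then applies and yields $\lim_{\epsilon\to 0}\sigma(\epsilon)\cdot v = v|_S$ with $S = \{ j \in \supp(v) \mid m^{(j)}\cdot\nu = 0\} = \supp(w)$ by~(1). Thus $v|_{\supp(w)} \in \overline{O_v}$, so $\overline{O_{(v|_{\supp(w)})}} \subseteq \overline{O_v}$, and combining with condition~(2) we conclude $w \in O_w = O_{(v|_{\supp(w)})} \subseteq \overline{O_v}$.

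The only point requiring care is the appeal to Theorem~\ref{thm:hmtori} in the forward direction: the classical Hilbert--Mumford criterion (Theorem~\ref{th:HM-crit}) only drives $v$ into a \emph{closed} $G$-stable set, whereas $O_w$ need not be closed, so one genuinely needs the torus-specific strengthening that reaching $O_w$ itself is possible by a one-parameter subgroup. Once that input is granted, there is no further obstacle and the remainder is routine support bookkeeping via Lemma~\ref{lem:supp-dim} and Lemma~\ref{lem:tori-1psg}.
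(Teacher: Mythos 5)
Your proof is correct and takes essentially the same route as the paper: apply Theorem~\ref{thm:hmtori} in the forward direction, read off conditions~(1) and~(2) from Lemma~\ref{lem:tori-1psg}, and reverse the computation for the converse. The closing remark about why the torus-specific Hilbert--Mumford strengthening is needed (since $O_w$ need not be closed) is a nice clarification of a point the paper makes just before stating Theorem~\ref{thm:hmtori}.
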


\begin{proof}
If $w \in \overline{O_v}$, then by Theorem~\ref{thm:hmtori}, we know that there is a one-parameter subgroup~$\sigma$ such that $\lim_{t \to 0} \sigma(t) v \in O_w$.
In particular this implies that $\lim_{t \to 0} \sigma(t) v$ has the same support as~$w$ and has the same orbit as $w$.
Now, both (1) and (2) follow from Lemma~\ref{lem:tori-1psg}.

For the converse, note that, again by Lemma~\ref{lem:tori-1psg}, (1) implies that $\lim_{t \to 0} \sigma(t) v = v|_{\supp(w)} \in \overline{O_v}$, hence it follows that $O_w = O_{(v|_{\supp(w)})} \subseteq \overline{O_v}$ by (2).
\end{proof}

Now, we can give our algorithm to test if $w$ is in the orbit closure of $v$.

\begin{algorithm} \label{algo:occ} \emph{Orbit closure containment:}
\begin{description}
\item[Input]  $M \in \Mat_{d,n}(\ZZ)$ and $v,w \in \QQ(i)^n$.
\item[Step 1] Check if $\supp(w) \subseteq \supp(v)$. If not, $w \notin \overline{O_v}$, so we can stop.
\item[Step 2]  Using linear programming, determine whether there exists a solution $y\in\RR^d$ to the collection of linear equalities $m^{(j)} \cdot \nu = 0$ for each $j \in \supp(w)$ and linear inequalities $m^{(k)} \cdot \nu > 0$ for all $k \in \supp(v) \setminus \supp(w)$.
If there is no solution, then $w \notin \overline{O_v}$, so we can stop.
\item[Step 3]  Use Algorithm~\ref{algo-oe} check whether $O_{(v|_{\supp(w)})} = O_w$. If yes, then $w \in \overline{O_v}$. Else, it is not.
\end{description}
\end{algorithm}

\begin{proof} [Proof of Theorem~\ref{thm:main}, part (3)]
The correctness of Algorithm~\ref{algo:occ}  follows from Lemma~\ref{lem:tori-occ-justify}.
Indeed, condition~(1) in the lemma is satisfied if and only if the algorithm passes the first two steps, and then condition~(2) is tested in the last step.

We still need to argue about the efficiency of the algorithm.
Clearly, step~1 can be done in linear time. 
Step~2 can be done in $\poly(d,n,b)$-time by linear programming.
Step~3 appeals to the orbit equality problem, which by part~(1) of the theorem can be done in $\poly(d,n,b)$-time.
\end{proof}

\section{Orbit problems for compact tori}\label{sec:compact}
So far, we have studied orbit problems for algebraic tori, that is, groups of the form $T = (\CC^\times)^d$.
In this section we consider the groups~$K = (\Ss^1)^d$, where $\Ss^1 = \{z \in \CC^\times \ | \ |z| = 1\}$.
Such groups are often called \emph{compact tori}.
Indeed, any commutative compact connected Lie group is of this form.
Besides the fundamental algorithmic interest in this setting, it is also important in applications.
For example, in physics, symmetries are often given by compact group actions, such as compact tori~\cite{guillemin1990symplectic,audin2012torus}.
We give further complexity-theoretic motivation below.

The compactness implies that orbits are closed and so the three problems in Problem~\ref{prob:main} coincide.
In this section, we show how to solve the orbit equality problem for a compact torus by reducing it to orbit equality for the corresponding algebraic torus.
Subsequently, we give an alternative reduction that works not only for tori but in fact for any connected reductive group such as~$\SL_n$.

To start, we note that it is known that any (continuous) finite-dimensional representation of $K = (\Ss^1)^d$
extends to a representation of $T = (\CC^\times)^d$ \cite{weyl:39}.
In particular, representations can be specified as before by a weight matrix $M \in\Mat_{d,n}(\ZZ)$.
Then we have the following result:

\begin{proposition}
Let $M \in\Mat_{d,n}(\ZZ)$ define an $n$-dimensional representation of $T = (\CC^\times)^d$ and $K = (\Ss^1)^d$. Let~$v,w \in \CC^n$.
Then, $O_{K,v} = O_{K,w}$ if and only if $O_{T,v} = O_{T,w}$ and $|v_j| = |w_j|$ for all $j$.
\end{proposition}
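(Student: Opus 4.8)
The plan is to analyze the torus orbit in terms of the weight matrix and then separate out the "phase" part (controlled by $K$) from the "modulus" part. First I would recall the concrete description of the action: for $t\in T$ and $v\in\CC^n$, we have $(t\cdot v)_j = \left(\prod_{i=1}^d t_i^{m_{ij}}\right)v_j$, so $O_{T,v}=O_{T,w}$ forces $\supp(v)=\supp(w)$ by Lemma~\ref{lem:supp-dim}, and likewise $O_{K,v}=O_{K,w}$ forces equal supports; so throughout I may assume $S:=\supp(v)=\supp(w)$, since otherwise both sides of the claimed equivalence fail. For $t\in T$, writing $t_i = r_i e^{i\theta_i}$ with $r_i>0$, the $j$-th coordinate of $t\cdot v$ has modulus $\left(\prod_i r_i^{m_{ij}}\right)|v_j|$ and its argument is shifted by $\sum_i m_{ij}\theta_i$. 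The element $t$ lies in $K$ exactly when all $r_i=1$, i.e. precisely when it acts by pure phases on every coordinate, leaving all moduli $|v_j|$ unchanged.

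The "only if" direction is then immediate: if $O_{K,v}=O_{K,w}$, then $K\subseteq T$ gives $O_{T,v}=O_{T,w}$, and writing $w=k\cdot v$ for some $k\in K$ and using that $k$ acts by phases shows $|w_j|=|v_j|$ for all $j$ (both sides being $0$ off $S$). For the "if" direction, suppose $O_{T,v}=O_{T,w}$ and $|v_j|=|w_j|$ for all $j$. Pick $t=(r_i e^{i\theta_i})\in T$ with $t\cdot v = w$. Comparing moduli on each coordinate $j\in S$ gives $\prod_i r_i^{m_{ij}}\,|v_j| = |w_j| = |v_j|$, hence $\prod_i r_i^{m_{ij}}=1$ for every $j\in S$; equivalently, writing $\mu_i := \log r_i \in \RR$, we get $\sum_i m_{ij}\mu_i = 0$ for all $j\in S$, i.e. the real vector $\mu=(\mu_1,\dots,\mu_d)$ lies in the kernel of the submatrix $M_S$ (transposed). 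The element $s:=(r_1,\dots,r_d)\in T$ (all positive reals) therefore acts trivially on $v$ — it fixes $v$ coordinatewise on $\supp(v)=S$ and the off-support coordinates are zero anyway — so $t' := t\cdot s^{-1} = (e^{i\theta_1},\dots,e^{i\theta_d}) \in K$ satisfies $t'\cdot v = t\cdot(s^{-1}\cdot v) = t\cdot v = w$. Hence $w\in O_{K,v}$, and by symmetry $O_{K,v}=O_{K,w}$.

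The only slightly delicate point is the observation that a positive-real element $s\in T$ with $\prod_i r_i^{m_{ij}}=1$ for all $j\in\supp(v)$ fixes $v$: on coordinates in $\supp(v)$ it scales $v_j$ by the factor $\prod_i r_i^{m_{ij}}=1$, and on coordinates outside $\supp(v)$ the entry is $0$, so $s\cdot v = v$. This is where the restriction to the support (rather than all of $[n]$) is essential, and it is exactly the same phenomenon as in Proposition~\ref{prop:inv-Laurent} and Lemma~\ref{lem:supp-dim}: the off-support coordinates may be scaled nontrivially by $s$, but since they vanish this is invisible. I do not expect any real obstacle here; the argument is a clean polar-decomposition splitting $T = K \times (\RR_{>0})^d$ combined with the trivial observation that the noncompact factor acts trivially on a vector once its moduli are matched on the support.
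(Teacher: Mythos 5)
Your proof is correct and follows essentially the same route as the paper: the polar decomposition $t_i = r_i e^{i\theta_i}$, showing that matching moduli force $\prod_i r_i^{m_{ij}}=1$ on $\supp(v)$ so the positive-real factor fixes $v$, and concluding that the phase part alone carries $v$ to $w$. You merely spell out the step the paper labels ``easy to see,'' so there is no substantive difference.
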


\begin{proof}
Since $K \subseteq T$, it is clear that if $O_{K,v} = O_{K,w}$, then $O_{T,v} = O_{T,w}$ and $|v_j| = |w_j|$ for all $j$.

Conversely, suppose $O_{T,v} = O_{T,w}$ and $|v_j| = |w_j|$ for all $j$.
Then, there is some $t \in T$ such that~$t \cdot v = w$.
Write $t = (t_1,\dots,t_d)$ and write each $t_i = r_i \cdot e^{\mathrm i \theta_i}$, with $r_i > 0$ and $\theta_i \in \RR$.
Then, it is easy to see that we must have $(e^{\mathrm i \theta_1},\dots,e^{\mathrm i \theta_d}) \cdot v = w$.
Thus $v$ and $w$ are in the same $K$-orbit. 
\end{proof}

\begin{proof} [Proof of Corollary~\ref{cor:main}]
We are given $M \in \Mat_{d,n}(\ZZ)$ and $v,w \in \QQ(i)^n$.
By the above proposition, we need to check if $O_{T,v} = O_{T,w}$ and if $|v_j| = |w_j|$ for all $j$.
The former can be done in polynomial time by Theorem~\ref{thm:main} and the latter can clearly be done in polynomial time.
\end{proof}

Before proceeding we give some further context and motivation.
Algorithms for the null cone membership problem (given a rational representation $\rho:G \rightarrow \GL(V)$ of a reductive group $G$ and $v \in V$, decide if $0 \in \overline{O_v}$) based on optimization methods have emerged in recent years.
They take advantage of the fact that $0 \in \overline{O_v}$ if and only if one can drive the norm to~$0$ along the orbit~$O_v$.
This can be viewed as an optimization problem where one tries to minimize (infimize) the norm along the orbit.
While this is not a convex optimization problem, it is geodesically convex by the Kempf-Ness theory~\cite{KN:78}, which allows for many of the ideas to be modified appropriately.
As far as the orbit closure intersection problem is concerned, the natural extension of this idea is as follows:
Given $v,w \in V$, first use an optimization algorithm to approximately find a point in each orbit closure with minimal norm;
let us call these points $\check{v}$, $\check{w}$.
Then, appealing to the Kempf-Ness theory again, we have that $\overline{O_v} \cap \overline{O_w} \neq \emptyset$ if and only if $\check{v}$ and $\check{w}$ are in the same orbit for a maximal compact subgroup~$K$ of $G$.
In this way, the orbit closure intersection problem for~$G$ can be reduced to the orbit equality problem for the maximal compact subgroup~$K$.
In fact, for the so-called left-right action of $\SL_n \times \SL_n$ on matrix-tuples, this idea was carried out successfully to obtain a polynomial-time algorithm for orbit closure intersection~\cite{AZGLOW}.
This further emphasizes the importance of the orbit equality problem for compact Lie group actions.

Here we report on an interesting phenomenon, which provides a kind of converse to the strategy explained above.
Namely, for any action of a connected reductive group~$G$, the orbit equality problem for the maximal compact subgroup $K\subseteq G$ is equivalent to an orbit intersection (or equality) problem for a related action of~$G$!
As this result is not crucial to the rest of the paper and requires significantly different background, we will be brief in our explanations.
We denote by~$V^*$ the contragredient or dual representation of $V$.

\begin{theorem}\label{thm:compacttogeneral}
Let $\rho\colon G \to \GL(V)$ be a finite-dimensional representation of a connected reductive group~$G$.
Let $K$ be a maximal compact subgroup of~$G$, and $\langle\cdot,\cdot\rangle$ be a $K$-invariant Hermitian inner product on~$V$.
For $v \in V$, let $\widehat{v} \in V^*$ be defined by~$\widehat{v}(w) := \langle v,w\rangle$.
Then, for $v,w \in V$, the following are equivalent:
\begin{enumerate}
\item $O_{K,v} = O_{K,w}$;
\item $O_{G,(v,\widehat{v})} = O_{G,(w,\widehat{w})}$ in $V \oplus V^*$;
\item The $G$-orbit closures of $(v,\widehat{v})$ and $(w,\widehat{w})$ in $V \oplus V^*$ intersect.
\end{enumerate}
\end{theorem}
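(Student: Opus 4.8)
The plan is to prove the cycle of implications $(1)\Rightarrow(2)\Rightarrow(3)\Rightarrow(1)$, with the final implication being the crux. The first implication is immediate: if $k\in K$ satisfies $kv=w$, then since $\langle\cdot,\cdot\rangle$ is $K$-invariant, one checks directly that the induced action of $k$ on $V^*$ sends $\widehat v$ to $\widehat w$ (using $\langle v, k^{-1}u\rangle = \langle kv, u\rangle$), so $k\cdot(v,\widehat v) = (w,\widehat w)$ and hence $O_{G,(v,\widehat v)} = O_{G,(w,\widehat w)}$ in $V\oplus V^*$. The implication $(2)\Rightarrow(3)$ is trivial since orbit equality implies orbit closures intersect.

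The heart of the matter is $(3)\Rightarrow(1)$, and here the key tool is the Kempf--Ness theory. First I would observe that the norm function $g\mapsto \|g\cdot(v,\widehat v)\|^2 = \|gv\|^2 + \|\widehat{gv}\|_{*}^2$ on $V\oplus V^*$ (with the naturally induced $K$-invariant inner product on $V^*$) is, by Cauchy--Schwarz, intimately tied to $\|gv\|^2$: in fact $\|\widehat{gv}\|_* = \|gv\|$, so the norm of $(gv,\widehat{gv})$ equals $\sqrt2\,\|gv\|$. The crucial structural fact is that the point $(v,\widehat v)$ is always a \emph{critical point} of the norm function along its $G$-orbit — equivalently, it is a minimal-norm (Kempf--Ness) point in its own orbit. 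This is the standard ``doubling trick'': pairing a vector with its own dual under a $K$-invariant form produces a vector on which the moment map vanishes, because the moment-map contribution from $V$ and from $V^*$ cancel. Consequently $O_{G,(v,\widehat v)}$ is closed (Kempf--Ness), and similarly for $(w,\widehat w)$. Since both orbits are closed and their closures intersect by hypothesis $(3)$, the orbits themselves coincide: $O_{G,(v,\widehat v)} = O_{G,(w,\widehat w)}$. Finally, Kempf--Ness theory tells us that two minimal-norm points in the same $G$-orbit differ by an element of the maximal compact subgroup $K$; applying this to $(v,\widehat v)$ and $(w,\widehat w)$ yields $k\in K$ with $k\cdot(v,\widehat v) = (w,\widehat w)$, and in particular $kv=w$, so $O_{K,v}=O_{K,w}$.

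The main obstacle I anticipate is carefully verifying the ``doubling'' claim — that $(v,\widehat v)$ is a minimal-norm point in its $G$-orbit. Concretely, writing $\mathfrak g = \mathfrak k \oplus i\mathfrak k$ for the Cartan decomposition (with $K$ having Lie algebra $\mathfrak k$), one must show that for every $\xi\in i\mathfrak k$ the derivative $\frac{d}{dt}\big|_{t=0}\|e^{t\xi}\cdot(v,\widehat v)\|^2$ vanishes. Expanding, this derivative is $2\,\mathrm{Re}\langle \xi\cdot v, v\rangle + 2\,\mathrm{Re}\langle \xi\cdot\widehat v,\widehat v\rangle$, and one uses that the action on $V^*$ is the contragredient, so $\xi$ acts on $V^*$ by $-\xi^{T}$ in suitable terms; since $\xi\in i\mathfrak k$ is represented by a Hermitian operator with respect to $\langle\cdot,\cdot\rangle$, the two terms cancel. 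This is a short computation but requires setting up the dual action and the inner product on $V^*$ consistently. I would also need to cite the precise form of Kempf--Ness theory used (closedness of orbits through minimal-norm points, and uniqueness of the minimal-norm point up to $K$), referring to \cite{KN:78} and standard references; since the theorem statement itself signals brevity (``we will be brief''), I would keep the exposition at the level of invoking these facts rather than reproving them.
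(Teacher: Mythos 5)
Your proposal is correct and follows essentially the same route as the paper: both verify that the moment map vanishes at $(v,\widehat v)$ (the ``doubling trick''), deduce from the Kempf--Ness theorem that the $G$-orbits through $(v,\widehat v)$ and $(w,\widehat w)$ are closed (giving $(2)\Leftrightarrow(3)$), and then invoke the second part of Kempf--Ness to convert $G$-orbit equality of these minimal-vector pairs into $K$-orbit equality. One small slip worth flagging: you write $\|g\cdot(v,\widehat v)\|^2 = \|gv\|^2 + \|\widehat{gv}\|_*^2$, but the contragredient action sends $\widehat v$ to $g\cdot\widehat v$, not to $\widehat{gv}$ --- these coincide only for $g\in K$ --- so the Cauchy--Schwarz observation about $\sqrt2\,\|gv\|$ is a red herring; fortunately your subsequent derivative computation correctly uses the contragredient action $\xi\cdot\widehat v$, so the argument goes through.
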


\begin{proof}
Let $\Lie(G) \subseteq L(V)$ denote the Lie algebra of $G$.
For any linear action of~$G$ on a vector space~$U$, we get an induced action of~$\Lie(G)$ on~$U$.
Given a $K$-invariant Hermitian form $\langle\cdot,\cdot\rangle$ on~$U$, we define
the so-called moment map $\mu_U \colon U \rightarrow \Lie(G)^*$ by the formula $\mu_U(u)(X) = \left<u, X \cdot u\right>$ for~$u \in U$ and~$X \in \Lie(G)$ (up to a scalar which is not relevant for our purposes).
The celebrated Kempf-Ness theorem says that if $\mu_U(u) = 0$ then the $G$-orbit of $u$ is closed.
Moreover, it asserts that if $u'\in U$ is another point such that $\mu_U(u') = 0$, then $O_{G,u} = O_{G,u'}$ if and only if $O_{K,u} = O_{K,u'}$.

Applying the preceding to $(v,\widehat{v})$ and $(w,\widehat{w})$ in $U = V \oplus V^*$, a simple calculation shows that the moment map vanishes at either point, so the two orbits are closed.
This shows the equivalence between~(2) and~(3).
The equivalence between~(1) and~(2) follows immediately from the second part of the Kempf-Ness theorem, using that $k \widehat v = \widehat{k v}$ for any $k\in K$, since $K$ acts unitarily.
\end{proof}

\section{Concluding remarks, future directions, and open problems} \label{sec:future}
To better understand the context of our results and their potential impact on future progress, we briefly discuss some results in literature and then suggest further research directions.
In very high level, we feel that the following aspects are highlighted by this work:
the relative power and interplay between algebraic and analytical algorithms, the importance of understanding commutative actions as a stepping stone towards understanding general actions, the role of rational (as opposed to polynomial) invariants, and the subtlety of ``no go'' results, which evidently can be surpassed.

There has been an explosion of interest over the last decade in understanding invariant theory from a complexity theoretic perspective 
(we survey some of this literature in the introduction).
This rapidly developing field can be seen as an endeavor to classifying computational problems in invariant theory according to their difficulty,
finding efficient algorithms whenever possible, as well as connecting to applications in mathematics, physics, optimization, and statistics.

Invariant theory in the setting of a rational representation of a connected reductive group is the most relevant for complexity theory.
The commutative case of tori is an important special case.
Despite the well-understood structural simplicity of the corresponding invariant theory, even basic algorithmic problems are non-trivial.
Null cone membership, arguably the most basic problem, has long been known to have an efficient algorithm, as it reduces to linear programming, which non-trivially admits polynomial-time algorithms.
The problems of orbit equality, orbit closure intersection, and orbit closure containment have polynomial time algorithms, as shown in this paper.
We stress that while efficient algorithms for linear programming are ``continuous'' or ``analytic'' in nature, our algorithms use a combination of \emph{both} analytic \emph{and} algebraic techniques.
The more general problem of succinct circuits for generating polynomial invariants, which is one of the basic challenges proposed in~\cite{GCTV}, has recently shown to be impossible under natural complexity assumptions~\cite{GIMOWW}.
Yet, in this paper, we bypass this negative result, and see that \emph{rational} invariants for torus actions can be captured in a computationally efficient way without the need for succinct circuits.
It is an interesting open problem to determine if there are succinct circuits for separating invariants or null cone definers, see \cite[Problems~1.14, 1.15]{GIMOWW}.

The invariant theory of non-commutative groups has a different flavor from, and is far more complex than, the commutative case, see, for example,~\cite{Humphreys:75}.
Many interesting problems in computational invariant theory remain open in the non-commutative case.
We list a few.
First and foremost, the results in this paper motivate the investigation of the computational efficiency of systems of generating \emph{rational} invariants.
Further, it is natural to wonder if rational invariants can help capture orbit closure intersection and orbit equality for non-commutative group actions.
Another open problem is to give \emph{any} polynomial time algorithm for orbit closure intersection (and the subproblem of null cone membership).
An intermediate challenge is to ascertain whether null cone membership is in NP $\cap$ co-NP.
Note that in~\cite{BILPS:20} it is shown that the general orbit closure containment problem is NP-hard.

\subsection*{Acknowledgements}
Peter B\"urgisser and M. Levent Do\u{g}an were supported by the ERC under the European Union's Horizon 2020 research and innovation programme (grant agreement no. 787840);
Visu Makam was supported by the University of Melbourne and by NSF grant CCF-1900460.
Michael Walter acknowledges NWO Veni grant no.~680-47-459 and NWO grant OCENW.KLEIN.267.
Avi Wigderson was supported by NSF grant CCF-1900460.

\bibliographystyle{alpha}
\bibliography{refs}

\end{document}